\newtheorem{thm}{Theorem}
\newtheorem{cor}{Corollary}
\newtheorem{lem}{Lemma}
\newtheorem{prop}{Proposition}
\newtheorem{rmk}{Remark}
\newtheorem{ex}{Example}
\newtheorem{cl}{Claim}
\newtheorem{ass}{Assumption}
\newif\ifsubmission
\newcommand{\VM}[1]{
  \ifsubmission{#1}
  \else {\color{ForestGreen}#1}
  \fi
}
\newcommand{\EH}[1]{
  \ifsubmission{#1}
  \else{\color{red}#1}
  \fi
}
\newcommand{\SA}[1]{
  \ifsubmission{#1}
  \else {\color{purple}#1}
  \fi
}
\renewcommand{\arraystretch}{0.7}
\crefname{ineq}{Inequality}{Inequalities}
\crefname{ass}{Assumption}{Assumptions}
\Crefname{cor}{Corollary}{Corollaries}
\Crefname{conj}{Conjecture}{Conjectures}
\Crefname{thm}{Theorem}{Theorems}
\Crefname{rmk}{Remark}{Remarks}
\Crefname{prop}{Proposition}{Propositions}
\Crefname{lem}{Lemma}{Lemmata}
\Crefname{ex}{Example}{Examples}
\Crefname{cl}{Claim}{Claims}
\newcommand{\assum}[1]{\begin{ass}#1\end{ass}}
\newcommand{\pisd}{\mathrel P_{i}^{sd}}
\newcommand{\risd}{\mathrel {R}_{i}^{sd}}
\newcommand{\nisd}{\mathrel N_{i}^{sd}}
\newcommand{\rul}{mechanism\xspace}
\newcommand{\ruls}{mechanisms\xspace}
\newcommand{\Ruls}{Mechanisms\xspace}
\DeclareMathOperator*{\sd}{\pi^\text{$P$}}
\DeclareMathOperator*{\rp}{\pi^\text{$P$}}
\DeclareMathOperator*{\md}{\textsl{$MD$}}
\newcommand{\len}{\text{length}}
\DeclareMathOperator*\integ{int}
\DeclareMathOperator*\supp{supp}
\newcommand{\defn}[1]{{\boldmath \bf #1}}
\newlength{\dhatheight}
\newcommand{\doublehat}[1]{%
    \settoheight{\dhatheight}{\ensuremath{\hat{#1}}}%
    \addtolength{\dhatheight}{-0.2ex}%
    \hat{\vphantom{\rule{1pt}{\dhatheight}}%
    \smash{\hat{#1}}}}
\newlength{\dbarheight}
\newcommand{\doublebar}[1]{%
    \settoheight{\dbarheight}{\ensuremath{\overline{#1}}}%
    \addtolength{\dbarheight}{-0.2ex}%
    \overline{\vphantom{\rule{1pt}{\dbarheight}}%
    \smash{\overline{#1}}}}
\newcommand{\citeposs}[1]{\citeauthor{#1}'s \citeyearpar{#1}}
\newcommand\doubleplus{\ensuremath{\mathbin{+\mkern-2mu+}}}
\newcommand\sdsp{strategy-proof\xspace}
\newcommand\sdspness{strategy-proofness\xspace}
\newcommand\Sdspness{Strategy-proofness\xspace}
\newcommand\robustefficiency{unambiguous efficiency\xspace}
\newcommand\strongefficiency{unambiguous efficiency\xspace}
\newcommand\strongefficient{unambiguously efficient\xspace}
\newcommand\Strongefficiency{Unambiguous efficiency\xspace}
\newcommand\StrongEfficiency{Unambiguous Efficiency\xspace}
\newcommand\sdefficiency{ambiguous efficiency\xspace}
\newcommand\sdefficient{ambiguously efficient\xspace}
\newcommand\Sdefficiency{Ambiguous efficiency\xspace}
\newcommand\SdEfficiency{Ambiguous Efficiency\xspace}
\long\def\@makecaption#1#2{%
  \vskip\abovecaptionskip
  \sbox\@tempboxa{\textbf{#1}: #2}%
  \ifdim \wd\@tempboxa >\hsize
    \textbf{#1:} #2\par
  \else
    \global \@minipagefalse
    \hb@xt@\hsize{\box\@tempboxa\hfil}%
  \fi
  \vskip\belowcaptionskip}
\title{
Unambiguous Efficiency of Random Allocations
}
\author{
\begin{tabular}{ccc}
\begin{tabular}{c}
Eun Jeong Heo\\
{\footnotesize University of Seoul }\\[7pt]
{\footnotesize eunjheo@uos.ac.kr }\\
\end{tabular}&
\begin{tabular}{c}
Vikram Manjunath\\
{\footnotesize University of Ottawa}\\[7pt]
{\footnotesize vikram@dosamobile.com}\\
\end{tabular}&
\begin{tabular}{c}
Samson Alva\\
{\footnotesize University of Texas}\\[-1pt] {\footnotesize at  San Antonio}\\[-1pt]
{\footnotesize samson.alva@utsa.edu}
\end{tabular}
\end{tabular}
}
\begin{document}
\sloppy
\maketitle
\setstretch{1}
\begin{abstract}
When allocating indivisible objects via lottery, planners often use
ordinal mechanisms, which elicit agents' rankings of objects rather
than their full preferences over lotteries. In such an ordinal informational
environment, planners cannot differentiate between utility profiles that
induce the same ranking of objects. \SA{We propose the criterion of \emph{unambiguous efficiency}: regardless
of how each agent extends their preferences over objects to
lotteries, the allocation is Pareto efficient with respect to the extended
preferences.} We compare
this with the predominant efficiency criterion used for ordinal mechanisms.
As an application to mechanism design, we characterize all efficient and
strategy-proof mechanisms satisfying certain regularity conditions.


%

\bigskip\noindent
{\it JEL} classification: C70, D47, D61. 

\medskip\noindent
Keywords: probabilistic assignment; efficiency; strategy-proofness;
ordinality
\end{abstract}

\setstretch{1.3}

\section{Introduction}
\label{sec:introduction}

It would not be  controversial for us to say that
efficiency is the  central normative consideration
in most economic analyses.
We have long converged on a
formulation of this concept that says: if moving from
$x$ to $y$
makes someone better off without making anyone worse off, $x$ is not efficient. This concept has been studied across different economic
contexts. We investigate its application to object allocation
through lotteries. We begin our analysis by tracing the intellectual lineage
of efficiency, from its origins to the recent literature on random
allocation \ruls.

We attribute this efficiency notion to Pareto, but  Edgeworth refers
to it even earlier  in
his \emph{Mathematical {Psychics}} (\citeauthor{Edgeworth1881},
\citeyear{Edgeworth1881}, p. 27,  emphasis in original):
  ``the   utility of any
  one contractor must be a maximum \emph{relative to} the utilities of
  other contractors being constant, or not decreasing.''
Like Edgeworth, Pareto describes  his now-eponymous efficiency
concept while  formalizing the sense in which an
equilibrium allocation is optimal. In \cite{Pareto:1902}, having drawn
 indifference
curves of two individuals through a point, he remarks on the point of
tangency, $m$, as
follows (the English translation is from
  \cite{Pareto-en:1902}):\footnote{Pareto used a common  origin
  for both individuals
  rather than   what we have come to call the Edgeworth box.
  The  diagram in  \cite{Edgeworth1881} is not what we now know as the
  Edgeworth  box. Pareto
  subsequently discovered the box in
  \cite{Pareto:1906}.   So, perhaps we should call them  the
  Pareto-box and   Edgeworth-efficiency. We thank William Thomson for
  bringing up the murky issues of attribution for these ideas. For
  more, see \cite{TarascioWEJ1972}.}
\begin{quote}
  Il punto $m$ gode dunque della propriet\`a che da esso non ci si pu\`o
  allontanare, col baratto, o con altro simile ordinamento tale che
  ci\`o che \`e tolto ad un individuo viene dato all'altro, per modo di
  fare crescere entrambe le ofelimit\`a totali dei due individui.

  [The point $m$ therefore enjoys the property that it is not possible
  in departing from it, by barter, or by some similar arrangement
  such that  what is  taken away from one individual is given to
  another, to increase the total ophelimities of both individuals.]
\end{quote}
This description  applies to allocations of indivisible goods, but
\cite{Arrow1951} expands its scope.  In discussing
  the  compensation principle,  \citeauthor{Arrow1951} states an
  efficiency   criterion that amounts to Pareto-efficiency and says
  ``this formulation is certainly not debatable except perhaps on a
  philosophy of systematically denying people whatever they want.''
  By the 1970s, Pareto-efficiency was accepted as a fundamental
  normative criterion without need for an explanation. For instance, in
  imposing the requirement of Pareto-efficiency on voting schemes,
  \cite{ZeckhauserAPSR1973} merely says that it is ``unambiguous,''
  yet makes an   argument for the other requirement (non-dictatorship).

 In first studying  the problem we
 study here, the allocation of indivisible
goods via lottery,
\citeposs{HyllandZeckhauserJPE1979}  primary objective is
 Pareto-efficiency. In their model, each agent has
a von Neumann-Morgenstern (vNM)
utility function with which they compare lotteries over the
objects. Edgeworth's and Pareto's descriptions of Pareto-efficiency
 translate
naturally to such a setting and they mean the same thing. The
equivalence relies on  agents'  preferences over lotteries being
complete.

A subsequent line of research on allocation via  lotteries pertains to
\emph{ordinal} allocation \ruls, building on
\cite{BogomolnaiaMoulinJET2001}, and the current paper contributes to
this literature. So, we consider allocation \ruls that
respond only to the  ordinal content
of agents' preferences over objects without regard for intensities. The restriction
to ordinal \ruls may be driven by
cognitive or  informational
constraints. Specifically,  vNM preferences may not be adequate to
represent agents' preferences (see, for instance,
  \cite{KagelRoth1995}). Even if they are adequate, incentive
  compatibility in such
settings (alongside some regularity conditions) also implies
ordinality.\footnote{Under a strong property
called ``uniform cone continuity,''
\cite{EhlersMajumdarMishraSenJME2020}
show that strategy-proofness
alone implies ordinality in a very general model. Even with a
substantially weaker notion of continuity, \cite{HeoManjunath2025} show that  incentive
compatibility alongside efficiency, neutrality and non-bossiness
implies ordinality in the case of 3 agents.}

The analysis of
efficiency in this context
is less straightforward than informationally unconstrained settings.
Rather than comparing lotteries on
the basis of vNM utilities, a natural approach is
to compare based on (first order)
stochastic dominance.
Both Edgeworth's and Pareto's descriptions of efficiency are that an
agent cannot be made better off subject to a
constraint. Edgeworth's constraint is not making others
worse off, while Pareto's is making others better
off as well. If the welfare comparisons are made using complete relations,
these two descriptions render equivalent efficiency concepts.
However, the stochastic
dominance  relation is incomplete.
\cite{BogomolnaiaMoulinJET2001}   define
``ordinal efficiency''
so
that an
allocation $\pi$ is efficient if there is no allocation $\pi'$ that stochastically
dominates $\pi$ for each agent whose assignment is changed.
This corresponds to Pareto's formulation using the
stochastic dominance relation for welfare comparisons.
It has been subsequently used by a vast
literature, where it is often called ``sd-efficiency''.

Perhaps because
Pareto's efficiency description is now standard, the efficiency concept of \cite{BogomolnaiaMoulinJET2001} has
been accepted and built upon with little discussion of its
normative interpretation in terms of complete preferences over lotteries.
It is not the
only way to define  efficiency  of an ordinal
\rul.
We argue for an alternative that has a more clear-cut normative
interpretation.  If each
agent has a well-defined preference over lotteries (whether vNM
or not), even if we are interested in ordinal \ruls,
we should still
make efficiency judgements with regard to these
preferences over lotteries. {In
  other words, a social planner may want an  allocation whose
  efficiency is   robust to the information they cannot access
  (specifically, intensity of preferences between objects).}
  However, the planner cannot distinguish between
  different utility profiles that induce the same ordinal
  preferences. Thus, they can only be sure that an allocation is
  efficient if it is efficient for all the utility profiles that
  induce the ordinal preferences over objects they see. In this sense, we refer to
  it as \emph{\strongefficiency}. The ambiguity, from the planner’s
  perspective, is about agents’ complete preferences over
  lotteries. Rather than a single preference profile, they have to
  make judgements about efficiency for a \emph{set} of preference
  profiles.
   This is the  notion in \cite{GibbardEconometrica1977}
with regard to probabilistic, yet ordinal, voting
schemes. {\cite{GibbardEconometrica1977} refers to ``Pareto
  optimality ex ante'' with regard to vNM utilities. To extend this
  definition to ordinal \ruls, he says that a
  lottery is ``Pareto optimific ex ante'' at  a ordinal preference
  profile $P$ if it is Pareto optimal ex ante for every profile of
  vNM utilities consistent with $P$.\footnote{In this paper, we follow
    the temporal terminology of   \cite{GibbardEconometrica1977}  with regard to the realization of the
  lottery over sure allocations. Thus, ``ex ante''  means ``before the
  lottery is
  realized'' while ``ex post'' means ``after the lottery is realized.''
  Though obvious, we hope to avoid any confusion this may cause for
  readers steeped in the
  literature  where the  temporal frame of reference is the
  realization of individuals' preferences (or types more generally).}
We will eschew the term
  ``optimificity'' and stick with
  efficiency.}

Unlike this notion,
\citeposs{BogomolnaiaMoulinJET2001} efficiency only amounts to
efficiency for \emph{some} consistent
vNM utility
profile \citep{McLennan:JET2002} (see also  \cite{CarrollJET2010} and
  \cite{AzizBrandlBrandtJME2015}, or \cite{ChoDoganEL2016} for lexicographic preferences).    Importantly, an allocation may be
deemed efficient in this sense even if it is not
efficient for the \emph{actual}   utility
profile. For this reason, we will refer to it as
\emph{\sdefficiency}. Again, the ambiguity is from the planner's
perspective: for the set of utility profiles they have in mind, the
allocation is guaranteed to be efficient for some, but not necessarily
all of them.\footnote{\EH{In a related context, \cite{DoganDoganYilmazJET2018} propose a domination criterion that compares sets of consistent vNM utility profiles that ensure the efficiency of the corresponding allocations. This criterion is characterized using ex-ante social welfare, defined as the sum of agents’ expected utilities.}}

We give a graphical illustration of the difference in
the Marshak triangle for three objects  $a,
b,$ and $c$  (\cref{fig:eff-v-sd}).
 Suppose that agent $i$ has a
vNM utility function $u_i$  such
that $u_i(a) > u_i(b) > u_i(c)$. Given a lottery $\pi_i$, the
lotteries that
stochastically dominate $\pi_i$ are in the cone marked (A). The
lotteries that are dominated by $\pi_i$ are in the cone marked
(B). The lotteries in the cones marked (C) and (D) neither dominate
nor are dominated by $\pi_i$. Since $u_i$ is vNM, $i$'s
indifference curve through $\pi_i$ is a line that passes through the
cones (C) and (D). We depict the  indifference curve through $\pi_i$
with a dashed
line and an arrow showing the direction of increasing utility. For
each allocation  $\pi'$, \sdefficiency merely requires that not all
agents can
receive a lottery in the region  corresponding to cone (A). So, at
least one agent,   perhaps $i$ receives a lottery outside (A), as
depicted. The lottery $\pi'_i$ is not comparable by stochastic
dominance to $\pi_i$, even though $i$ prefers it. On the other hand,
the only way an allocation can be efficient for every ordinally
equivalent  profile of
utility functions is if any change, say to $\pi''$,  moves
some agent into the region corresponding to (B) as shown in the
figure. This yields an alternative definition based on
stochastic dominance that is decidedly stronger than
\sdefficiency and corresponds to \strongefficiency.
An advantage to this definition is that
\emph{even if} agents do not have vNM preferences, as long as actual
preferences are completions of the stochastic dominance relation, it
does not allow any further improvement with regards to their actual preferences.

\begin{figure}
  \centering
  \includegraphics{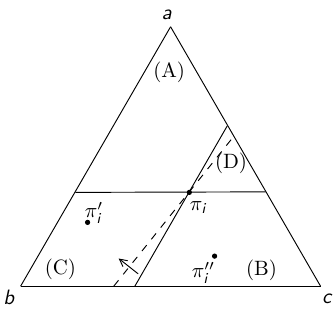}
  \caption{\footnotesize A graphical illustration of how the incompleteness of the
    stochastic dominance relation has consequences for the definition
    of efficiency}
  \label{fig:eff-v-sd}
\end{figure}

Our contributions are as follows. First, we formally
define \strongefficiency as discussed above. We provide a necessary
condition on allocations for them to be \strongefficient. Moreover, we
show that  \strongefficient allocations are \emph{combinatorial}
\citep{AzizMackenzieXiaYeAAMAS2015}, meaning that whether or not an
allocation is efficient is determined by which objects each agent
receives with positive probability regardless of  magnitude.
Second, we argue that  efficiency, even in probabilistic
settings, is hardly compatible with a basic notion of fairness, even
though the purpose of randomization is to typically to achieve
equity. Symmetry says that
  agents who have the same preferences ought to receive  the same
  lottery. If the planner only sees ordinal preferences, symmetry is a
  rather strong property: even agents with vastly different
  preferences over lotteries are indistinguishable. We demonstrate that symmetric
  allocations
  are susceptible to  severe efficiency losses. Specifically, we show
  that as
  the number of agents grows, the worst case welfare cost
  of imposing symmetry alongside  ordinality is as high as it could
  possibly   be (\cref{ex cost of symmetry}). Some well-known ordinal
  rules, such as the probabilistic serial rule and random serial
  dictatorship, suffer such severe losses. Our observation about
  welfare loss is  similar to a  result in \cite{Pycia2014}. He shows
  that  welfare loss from  any ordinal, regular, symmetric
  \rul that is   strategy-proof and ambiguously efficient can be
  arbitrarily high.

Third, we show that randomization is still feasible under efficiency, albeit to a limited extent, and contrast the strength of
\strongefficiency to that
of \sdefficiency on the class of  random serial dictatorship
allocations, which are frequently used in
  practice. They involve drawing an ordering of the agents at random
  and assigning each agent, according the realized ordering, their
  most preferred object among those
  remaining. However, these allocations
  are not necessarily ex ante efficient
  \citep{HyllandZeckhauserJPE1979,BogomolnaiaMoulinJET2001}. It turns
  out that restricting the set of  orders over which one randomizes
  can  correct this flaw at the cost of
  fairness. \cite{HarlessPhanGEB2022} characterize such sets of orders
  that ensure \sdefficiency of random serial dictatorship
  allocations. They call a
  set of orders that vary only in how three consecutive agents are
  ranked ``adjacent-three'' sets and show that these are maximal sets
  over which randomized serial dictatorship is \sdefficient. We show
  that ``adjacent-two'' sets are maximal if one seeks unambiguous
  efficiency, {rather than \sdefficiency (\cref{prop: adjacent-two}).}

Finally, we consider the question of incentives. What \emph{\ruls} can
a social planner use to elicit preferences
truthfully? We characterize the class
of all ordinal  \ruls that are \strongefficient,
strategy-proof, and satisfy some regularity conditions (non-bossiness,
neutrality, and bounded invariance) (\cref{thm: characterization}). To define this
class of \ruls, we provide a novel recursive algorithm. Given a supply
level between zero and one for each object, the algorithm assigns
lotteries to one or two agents at a time. When allocating to a single
agent (``monarchy''), it assigns them their most preferred lottery subject to the
supply vector. When allocating to two agents (``diarchy''), call them $i$ and $j$,
it first  computes $i$'s  most preferred lottery, $\pi_i$, subject to
supply. Then it computes $i$'s most preferred lottery,
$\pi_i'$, subject to supply \emph{after} removing $j$'s most preferred lottery. It assigns to $i$, for a fixed $\alpha$, the lottery
$\alpha\pi_i + (1-\alpha)\pi_i'$. It allocates the analogous
lottery to $j$. It is as though $i$ and $j$ flip
an $\alpha$-weighted coin, the winner picks first, and the other picks
second. Having allocated lotteries to one or two agents as described
above, the
algorithm removes these agents and subtracts from the supply
vector what it has
allocated to them. It  solves the remaining problem recursively
with the new (residual) supply vector and remaining agents. The
choices of successive agents to allocate to can depend
on allocations that have already been finalized. Specifically, who
comes next can depend on the
identities of the preceding agents and the integrality of their
allocations.  We call a \rul induced
by this algorithm a ``hierarchy of monarchies and diarchies''.

The class of these \ruls includes, but is not
limited to, serial dictatorship \ruls characterized by
\cite{SvenssonSCW11999}. In this sense, our  characterization is in
the spirit of
\citeposs{SvenssonSCW11999} characterization, but in the context of probabilistic
allocation. {This characterization is not in conflict with
  \citeposs{Zhou:JET1990} incompatibility  between strategy-proofness,
  ex ante efficiency, \emph{and} an equity property in the form of
  symmetry.} {To the best of our knowledge,
  an analogous characterization with \sdefficiency or ex post
  efficiency
  in place of \robustefficiency does not exist in the literature.
To this end,
\cite{BasteckEhlersECMA} have recently provided a
counterexample to  the
long standing conjecture that strategy-proofness, ex post efficiency,
and symmetry  characterize random serial dictatorship.

We have organized the remainder of the paper as follows. In
\cref{sec:model-1}, we formally define the random
allocation problem. We discuss the informational restriction to
ordinal preferences in \cref{sec:ordinality}, welfare comparisons in \cref{sec:sd}, and
efficiency in \cref{sec:efficiency}. In \cref{sec: characterization}, we take up an application of our
 efficiency concept to mechanism design.
We offer some concluding remarks in \Cref{sec:conclusion}. All proofs
are in the appendix.

\section{The Model}
\label{sec:model-1}
We are interested in settings where
each agent
  consumes  a unique object and each object is
consumed by a single agent.
Let \defn{$A$} be a \SA{finite} set of at least three objects and
\defn{$N$} be a set of exactly as many agents, \SA{with cardinality $n$}.
Bijections from agents to objects describe (deterministic) allocations.
When there are
  fewer than three objects the issues discussed in this paper are
  trivial so we omit that case.

Let \defn{$\Delta(A)$} be the set of lotteries over $A$. As it does not cause
confusion, we identify
degenerate lotteries in $\Delta(A)$ with $A$ itself. So, for each
$a\in A$, we denote the lottery that places probability one on $a$ by
$a$. We denote a generic lottery that agent $i$ might consume by
\EH{$\pi_i\equiv (\pi_{ia})_{a\in A}\in \Delta(A)$ where  $\pi_{ia}$ specifies the probability that~$i$ receives $a$. We  denote its support by  \defn{$\supp(\pi_i)$}.}
We admit the possibility of
randomizing over deterministic allocations.
By the Birkhoff-von Neumann Theorem, we can think of a (randomized)
allocation  as a bistochastic matrix
\citep{Birkhoff:1946,vonNeumann:1953}.  That
is,  an allocation $\pi$ specifies a lottery over $A$
for each agent $i\in N$ such that
for each $a\in A$, $\sum_{i\in N}\pi_{ia}=
1$. Denote by \defn{$\Pi$} the set of all such allocations.
Each agent has preferences over the set of allocations $\Pi$.
\assum{\label{ass:strict}Each agent's preferences depend only upon the lottery that
they receive and are defined by a utility function from
$\Delta(A)$ to $\mathbb R$.}

We denote by $\mathcal U$ the universe of
utility functions considered possible for an agent.
Fixing the set of agents and the
objects to assign to them, an economy is described by a profile of
agents' preferences $u\in \mathcal U^N$.

Given that an economy describes each agent's \emph{complete}
preferences
over allocations, the definition of efficiency is straightforward. An
allocation $\pi\in\Pi$ is \defn{efficient at $u$} if there is no
alternative allocation $\pi'\in \Pi$ that makes at least one agent
{better} off without making any other agent {worse} off. Formally, there
is no $\pi'\in \Pi$ such that for some $i\in N$, $u_i(\pi_i') >
u_i(\pi_i)$ and for no $i\in N$,
$u_i(\pi_i') <  u_i(\pi_i)$.

\subsection{The Ordinal Informational Environment }\label{sec:ordinality}
If the social planner, \rul designer, or whatever other analyst,
is able to successfully elicit from each agent  complete
preferences over lotteries then the definition of efficiency above is
all we need. However, we are interested in an environment where they must make decisions or evaluations based on only \emph{some} aspects of preferences.

A salient constraint is
that the only information available about each agent is
a complete ordering of objects, which we
call the agent's \defn{ordinal preferences}.
Such  a constraint might be driven by
individuals having difficulty in articulating, let alone conveying,
their full preferences over all lotteries (e.g.
\cite{KagelRoth1995}). Alternatively, it could be due to incentive
concerns
(e.g. \cite{EhlersMajumdarMishraSenJME2020,PyciaUnver2024,HeoManjunath2025}).
Our goal is to draw attention to the definition of efficiency in such
an \emph{ordinal informational environment}. We preclude indifference among objects. \assum{\label{ass:strict}For any utility function in $\mathcal U$, distinct objects yield distinct utility levels.}
That is, for every utility function $u_i \in \mathcal U$ an agent $i$ might have, there is a unique linear order $P_i$ over $A$, called the \emph{ordering induced by} $u_i$, such that for each pair of distinct objects $a,b \in A$, $u_i(a) > u_i(b)$ if and only if $a \mathrel P_i b$.

The upshot of this assumption is that one must  take  a profile of linear orders
over $A$ as the primitive of an allocation problem.
We denote by
\defn{$\mathcal P$},
the set of linear orders over  $A$. Through
the informationally constrained lens, an economy is thereby
described not by a utility
profile in $\mathcal U^N$ but by an ordinal preference profile in $\mathcal
P^N$.

\subsection{Expected Utility, Stochastic Dominance, and Welfare}
\label{sec:sd}

A natural model of the allocation problem at hand is one where
individuals have preferences over lotteries satisfying
expected utility theory \citep{HyllandZeckhauserJPE1979,Zhou:JET1990}.
Suppose an agent $i\in N$ compares lotteries in this way. So,
$u_i\in\mathcal U$ is a von Neumann-Morgenstern (vNM) utility
function where there is $x\in
\mathbb R^A$ such that for each $\pi_i\in \Delta(A), u_i(\pi_i) = x\cdot
\pi_i$.  We now make a richness assumption on $\mathcal U$.  \assum{\label{ass:rich}If
$u_i\in \mathcal U$ induces the ordering $P_i\in \mathcal P$, then
every von Neumann-Morgenstern utility function $u_i'$ that induces
$P_i$ is also in $\mathcal U$.}

If all the planner knows is that $i$ has a vNM utility function that
induces the ordering $P_i$, then they can still make some unambiguous
comparisons in terms of  $i$'s preferences using stochastic
dominance.
Given a pair of allocations, $\pi, \pi'\in \Pi$, we say \defn{$\pi_i$     stochastically dominates $\pi_i'$
  at $P_i$} if for each $a\in A$, $\sum_{b P_i a}\pi_{ib}\ge
\sum_{b\mathrel  P_i a}\pi'_{ib}$ with at least one strict
inequality. We denote this by  \defn{$\pi_i \pisd \pi_i'$}.
We use the notation  \defn{$\pi_i \risd
  \pi_i'$} if  either $\pi_i \pisd
\pi_i'$ or $\pi_i =\pi_i'$. Since $\risd$ is only a partial order, it may be that
neither $\pi_i \risd \pi'_i$ nor $\pi'_i\risd \pi_i$, in which case we
write \defn{$\pi_i'\nisd \pi_i$}. By definition, {$\nisd$ is a symmetric
  relation and if $\pi_i \nisd \pi_i'$ there are
  $a,b\in A$ such that $\sum_{o P_i a}\pi_{io}>
\sum_{o\mathrel  P_i a}\pi'_{io}$ and {{$\sum_{o P_i b}\pi'_{io}>
\sum_{o\mathrel  P_i b}\pi_{io}$.}}

Our last assumption on $\mathcal U$ is that every utility function in
it respects its own stochastic dominance relation. \assum
{\label{ass:sd}For every $u_i\in\mathcal U$, if $u_i$ induces
  $P_i\in \mathcal P$, then $u_i$ respects $\pisd$. That is, for every
pair $\pi_i, \pi_i'\in \Delta(A)$ such that $\pi_i \pisd \pi_i'$,
$u_i(\pi_i) > u_i(\pi_i')$.}

Under our assumptions,
\[
  \begin{array}{ccc}
    \pi_i \pisd \pi_i' &
   \text{ if and only if }& \left(\begin{array}{c}
    \text{ for each }u_i\in \mathcal
  U\text{ that induces }P_i,\\ u_i(\pi_i) > u_i(\pi_i').\end{array}\right)
  \end{array}
\]
Thus, a planner who sees
preferences only through the ordinal lens is still able to identify
\emph{unambiguous} welfare comparisons between lotteries for each
agent. Moreover, these are exactly the ones that can be made based on
stochastic dominance.

We also make a similar observation about uncomparable allocations.
\[
  \begin{array}{ccc}
    \pi_i \nisd \pi_i'  &
    \text{ if and only if }  & \left(\begin{array}{c}
  \text{there are }u_i,u_i'\in
  \mathcal   U\text{ both inducing }P_i \text{ such that } \\u_i(\pi_i) >
  u_i(\pi_i')\text{ but } u_i'(\pi_i) <  u_i'(\pi_i').\end{array}\right)
  \end{array}
\]
\VM{
  \begin{rmk}{Admissible Domains.}
{\em Our results build on the observation that
  $\pi_i \pisd \pi_i'$ is equivalent to dominance according to
  \emph{every} $u_i\in \mathcal U$ consistent with $P_i$. This holds
  as long as
  $\mathcal U$ is rich enough---in the sense of \cref{ass:rich}. For
  instance, consider a domain  containing all vNM utilities along with
  mixture-averse utilities of \cite{DillenbergerSegal2024} with strict
  orderings of degenerate lotteries. Since this domain satisfies our
  assumptions, despite being a strict superset of vNM utilities, all
  of our analysis applies.

  The only restrictions on admissibility of
  utility functions in the domain alongside vNM utilities are those
  stated in \cref{ass:strict} and \cref{ass:sd}. The former says that
  agents are not indifferent between distinct objects. The latter,
  allows for non-vNM utilities so long as they are completions of the
  stochastic dominance relation---in other words, ``monotonic'' with
  respect to it, as assumed by
  \cite{DillenbergerSegal2024}.\footnote{For an instance of such an
    assumption in the  cooperative bargaining context, see
    \cite{RubinsteinSafraThomsonECMA1992}.}
}
  \end{rmk}
 }

\subsection{Efficiency}\label{sec:efficiency}

Now, we may ask: \emph{If  individuals  have
  preferences over lotteries, but the planner/analyst  only knows
  their  ordinal preferences over objects, what should
  they consider efficient?  }

When they see the preference profile
$P\in\mathcal P^N$, they only know that the agents' actual preferences
are some $u\in \mathcal{U}^N$ that induces $ P$. To be  certain that the
allocation is truly efficient it has to be efficient for \emph{every}
$u$ that induces $ P$. We call such an allocation \defn{\strongefficient}.
The counterpart of this notion in the context of ordinal voting \ruls was
studied in \cite{GibbardEconometrica1977}.

On the other hand, the planner or analyst might use the existential
quantifier by  only asking that there is \emph{some} $u$
that induces  $P$ for which their chosen allocation is
efficient. While they could point to this possible utility profile as
justification for their choice, there is no guarantee that it is
efficient with regard to the agents' actual preferences. We call such an allocation \defn{\sdefficient}.

In what follows, we start by discussing
unambiguous efficiency and consider some of its implications. We then compare it to ambiguous efficiency.
Since the only useful information that the planner has is summarized
by the stochastic dominance relation for each agent,
we define them in terms of these relations.\footnote{Conceptual cousins
of unambiguous and ambiguous efficiency have been studied in the literature
on fair deterministic allocation of indivisible objects where agents consume multiple objects
\citep{BramsEdelmanFishburnTD2003,AzizEtAlAAMAS2016,ManjunathWestkamp2024}.
Since agents' preferences are over sets of
  objects, like ordinality in our setting one may wish to
  compute fair allocations using only agents' marginal preferences over
  objects. A notion of dominance analogous to the stochastic
  dominance gives a partial order, and there are again two ways to
  define efficiency with respect to this order.}

\subsubsection{\Strongefficiency}\label{sec:strongefficiency}
As discussed above, an unambiguously efficient allocation is one that
is efficient
no matter what the  agents' fine-grained preferences over lotteries
are.

\VM{
  An alternative definition of \strongefficiency, in terms of
  stochastic dominance, is the following:
  \begin{quote}
    An allocation $\pi\in \Pi$ is \textbf{{\strongefficient}} at~$P\in
    \mathcal P^N$ if there is no $\pi'\in \Pi\setminus\{\pi\}$ such
    that for each  $i\in N$, either $\pi_i'\risd \pi_i$ or
    $\pi_i'\nisd\pi_i$.
  \end{quote}
In \cref{apx:strongefficiency} we give a proof of this is equivalent to
the earlier definition.}

 It is equivalent to say
that  switching from an \strongefficient allocation to any other
 makes at least one agent \emph{comparably} worse off per their
stochastic dominance relation. So, if  $\pi$ is \strongefficient
at~$P$ then  for each $\pi'\in \Pi\setminus\{\pi\}$, there is some
$i\in N$ such that $\pi_i\pisd \pi_i'$. This formulation guarantees
that we can find  no Pareto-improvement from an \strongefficient
allocation even if we can fully access the  agents'
preferences over lotteries.

We now present several implications of \strongefficiency. First, it
restricts the number of objects that any pair of agents can jointly
receive with positive probabilities.

\begin{lem}\label{lem:limited randomization}
  If $\pi\in \Pi$ is \strongefficient at $P\in \mathcal P$, then for
  each pair $i,j\in N$, $|\supp(\pi_i)\cap
  \supp(\pi_j)| \leq 2$.
\end{lem}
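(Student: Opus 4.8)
The plan is to prove the contrapositive. Suppose that for some pair $i,j\in N$ we have $|\supp(\pi_i)\cap\supp(\pi_j)|\ge 3$, and fix three distinct objects $a,b,c$ in this intersection, so that all six entries $\pi_{ia},\pi_{ib},\pi_{ic},\pi_{ja},\pi_{jb},\pi_{jc}$ are strictly positive. I would use the stochastic-dominance reformulation of \strongefficiency stated above: $\pi$ fails to be \strongefficient as soon as there is some $\pi'\in\Pi\setminus\{\pi\}$ for which no agent $k$ is strictly worse off, i.e.\ no $k$ has $\pi_k\mathrel{P^{sd}_k}\pi_k'$. I would search for such a $\pi'$ among the perturbations that agree with $\pi$ except on these six entries, adding a vector $(\delta_a,\delta_b,\delta_c)$ with $\delta_a+\delta_b+\delta_c=0$ to row $i$ and its negative $(-\delta_a,-\delta_b,-\delta_c)$ to row $j$. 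Any such perturbation preserves every row sum and every column sum, so $\pi'$ is bistochastic; and because the six entries are strictly positive, a small enough scaling keeps them in $(0,1)$, so $\pi'\in\Pi$. Since only $i$ and $j$ have their lotteries altered, it suffices to choose a nonzero $(\delta_a,\delta_b,\delta_c)$ so that neither $i$ nor $j$ is made strictly worse.

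The next step is to translate ``not made strictly worse'' into sign conditions. Along agent $k$'s order $P_k$, the perturbation changes the upper-contour cumulative sums only at two thresholds --- $k$'s most preferred and middle object among $\{a,b,c\}$ --- and the two changes there are the partial sums of the increments taken in $k$'s order. Because the increments sum to zero, the positions of $a,b,c$ relative to the remaining objects drop out, so only the internal $P_k$-ranking of $\{a,b,c\}$ matters. Writing the perturbation in the plane of $(\delta_a,\delta_c)$ (with $\delta_b=-\delta_a-\delta_c$), agent $k$ is made strictly worse exactly when both cumulative changes are nonpositive with one strict; taking $i$'s internal order to be $a\mathrel{P_i}b\mathrel{P_i}c$ without loss of generality, this makes $i$ worse precisely on the closed quadrant $\{\delta_a\le 0,\ \delta_c\ge 0\}$, and makes $j$ worse precisely on the analogous quadrant $\{\delta_{(j\text{-best})}\ge 0,\ \delta_{(j\text{-worst})}\le 0\}$, where ``best'' and ``worst'' refer to $j$'s ranking of the three objects.

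I would conclude with a covering argument in this plane. The $i$-worse set is a pointed wedge meeting the unit circle in an arc of length $90^\circ$. Since any two of $\delta_a,\delta_b,\delta_c$ are linearly independent as functions of $(\delta_a,\delta_c)$, the $j$-worse set is the image of a quadrant under an invertible linear map, hence again a pointed wedge, so its arc has length strictly below $180^\circ$. As $90^\circ+180^\circ<360^\circ$, these two arcs cannot cover the circle, so an open set of directions makes neither agent worse; any such direction, scaled small, gives the required $\pi'$. I expect the crux to be exactly this step: when $i$ and $j$ rank $a,b,c$ identically there is no Pareto improvement at all, so no single ``trade'' can make both agents better, and the argument must instead only secure that nobody is made \emph{comparably} worse. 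In that aligned case a concrete witness is to raise the weight on $i$'s best and worst of the three and lower the middle (and the reverse for $j$): this leaves both $i$ and $j$ stochastically incomparable to their original lotteries, which already violates \strongefficiency.
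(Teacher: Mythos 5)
Your proof is correct, but it reaches the conclusion by a genuinely different route than the paper. The paper argues by explicit trades in two cases: it first shows that $i$ and $j$ must rank any two commonly supported objects identically---otherwise swapping probability on a disagreed pair is a stochastic-dominance improvement for both, contradicting unambiguous efficiency---and then, given three common-support objects with $a \mathrel{P_i} b \mathrel{P_i} c$ and $a \mathrel{P_j} b \mathrel{P_j} c$, it exhibits the trade in which $i$ cedes mass on $a$ and $c$ in exchange for mass on $b$, leaving both agents' new lotteries incomparable to their old ones, again a contradiction. You replace this case analysis with a single covering argument in the plane of zero-sum transfers between rows $i$ and $j$: the set of directions making $i$ comparably worse is a closed pointed wedge of $90^\circ$ in the $(\delta_a,\delta_c)$ coordinates, the set making $j$ comparably worse is a closed pointed wedge of angle strictly less than $180^\circ$ (an invertible linear image of a quadrant, since any two of $\delta_a,\delta_b,\delta_c$ are independent functionals on the transfer plane), and since two such arcs cannot cover the circle, an open set of directions worsens neither agent; scaling any such direction small enough (all six relevant entries being interior) yields a feasible $\pi'\neq\pi$ in which no agent is comparably worse, which is exactly the negation of the paper's stochastic-dominance formulation of unambiguous efficiency. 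Both proofs ultimately rest on the same local trades over $\{a,b,c\}$, and your closing remark recovers precisely the paper's incomparable trade in the aligned case. What the paper's version buys is brevity and explicit witnesses; what yours buys is uniformity (no split on whether the two rankings of $\{a,b,c\}$ agree) and the slightly stronger qualitative conclusion that the non-worsening perturbations form an open, full-dimensional set of directions.
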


This observation seems to
significantly restrict the  size of an agent's support, but we
note that it is not always the case. The following example
demonstrates that the support size can even be maximal for some agent.

\begin{ex}[Support of an \strongefficient allocation]\em{Let
    $N=\{1,\cdots,7,8\}$, $A= \{a_1,\cdots,a_8\}$,  a
  preference profile $P$, and an allocation $\pi$ be given as follows:

\renewcommand{\arraystretch}{0.8}
\[\begin{array}{llllllll}
  P_1 & P_2 & P_3 & P_4 & P_5 & P_6 & P_7 & P_8\\\hline\\[-10pt]
  a_1 & a_1 & a_2 & a_3 & a_4 & a_5 & a_6 & a_7\\
  a_2 & a_8 & a_8 & a_8 & a_8 & a_8 & a_8 & a_8\\
  a_3 & \cdot& \cdot& \cdot& \cdot& \cdot& \cdot& \cdot\\
  a_4 & \cdot& \cdot& \cdot& \cdot& \cdot& \cdot& \cdot\\
  a_5 & \cdot& \cdot& \cdot& \cdot& \cdot& \cdot& \cdot\\
  a_6 & \cdot& \cdot& \cdot& \cdot& \cdot& \cdot& \cdot\\
  a_7 & \cdot& \cdot& \cdot& \cdot& \cdot& \cdot& \cdot\\
  a_8 & \cdot& \cdot& \cdot& \cdot& \cdot& \cdot& \cdot\\\hline
  \end{array}\renewcommand{\arraystretch}{0.9}
\qquad \pi=\left[\begin{array}{c|cccccccc} & a_1 & a_2 & a_3 & a_4 & a_5 & a_6 & a_7 & a_8\\\hline\\[-10pt]
1 & \frac{1}{8} & \frac{1}{8} & \frac{1}{8} &  \frac{1}{8} & \frac{1}{8} & \frac{1}{8} & \frac{1}{8} & \frac{1}{8}\\
2 & \frac{7}{8} & 0 & 0 & 0 & 0 & 0 & 0 & \frac{1}{8}  \\
3 & 0 & \frac{7}{8} & 0 & 0 & 0 & 0 & 0 & \frac{1}{8}  \\
4 & 0 & 0 &\frac{7}{8} & 0 & 0 & 0 & 0 & \frac{1}{8}  \\
5 & 0 & 0 & 0 & \frac{7}{8} & 0 & 0 & 0 & \frac{1}{8}  \\
6 & 0 & 0 & 0 & 0 &\frac{7}{8} & 0 & 0 & \frac{1}{8}  \\
7 & 0 & 0 & 0 & 0 & 0 &\frac{7}{8} & 0 & \frac{1}{8}  \\
8 & 0 & 0 & 0 & 0 & 0 & 0 &\frac{7}{8} & \frac{1}{8}  \\ \end{array}\right]\]
where the preferences over all the unspecified objects in the profile
(denoted by dots) can be anything. Note that $\pi$ is \strongefficient at the $P$ and so are any other allocations such that the support of each agent's lottery is exactly the set of objects
specified in the preference profile. The limited intersection of support stated in \cref{lem:limited
  randomization} holds at~$\pi$, yet
the support size of agent~1 is maximal. \hfill $\blacksquare$
}\end{ex}

\renewcommand{\arraystretch}{0.7}

We now turn to the structure of the set of \strongefficient
allocations. An efficiency notion $X$ is \defn{combinatorial} if it is invariant
to shifting mass over the same support. That is, $X$ is combinatorial
if whenever  $\pi\in\Pi$ satisfies $X$, it is also satisfied by every
$\pi'\in \Pi$ such that
 $\{(i,a):\pi_{ia}' > 0\} = \{(i,a):\pi_{ia} > 0\}$
\citep{AzizMackenzieXiaYeAAMAS2015}.  We can strengthen this to
say that $X$ is \defn{strongly combinatorial}  if we replace the
condition relating $\pi$ and $\pi'$ with  $\{(i,a):\pi_{ia}' > 0\} \subseteq \{(i,a):\pi_{ia} > 0\}$.
It turns out that  \strongefficiency
is strongly combinatorial (as is \sdefficiency \citep{AzizMackenzieXiaYeAAMAS2015,EcheniqueRootSandomirskiy2022}).

\begin{prop}\label{prop:smallersupport}
  \Strongefficiency is strongly combinatorial.
\end{prop}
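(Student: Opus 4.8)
The plan is to work with the stochastic-dominance reformulation of \Strongefficiency and to prove the contrapositive. First I would record the (equivalent) reformulation: $\pi$ is \strongefficient at $P$ if and only if for every $\pi'\in\Pi\setminus\{\pi\}$ there is some $i\in N$ with $\pi_i\pisd\pi'_i$. This is immediate from the definition, since for a fixed agent the clause ``$\pi'_i\risd\pi_i$ or $\pi'_i\nisd\pi_i$'' fails precisely in the remaining case $\pi_i\pisd\pi'_i$. So I would fix $\pi$ that is \strongefficient at $P$, take any $\pi'\in\Pi$ with $\{(i,a):\pi'_{ia}>0\}\subseteq\{(i,a):\pi_{ia}>0\}$, and assume toward a contradiction that $\pi'$ is \emph{not} \strongefficient. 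By the reformulation there is then some $\pi''\in\Pi\setminus\{\pi'\}$ such that for every $i\in N$ it is \emph{not} the case that $\pi'_i\pisd\pi''_i$.

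Next I would transport the direction $d\equiv\pi''-\pi'$ to a perturbation of $\pi$, setting $\hat\pi\equiv\pi+\varepsilon\,d$ for small $\varepsilon>0$. Because $\pi''$ and $\pi'$ are both bistochastic, $d$ has vanishing row and column sums, so $\hat\pi$ automatically has all row and column sums equal to one. The only point to verify for feasibility is nonnegativity: where $\pi_{ia}>0$ the entry $\hat\pi_{ia}=\pi_{ia}+\varepsilon(\pi''_{ia}-\pi'_{ia})$ stays nonnegative for $\varepsilon$ small, while where $\pi_{ia}=0$ the support containment forces $\pi'_{ia}=0$, so $\hat\pi_{ia}=\varepsilon\,\pi''_{ia}\ge 0$. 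Hence $\hat\pi\in\Pi$ for all sufficiently small $\varepsilon>0$, and $\hat\pi\neq\pi$ since $d\neq 0$. I expect this feasibility step to be the main obstacle, and it is exactly where the \emph{containment} of supports (rather than mere equality) is doing the work: the perturbation is allowed to create new mass only at coordinates where $\pi''$ is positive, and the zeros of $\pi$ are protected because they are also zeros of $\pi'$.

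Finally I would transfer the failure of domination from $(\pi',\pi'')$ to $(\pi,\hat\pi)$ using linearity of the cumulative sums that define $\pisd$. For each agent $i$ and object $a$ one has $\sum_{b\mathrel P_i a}\hat\pi_{ib}=\sum_{b\mathrel P_i a}\pi_{ib}+\varepsilon\bigl(\sum_{b\mathrel P_i a}\pi''_{ib}-\sum_{b\mathrel P_i a}\pi'_{ib}\bigr)$, so, since $\varepsilon>0$, the system $\sum_{b\mathrel P_i a}\pi_{ib}\ge\sum_{b\mathrel P_i a}\hat\pi_{ib}$ for all $a$ with one strict inequality holds if and only if $\sum_{b\mathrel P_i a}\pi''_{ib}\le\sum_{b\mathrel P_i a}\pi'_{ib}$ for all $a$ with one strict inequality; that is, $\pi_i\pisd\hat\pi_i$ if and only if $\pi'_i\pisd\pi''_i$. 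By the choice of $\pi''$ no agent satisfies $\pi'_i\pisd\pi''_i$, hence no agent satisfies $\pi_i\pisd\hat\pi_i$. Since $\hat\pi\neq\pi$, the reformulation then says $\pi$ is not \strongefficient, a contradiction. Therefore $\pi'$ is \strongefficient at $P$, which is precisely the assertion that \Strongefficiency is strongly combinatorial.
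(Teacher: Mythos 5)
Your proposal is correct and follows essentially the same route as the paper's proof: you perturb $\pi$ by $\varepsilon(\pi''-\pi')$ for small $\varepsilon>0$, use the support containment to guarantee nonnegativity (the paper packages this into its claims about ``feasible probability shifts''), and use linearity of the cumulative sums to transfer the failure of stochastic dominance from $(\pi',\pi'')$ to $(\pi,\hat\pi)$, which is exactly the paper's ``not sd-worsening'' step. The only difference is presentational: the paper isolates the feasibility argument into three reusable claims, while you verify it inline.
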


As mentioned above, the Birkhoff-von Neumann Theorem tells us that
every allocation in $\Pi$
is a convex combination of permutation matrices, which correspond to
deterministic allocations.  In other
words, every allocation is a point in the \EH{Birkhoff polytope}. By \cref{prop:smallersupport}
 if a face of the \EH{Birkhoff polytope} contains an \strongefficient
allocation \EH{in its relative interior}, then every allocation on that face is
\strongefficient. So, rearranging or shifting mass between permutation
matrices in the support of an \strongefficient allocation does not
compromise efficiency.

\subsubsection{\Sdefficiency}
We now consider the alternative approach that says  it suffices for the
planner to  just find \emph{some} utility profile to justify their
choice of allocation as efficient.  Seeing the ordinal preference
profile $P\in\mathcal P^N$,  this planner  would say the
allocation $\pi$ is efficient  if there is no other allocation $\pi’$
that gives every agent either the same lottery as $\pi$ or a lottery
that beats it in
stochastic dominance terms. Formally,
 $\pi\in\Pi$ is \sdefficient at $P\in \mathcal{P}^N$, if there is no
 $\pi'\in \Pi\setminus \{\pi\}$ such that for each $i\in N$, $\pi_i' \risd \pi_i$.
An equivalent definition is to say that any alternative to an
\sdefficient allocation gives at least one agent an allocation that is
worse or uncomparable in  stochastic dominance terms.
This has been the dominant formulation of efficiency in the literature
on random assignment (often under the moniker ``sd-efficiency'') since
it was first introduced under the name of
``ordinal efficiency'' by \cite{BogomolnaiaMoulinJET2001}. It is a
rather weak property of the allocation in that  there may be
utility profiles where an \sdefficient allocation  is actually far from
efficient.  We
demonstrate this in the following example.
\begin{ex}[\Sdefficiency] \em{Let $N=\{1,2,3\}$ and $A=\{a,b,c\}$. Let
    $\overline P\in \mathcal P^N$ be a preference profile such that
    each $i\in N$ prefers $a$ to $b$ and $b$ to $c$. Let $\pi\in \Pi$ be
    such that $\pi_1=(\frac{1}{2}, 0, \frac{1}{2})$ and
    $\pi_2=\pi_3=(\frac{1}{4}, \frac{1}{2}, \frac{1}{4})$. It is easy
    to verify that this allocation is \sdefficient at~$P$. Suppose
    that each~$i$ has a vNM utility function $u_i\in \mathcal{U}$
    that induces $P_i$:\renewcommand{\arraystretch}{0.7} \[
  \begin{array}{c|ccc}
  & u_1  & u_2 & u_3\\\hline\\[-10pt]
    a & 1 & 1 & 1\\
    b & 1-\varepsilon & \varepsilon & \varepsilon \\
    c & 0 & 0 & 0 \end{array}
\]
where $0<\varepsilon<\frac{1}{2}$. Consider another
allocation $\pi'\in \Pi$ be such that $\pi_1'=(0,1,0)$ and
$\pi_2'=\pi_3'=(\frac{1}{2}, 0,\frac{1}{2})$. If we move from  $\pi$ to
$\pi'$, for each $i$ the new lottery~$\pi_i'$
 is uncomparable to $\pi_i$, so $\pi'$ does not
 dominate $\pi$ as far as \sdefficiency is concerned. Yet, each agent
 is enjoys a
 higher expected utility: for
 agent 1, it increases from $\frac{1}{2}$ to $1-\varepsilon$
and for the others it  increases from
$\frac{1}{4}-\frac{\varepsilon}{2}$ to $\frac{1}{2}$. This
Pareto-improvement  at~$u$
demonstrates how \sdefficiency may be too restrictive in its
definition of domination and therefore too permissive in which
allocations are deemed efficient.  \hfill
$\blacksquare$}\end{ex}

A common  impetus for randomization is the desire for treating agents
fairly. This makes feasible what is, perhaps, the
  oldest form of fairness, going back over two thousand years to
  Aristotle's argument that
  one should  treat equals equally \citep{Aristotle}. For the random
  allocation problem that we study, it first appears in \cite{Zhou:JET1990}.
An allocation   $\pi\in \Pi$ is \textbf{symmetric} (or treats equals
equally) at $P$ if agents who have the same preference receive the same lottery. That is, for each pair of agents
$i,j\in N$, if  $P_i = P_j$ then $\pi_i = \pi_j$.

The main methods of choosing allocations  that have been studied in the literature (the
\emph{Probabilistic Serial} and \emph{Random Serial
  Dictatorship} \ruls described in \cite{BogomolnaiaMoulinJET2001})
choose  symmetric allocations.
However, as we demonstrate in the next example, symmetry  can lead to
substantial losses of welfare. That is, symmetry comes at a
potentially high cost in terms of efficiency.

\begin{ex}[Cost of symmetry]\label{ex cost of symmetry}{\em
  Let $\overline P\in \mathcal P^N$ be such that for each $i\in N,
  a_1\mathrel{ \overline P}_i a_2 \dots\mathrel {\overline P_i}
  a_n$. Let $u^\varepsilon\in \mathcal U^N$ be a vNM utility profile
  such that for each $i\in N$, $u_i^\varepsilon$ induces $\overline
  P_i$. Without loss of generality, we normalize $u^\varepsilon$ in a
  way that for each $i\in N$,  $i$ attaches a utility of 1 to their
  most preferred object and zero  to their least preferred
  object. Then, for each pair of lotteries, $\pi_i$ and $\pi_i'$, the
  difference $(u_i^\varepsilon(\pi_i') - u_i^\varepsilon(\pi_i))$
  measures the change in $i$'s welfare in terms of probability shifted
  from  their least to most preferred object. Let $\varepsilon > 0$
  and $u^\varepsilon\in \mathcal U^N$ be as depicted in the following
  table:\renewcommand{\arraystretch}{0.8}
\[
  \begin{array}{c|cccc}
    &u^\varepsilon_1&u^\varepsilon_2&\dots &u^\varepsilon_n\\
    \hline
    a_1 & 1 & 1 & \dots &1\\
    a_2 & 1-\varepsilon & (n-2)\varepsilon & \dots &(n-2)\varepsilon
    \\
    a_3 & (n-3)\varepsilon & (n-3)\varepsilon & \dots &(n-3)\varepsilon
    \\
    \vdots &\vdots & \vdots & & \vdots\\
    a_{n-1}& \varepsilon & \varepsilon & \dots &\varepsilon\\
    a_n&0 & 0 &\dots &0
  \end{array}
\]
Since all of the agents have the same ordinal preferences, the
symmetric allocation, $\overline \pi$  gives each agent each object with equal
probability of $\frac{1}{n}$, that is
$\vec{\frac{1}{n}}$. Now, consider a
trade between agent~1 and any other agent~$j\neq 1,2$ where
$j$ gives $1$ a portion of $j$'s share of $a_2$ in exchange for some
of $1$'s shares of $a_1$ and, for some $k > 2$, $a_k$.  Denote the rate of
exchange that would leave agent $j$ indifferent  to such a trade (so
that the gains accrue to agent 1) by $\mu^k$. Solving $j$'s
indifference condition, we see that
\[
  \mu^k =  \frac{(k-2)\varepsilon}{1- (n-k)\varepsilon}.
\]
Agent 1 gains from such a trade since, per unit of $a_2$ that they
receive from $j$,  their expected utility changes
by $( -\mu^k + (1-\varepsilon) - (1-\mu^k)(n-k)\varepsilon)$, which is positive for small enough $\varepsilon$.
The volume of this trade is limited by the fact that $j$ can only
transfer $\frac{1}{n}$ units of $a_2$ to agent 1. While agent 1 can
engage in such trades with more than one $j$, they are limited by the
$\frac{1}{n}$ of $a_k$ that they have and the constraint that they end
up with  one unit of total probability. Engaging in $(n-2)$
trades (perhaps evenly distributed among all trading partners in
$N\setminus \{1,2\}$), agent~1 can achieve the following lottery without
harm to any other agent:\renewcommand{\arraystretch}{1}
\[
  \pi^\varepsilon_{1a_l} = \left\{
    \begin{array}{cc}
      \frac{1}{n}\left(1- \sum_{k=3}^n\mu^k\right)& \text{if }l=1\\
      \frac{n-1}{n} & \text{if }l = 2\\
      \frac{1}{n}\sum_{k=3}^n\mu^k & \text{otherwise}.
  \end{array}\right.
\]
Note that $\pi^\varepsilon$  does not exhaust all gains from trade,
but it suffices to make our point.
For any $\delta>0$, we can find $\varepsilon$ such that
$u_1^\varepsilon(\overline \pi_1) - \frac{2}{n} \leq \delta$ and $1-
u_1^\varepsilon(\pi_1^\varepsilon) \leq \delta$. That is,
for small enough $\varepsilon$, $u^\varepsilon_1(\pi_1^\varepsilon) -
u^\varepsilon_1(\overline \pi_1)$ is no smaller than $\frac{n-2}{n}-2\delta$. So, $1$ can
achieve gains from trade close to the equivalent of shifting
$\frac{n-2}{n}$ mass from their least preferred object ($a_n$) to their
most preferred object ($a_1$). As the number of
agents grows, this goes to 1, the equivalent of going from their worst
object with certainty to their best object with certainty.\hfill $\blacksquare$}
\end{ex}

The upshot of \cref{ex cost of symmetry} is similar to
\cite{Pycia2014}. The main difference is that while we consider the
welfare of a single agent, \cite{Pycia2014}
makes a stronger  statement:  welfare loss under  utilitarian social
welfare function  can be arbitrarily high. In order to draw this
stronger conclusion, he considers
\ruls that are not only ordinal and symmetric, but also regular (as
defined by \cite{LiuPycia2016}) and asymptotically
strategy-proof and ambiguously efficient.

\subsubsection{\StrongEfficiency Versus \SdEfficiency: Randomizing
  Over Serial Dictatorship Allocations}\label{sec:random
  dict}

A simple way of finding an efficient allocation is to let agents pick
their favorite available  object one by one. In fact, we can reach
every efficient deterministic allocation by varying the order in which
agents pick. We call these  \defn{serial dictatorship
  allocations}.
Of course, one can randomize over the
ordering, leading to \defn{random serial dictatorship
allocations}. Selecting an  allocation by  uniformly randomizing
over all possible orderings is a simple way to provide strong
incentives for agents to truthfully convey their ordinal preferences
\citep{LiAER2017,PyciaTroyanEconometrica2023}, but such  allocations
are not necessarily  \strongefficient or even \sdefficient. One way to
get around this
problem is to randomize  over a \emph{subset} of orderings.
In what follows, we compare the extent to which this subset has to be
restricted to achieve \strongefficiency as compared to just
\sdefficiency.
As a point of clarification, in this section we are interested in an
allocation's
``punctual'' property of efficiency at a fixed profile of
preferences. In the next section we will turn to mechanisms, which map
preference profiles to allocations, respond to changes in the
preference profile.

Let $\mathcal O$ be the set of all
orderings over $N$. For each $P\in\mathcal P^N$ and each $\succ\ \in
\mathcal O$, denote the serial
dictatorship allocation with respect to
$\succ$ by  \defn{$\sd(\succ)$}.
Given $P$ and a probability distribution $\lambda$ over $\mathcal O$,
the randomized  serial dictatorship allocation  with respect to
$\lambda$, which we denote
by \defn{$\rp(\lambda)$},  is the $\lambda$-weighted convex combination of $\sd(\succ)$s. So,  $$\rp(\lambda) =
\sum_{\succ\in O} \lambda(\succ)\sd(\succ).$$

Given a  subset of orders $\mathcal O' \subseteq \mathcal O$, we say
that $\mathcal O'$ is \defn{compatible} with property $X$  if for each
$\lambda$ with support $\mathcal O'$ and each $P\in \mathcal P^N$,
$\rp(\lambda)$ satisfies $X$. If, additionally,  there is no
$\mathcal O''\supsetneq \mathcal O$ that is compatible with $X$, we
say that $\mathcal O'$ is \defn{maximal} for $X$.
A set of orders, $\mathcal O'$, is an
\emph{adjacent-$k$} set if there is a set of $k$ agents $N'\subseteq N$  such that
\begin{enumerate}
\item all orders in $\mathcal O'$ agree on $N\setminus N'$.\\[-25pt]
\item for each of $k!$ orderings of $N'$, there is an ordering in
  $\mathcal O'$ that agrees with it over $N'$.\\[-25pt]
\item every order in $\mathcal O'$ ranks the same agents in
  $N\setminus N'$ above those in $N'$.\\[-25pt]
\item agents in $N'$ are adjacent in every order in $\mathcal O'$.\\[-15pt]
\end{enumerate}

We now contrast \sdefficiency and \strongefficiency in terms of their
maximal sets.

\begin{prop}\citep{HarlessPhanGEB2022}\label{prop: adjacent-three}
  Adjacent-3 sets are maximal for \sdefficiency.
\end{prop}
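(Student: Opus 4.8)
The plan is to prove the two halves of maximality separately: (i) \emph{compatibility}, that $\rp(\lambda)$ is \sdefficient at every $P$ whenever $\lambda$ is supported on an adjacent-3 set $\mathcal O'$; and (ii) \emph{tightness}, that enlarging $\mathcal O'$ by any single order destroys this. For (i), I would first reduce to a single distribution. Compatibility depends only on supports: since \sdefficiency is strongly combinatorial \citep{AzizMackenzieXiaYeAAMAS2015,EcheniqueRootSandomirskiy2022}, and the support of $\rp(\lambda)$ for any $\lambda$ supported on $\mathcal O'$ is contained in the support of the allocation induced by the uniform distribution over $\mathcal O'$, it suffices to show that this uniform randomization is \sdefficient. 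I would then invoke \citeposs{BogomolnaiaMoulinJET2001} characterization: an allocation is \sdefficient at $P$ if and only if the relation $\tau$ on $A$ is acyclic, where $a \mathrel\tau b$ whenever some agent consumes $b$ with positive probability yet ranks $a$ above $b$. The task thus becomes: rule out $\tau$-cycles.

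Next I would peel off the deterministic part of the problem. Let $N'$ be the block of three agents. All orders in $\mathcal O'$ agree on the agents ranked above $N'$, so each such ``prefix'' agent consumes a fixed object with probability one. A minimality argument shows these objects cannot lie on a $\tau$-cycle: the \emph{earliest-picked} prefix object on a cycle would have to be envied by its unique, deterministic consumer, but that agent chose it as the best object still available at their turn, so the envied object (also on the cycle) must have been picked even earlier --- impossible. Hence every cycle lives among objects touched by the block and the suffix, and the analysis reduces to three agents choosing first in all six orders, followed by a fixed suffix order. The guiding fact here is that uniform serial dictatorship over three agents coincides with the probabilistic serial rule and is \sdefficient \citep{BogomolnaiaMoulinJET2001}; the content of the lemma is that embedding this three-agent randomization in a larger economy with a fixed suffix cannot manufacture new cycles.

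The hard part will be exactly this block--suffix interaction. The subtlety is that the set of three objects the block removes is itself order-dependent, so suffix agents genuinely consume different objects across the six orders and can therefore appear in $\tau$. The key step is to show no cycle closes through a suffix-consumed object: because the suffix agents retain a fixed relative order in every order of $\mathcal O'$, their contribution to $\tau$ is ``directed consistently,'' so any putative cycle can be rerouted onto the three objects genuinely contested by the block, where acyclicity follows from the three-agent case. Making this rerouting rigorous --- tracking, for each $\tau$-edge, a witnessing order and the agent who preempts the envied object, and showing these witnesses can be chosen coherently despite living in different orders --- is the technical heart of the argument.

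For (ii), since compatibility is inherited by subsets (and hence incompatibility by supersets), it is enough to show that $\mathcal O'\cup\{\succ^*\}$ is incompatible for every $\succ^*\notin\mathcal O'$. Such a $\succ^*$ must either reorder two agents outside $N'$ or move an agent into or out of the block's three positions; in each case it places a fourth agent into genuine contention with part of the block. I would then construct a preference profile and a two-point $\lambda$ mixing $\succ^*$ with a suitable base order so as to reproduce the classical four-agent failure of random serial dictatorship \citep{BogomolnaiaMoulinJET2001}: two pairs of agents with ``crossed'' preferences over two objects, yielding an $\rp(\lambda)$ that admits an explicit Pareto-improving cyclic trade, i.e.\ a $\tau$-cycle. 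Exhibiting one such witness for each type of added order, and verifying that the resulting allocation is not \sdefficient, completes the maximality claim.
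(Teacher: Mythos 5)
A preliminary point: the paper contains no proof of this statement to compare against---\cref{prop: adjacent-three} is imported from \cite{HarlessPhanGEB2022}, and only its adjacent-2 counterpart (\cref{prop: adjacent-two}) is proved in the appendix. So your proposal has to stand on its own, and on its own terms it is a plan with two genuine gaps rather than a proof. The first gap is in the compatibility half, and you flag it yourself: the assertion that any $\tau$-cycle can be ``rerouted onto the three objects genuinely contested by the block'' is not a deferrable technicality---it is essentially the theorem. Your preliminary reductions are sound (passing to a single distribution via strong combinatoriality, invoking \citeposs{BogomolnaiaMoulinJET2001} acyclicity characterization of \sdefficiency, and excluding prefix objects from any cycle by the greedy-picking/minimality argument), but the three-agent fact you lean on (uniform random serial dictatorship coincides with probabilistic serial when $n=3$) concerns an economy with exactly three objects, whereas here the block selects three objects from a larger pool, and \emph{which} three is order-dependent; that is exactly the channel through which randomness reaches the suffix agents. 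Controlling that channel is the entire difficulty, and without it the argument reduces the problem to itself.

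The second gap is in tightness. The reduction to single added orders is valid (incompatibility passes to supersets, again by strong combinatoriality), but the proposed witness---a two-point mixture reproducing ``the classical four-agent failure\dots two pairs of agents with crossed preferences over two objects''---does not work in the cases that actually arise, and the choice of base order is not innocuous. Concretely, suppose $\succ^*$ transposes two adjacent \emph{suffix} agents. Then every order in $\mathcal O'$ differs from $\succ^*$ by a block permutation together with that distant suffix swap, so in any two-point mixture each contested pair of agents picks \emph{consecutively} in both orders; in that configuration the Bogomolnaia--Moulin crossed-pairs profile produces no $\tau$-cycle (consecutive agents with crossed tops get deterministic assignments, and identical pairs just split their top two available objects). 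A working witness instead requires, e.g., two upstream agents who share the same top object but have \emph{different second choices}---so that the residual set reaching the downstream pair differs across the two orders---combined with two downstream agents whose preferences are crossed over those second choices. Similarly, if $\succ^*$ swaps two adjacent prefix agents, pairing it with the block-aligned order of $\mathcal O'$ gives an adjacent-2 pair, which is \emph{compatible}; a cycle appears only if you pair $\succ^*$ with an order of $\mathcal O'$ whose block arrangement is also permuted, and again with the same-top/different-seconds profile rather than the classical one. These case-by-case constructions (prefix reorder, suffix reorder, agent moved across the block, block interleaved) are the actual content of the tightness argument, and none of them is exhibited or verified in your proposal.
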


\begin{prop}\label{prop: adjacent-two}
  Adjacent-2 sets are maximal for \strongefficiency.
\end{prop}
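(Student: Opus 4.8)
The plan is to prove maximality in two halves: \emph{compatibility}, that every $\rp(\lambda)$ with $\supp(\lambda)\subseteq\mathcal{O}'$ is \strongefficient whenever $\mathcal{O}'$ is an adjacent-2 set, and \emph{maximality proper}, that adjoining any single order destroys compatibility. Throughout I write an adjacent-2 set as $\mathcal{O}'=\{\succ_1,\succ_2\}$ with distinguished pair $N'=\{i,j\}$: both orders rank all of $N\setminus N'$ above $i,j$ and agree there, differing only by transposing the adjacent bottom entries $i,j$.

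For compatibility I would first observe that $\mathcal{O}'$ induces a \emph{diarchy}. In every order of $\mathcal{O}'$ the $n-2$ agents of $N\setminus N'$ pick in the same sequence, so each receives a degenerate lottery and together they consume the same $n-2$ objects in every realization, leaving a common two-object pool $\{x,y\}$ for $i,j$. Hence $\pi=\rp(\lambda)$ gives each agent of $N\setminus N'$ a sure object and gives $i,j$ lotteries supported on $\{x,y\}$. To show $\pi$ is \strongefficient I use the stochastic-dominance formulation: assume some $\pi'\neq\pi$ leaves no agent comparably worse (for every agent $k$ it is not the case that $\pi_k\pisd\pi'_k$) and derive $\pi'=\pi$. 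Since a degenerate lottery on an agent's most preferred available object stochastically dominates every other lottery on the remaining objects, ``not worse'' forces the first picker's assignment to be unchanged; feasibility then exhausts that object and the argument cascades down the fixed sequence, pinning every agent of $N\setminus N'$ to their serial pick. What remains is to allocate $\{x,y\}$ between $i$ and $j$, a one-parameter family $\pi_{ix}=p$, $\pi_{iy}=1-p$. A short case check---whether $i,j$ rank $x,y$ alike or oppositely---shows any $p$ other than the diarchy value makes exactly one of $i,j$ comparably worse, so $\pi'=\pi$, a contradiction.

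For maximality, fix $\succ^*\notin\mathcal{O}'$ and set $\mathcal{O}''=\mathcal{O}'\cup\{\succ^*\}$. Writing $\succ_2=(i\,j)\cdot\succ_1$ and $\succ^*=\rho_1\cdot\succ_1=\rho_2\cdot\succ_2$, we have $\rho_2=\rho_1(i\,j)$; since the product of a transposition with the adjacent transposition $(i\,j)$ is never itself a transposition (it is a double transposition or a $3$-cycle), and since $\rho_1\notin\{e,(i\,j)\}$, the permutations $\rho_1$ and $\rho_2$ cannot both be transpositions. Thus at least one of $\{\succ_1,\succ^*\}$, $\{\succ_2,\succ^*\}$ differs by more than a single transposition; call it the \emph{bad pair} $\{\sigma,\sigma^*\}$. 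It then suffices to take $\lambda$ uniform on the bad pair and exhibit a profile $P$ at which $\rp(\lambda)$ is not \strongefficient, refuting compatibility of $\mathcal{O}''$.

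The heart of the argument, and the step I expect to be the main obstacle, is the crux: for any two orders differing by more than a single transposition there is a profile at which the uniform mix of their serial dictatorship allocations admits an unambiguous Pareto improvement. I would prove it through two base cases and a reduction. If the difference permutation contains a $3$-cycle, restrict to the three cycled agents and give every other agent a private top object so that they are \emph{inert}; a profile such as $a\succ b\succ c$, $a\succ b\succ c$, $b\succ a\succ c$ over the three free objects then produces, in the spirit of the \sdefficiency example of \cref{sec:efficiency}, an allocation that is \sdefficient yet admits an incomparable reallocation harming no one (mass can be shifted so one agent moves incomparably toward its middle object, a second moves incomparably, and the third is unchanged). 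If instead the difference is a product of two disjoint transpositions, restrict to the four involved agents, again render the rest inert, and choose a profile in which the two swaps draw on overlapping object pools; this yields an allocation with an \emph{sd}-improving cycle between two agents who strictly prefer to swap a pair of objects, so the mix is not even \sdefficient and hence, since \strongefficiency implies \sdefficiency, not \strongefficient. The reduction is that any non-identity, non-transposition permutation has, in its support, either a $3$-cycle or two disjoint transpositions, and making all uninvolved agents inert leaves the involved agents facing exactly the induced bad pair over the free objects. The delicate points to verify are that private top objects can always be supplied consistently with feasibility, that the involved agents' induced order over the free objects is precisely the chosen $3$-cycle or double transposition, and that the base-case profiles genuinely certify failure of \strongefficiency; assembling these is where the real work lies.
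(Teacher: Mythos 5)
Your overall skeleton (compatibility of adjacent-2 sets, plus showing that adjoining any order breaks compatibility) is a legitimate route to maximality, and your group-theoretic observation that at least one of the pairs $\{\succ_1,\succ^*\}$, $\{\succ_2,\succ^*\}$ must differ by a permutation that is neither the identity nor a transposition is correct. But both halves have genuine gaps. First, you have misread the definition of an adjacent-2 set: its third condition requires only that the \emph{same} agents of $N\setminus N'$ precede the block $N'$ in every order, not that \emph{all} of them do. The pair $\{i,j\}$ occupies adjacent positions $t,t+1$, and $n-t-1$ agents pick after it. Your claim that the agents outside $N'$ receive degenerate lotteries and leave a common two-object pool for $i,j$ then fails: when $i$ and $j$ share the same top remaining object but have different second choices, the two realizations remove \emph{different} pairs of objects, so agents picking after the block face random menus and generally receive non-degenerate lotteries. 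Your one-parameter analysis over $\{x,y\}$ covers only the special case in which $N'$ is last; the paper's proof needs, and supplies, a further argument pinning down all agents who pick after the block.

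Second, the crux of your maximality half does not hold as stated. The reduction claim---that every non-identity, non-transposition permutation contains in its cycle decomposition a $3$-cycle or two disjoint transpositions---is false for any single cycle of length at least four, e.g.\ $(1\,2\,3\,4)$, and more generally whenever the decomposition has one long cycle and at most one transposition. The repair by making agents inert is not automatic: deleting the wrong agent can make the two restricted orders identical (delete agent $4$ from $(1,2,3,4)$ versus $(4,1,2,3)$) or collapse them to an adjacent transposition, from which no violation can be extracted. Moreover, your $3$-cycle base profile certifies failure in only one cyclic direction: with picking orders $(x,y,z)$ versus $(z,x,y)$ and preferences $a\succ b\succ c$, $a\succ b\succ c$, $b\succ a\succ c$, the mixture gives $x$ its top object for sure and gives $y,z$ lotteries on $\{b,c\}$, and that allocation \emph{is} \strongefficient; an identical-preference profile produces the violating ``gap'' configuration of \cref{lem: no gaps with common prefs} in both directions, which is essentially what the paper uses. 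Finally, under the paper's definition, refuting compatibility of $\mathcal O'\cup\{\succ^*\}$ requires a $\lambda$ whose support is the whole set, not just your bad pair; this is fixable, since your violations are positivity patterns that persist when the remaining order receives small positive weight, but the step must be made. For contrast, the paper avoids the case analysis over permutations entirely: it looks at the first position at which two supported orders differ, uses an identical-preference profile to force an adjacent swap there, and then a four-agent profile to exclude a second swap, concluding that any compatible support is an adjacent-2 set or a singleton, after which compatibility of adjacent-2 sets is verified directly.
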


These results  show that both \sdefficiency and \strongefficiency
substantially limit the flexibility of randomization for serial
dictatorship allocations. However, weakening \strongefficiency to
\sdefficiency only expands the range from adjacent-2 to adjacent-3
sets in terms of adjacent-$k$ sets. Though
\cite{HarlessPhanGEB2022} show that there are other maximal sets for
\sdefficiency, as far as adjacent-$k$ sets go, $k=3$ is maximal.

\section{Incentives and  Efficiency}
\label{sec: characterization}
Until now, we have concerned ourselves only with what a
planner might consider efficient. We thought of  a
planner who had access to ordinal preferences.  \emph{What
  if the planner has also to worry about manipulative agents?}
In this section, we consider this mechanism design question.
Specifically, we describe all  the ways a social
planner in an ordinal informational environment can implement an
efficient allocation subject to incentive constraints and some
regularity conditions. The strength of our efficiency property makes
this novel characterization possible: we are unaware of any
 corresponding characterizations with weaker efficiency requirements.

\subsection{Strategy-proofness and Other Properties}
\label{sec:strat-proofn-other}

Since we are now turning to
incentives, the object of our analysis is a \textbf{\rul}, which
maps the information that the planner receives to an allocation.
Since the planner solicits from agents their ordinal preferences over objects,
a \rul is a mapping \defn{$\varphi: \mathcal P^N \to \Pi$}.
We are interested only in such \ruls that choose an
\strongefficient allocation for every preference profile. We use the
notation  \defn{$\varphi_i(P)$} to denote the lottery
that $i\in N$ receives when the planner sees the ordinal preference
profile
$P$, \EH{where $\varphi_i(P)\equiv (\varphi_{ia}(P))_{a\in A}$ with $\varphi_{ia}(P)$ specifying the probability that $i$ receives $a$ under~$\varphi$ at~$P$.}

{A standard incentive requirement, originating from
  \cite{BlackJPE1948} and \cite{FarquharsonOEP1956}, is to prevent
  profitable manipulations by any agent. Stated in terms of
  stochastic dominance: a mechanism is \defn{strategy-proof} if for
each $P\in \mathcal{P}^N$, each $i\in N$, and each $P_i'\in \mathcal{P}$,
$\varphi_i(P) \risd
\varphi_i(P_i',P_{-i})$
\citep{GibbardEconometrica1977,BogomolnaiaMoulinJET2001}. It is
equivalent to saying that there is no utility profile $u\in \mathcal
U^N$, agent $i\in N$, and utility function  $u'_i\in \mathcal U$ where
$u$ induces ordinal preference profile $P\in\mathcal P^N$ and  $u_i'$
induces $P_i'$, such that $ u_i(\varphi_i(P_i', P_{-i})) > u_i(\varphi_i(P)).$}
{If the
mechanism satisfies this property, the planner can rest assured that,
regardless of the  true utilities, no agent has  an incentive
to lie about their  preference. This is consistent with our definition
of
\strongefficiency.}

For the version of our  problem  without randomization,
\cite{SvenssonSCW11999} characterizes serial dictatorships as the only
strategy-proof \ruls that satisfy certain
regularity conditions. In what follows, we state versions of these
regularity conditions. They   deliver a characterization of a
class of
\strongefficient and \sdsp
\ruls. As mentioned above, to the best of our knowledge,
this is the first such characterization for random
allocation of objects.

 The first regularity condition is \defn{neutrality}, which says that a \rul ought to
 be robust to  permuting the names  of the
 objects in $A$.
Formally,  $\varphi$ is {neutral} if for
each $P\in \mathcal P^N$, and each permutation of the objects $\rho$,
whenever $P'\in \mathcal P^N$ rearranges $P$ according to $\rho$,
 for each $i\in N$ and $a\in A$,  $\varphi_{ia}(P') =
  \varphi_{i\rho(a)}(P).$ This concept
  first appears in \cite{MayEconometrica1952}. It has subsequently
  featured in the
  literature on allocation of indivisible goods \citep{SvenssonSCW11999}.

Second, $\varphi$  is \defn{non-bossy} if whenever an agent's reported
preferences change without affecting what they get,  this change
does  not affect what any agent gets. Formally, for each $P\in \mathcal
  P^N$, each $i\in N$, and each $P_i'\in \mathcal P$ such that
  $\varphi_i(P) = \varphi_i(P_i',P_{-i})$, we have $\varphi(P) =
  \varphi(P_i',P_{-i})$. Non-bossiness was first defined by
    \cite{SatterthwaiteSonnenscheinRES1981}.

The final condition,  \textbf{bounded invariance}, is also a
restriction
on how a \rul responds to changes in a single agent's preferences. It
says that if the only changes are for objects they rank below $a\in
A$, then this should not affect how probability shares of  $a$ are
assigned.
Formally, $\varphi$ is boundedly invariant
   if for each $P\in \mathcal P^N$, each $i\in N$, each $a\in A$, and
   each $P_i'\in \mathcal P$, whenever  $\{o\in A:o\mathrel R_i a\}=\{o\in A:o\mathrel R_i' a\}$ and $P_i$ and $P_i'$ coincide on
   $\{o\in A:o\mathrel R_i a\}$,
for each $j\in N$, $\varphi_{ja}(P_i, P_{-i}) = \varphi_{ja}(P_i',
P_{-i})$.\footnote{\EH{This requirement is partially
    implied by \emph{strategy-proofness} and reflects a form of
    ``object-wise'' non-bossiness, in the sense that if agent~$i$
    receives the same probability for object~$a$ after changing their
    preference, then the allocation of~$a$ to all other agents remains
    unaffected. Nonetheless, \emph{strategy-proofness} and
    \emph{non-bossiness} together do not imply \emph{bounded
      invariance}, and vice versa, demonstrating their logical
    independence.}} \EH{Originally proposed by
  \cite{Heo2010}, this condition has been studied in the subsequent
  literature on random allocation \citep{BogomolnaiaHeoJET2012,
    BasteckEhlersECMA}. To the best of our knowledge, all mechanisms
  that have been studied in the literature so far satisfy this property.}\footnote{\VM{An alternative definition of bounded
  invariance is what one might call \emph{lower reshuffling
    object-wise invariance}:} \SA{Preference $P'_i$ is a lower reshuffling at object $a$ of preference $P_i$ if
for every object $o$, if $o \mathrel R_i a$, then the upper level set at $o$ of $P'_i$ is the same as that of $P_i$.
A preference profile $P'$ is a lower reshuffling at object $a$
of preference profile $P$ if
for every agent $i$, $P'_i$ is a lower reshuffling at $a$ of $P_i$.
A \rul $\varphi$ is lower reshuffling object-wise invariant if
for every object $a$ and every pair of preference profiles $P$ and $P'$,
if $P'$ is a lower reshuffling at $a$ of $P$,
then $\varphi_a(P') = \varphi_a(P)$.
}}

\subsection{Hierarchies of Monarchies and Diarchies}
\label{sec:hier-monarch-diarch-1}

We describe a class of \ruls that are characterized by
\strongefficiency, \sdspness and the above three regularity properties.
Informally, these \ruls operate as follows. {The  supply of each object is initially set to~1.}
\begin{enumerate}
\item[Step 1:] Select either a single agent or a pair of agents.

  If we select a single agent, assign to them their most
  preferred object with certainty. If we select a pair of agents $i$
  and $j$,  flip a  weighted coin to
  determine who picks first and
  who picks second to be assigned their most preferred object. Of
  course, in the latter case, the allocations of $i$ and $j$ are
  non-deterministic if and only if they have the same favorite
  object. {Reduce the supply of each object by the
    assignment(s) from this step.}

\item[] $\qquad\qquad\vdots$
\item [Step $k$:]  {If at least one object has non-integral supply (strictly between zero and one) at} the end of the
  previous step, then select a single agent whose allocation we have
  yet to determine. Otherwise, select a
  single agent or a pair of agents.

  If we select a single agent, assign to them their most
  preferred lottery from the remaining supply  of all of
  the objects. If we select a pair of agents $i$
  and $j$,  flip a weighted coin to determine who picks first and who
  picks second to be assigned their most preferred
  object. {Reduce the  supply of each object
    by the assignment(s) from this step.}
\end{enumerate}Stop when all agents' assignments have been determined.

The  selection of single agents or pairs of agents in each step is
governed by a selection rule that is fixed a priori. It can
only depend  some aspects of what has transpired in earlier
steps. Specifically, it can depend on:
{\begin{enumerate}
\item whether the  supplies of the remaining objects are integral or not,\\[-25pt]
\item who have already been assigned in the previous steps, and\\[-25pt]
\item who received deterministic assignments and who received non-deterministic ones.\\[-15pt]
\end{enumerate}}
We call these \ruls, which are parameterized by selection rules, \defn{Hierarchies of Monarchies and
  Diarchies (HMD)}. The ``monarchies'' refer to single agents and
``diarchies''
refer to the pairs of agents.  We provide a  rigorous definition of
the above algorithm and restrictions on the selection of agents in
\cref{apx:formal alg} along with some detailed demonstrative examples.
To fix ideas, we present a simple example of how such a \rul  works.
\renewcommand{\arraystretch}{0.9} \begin{ex}[Example of an HMD]\label{hmd}
  \em{Let $N=\{1,2,3,4,5\}$ and $A=\{a,b,c,d,e\}$.
Consider an HMD associated with the following
  simple selection rules: (i)~at step~1, a diarchy of 1 and~2  is
  chosen with weight $\frac{1}{3}$ in favor of 1, (ii)~ at any
  subsequent step~$k$, if
  the supplies of all the objects are integral and the assignment from Step~$(k-1)$ is also integral, then a diarchy with an equal
  weight is chosen in the order of 3-4-5 (i.e., the first
  two agents in this ordering among those who has not yet been
  allocated);  otherwise, a monarchy is chosen in the order of
  5-4-3 (i.e., the first agent in this ordering among those have
  not yet been allocated). We illustrate this selection rule
  as a tree in Figure~\ref{hmdpic}.
\begin{figure}[t]
  \centering
  \includegraphics[width=150mm]{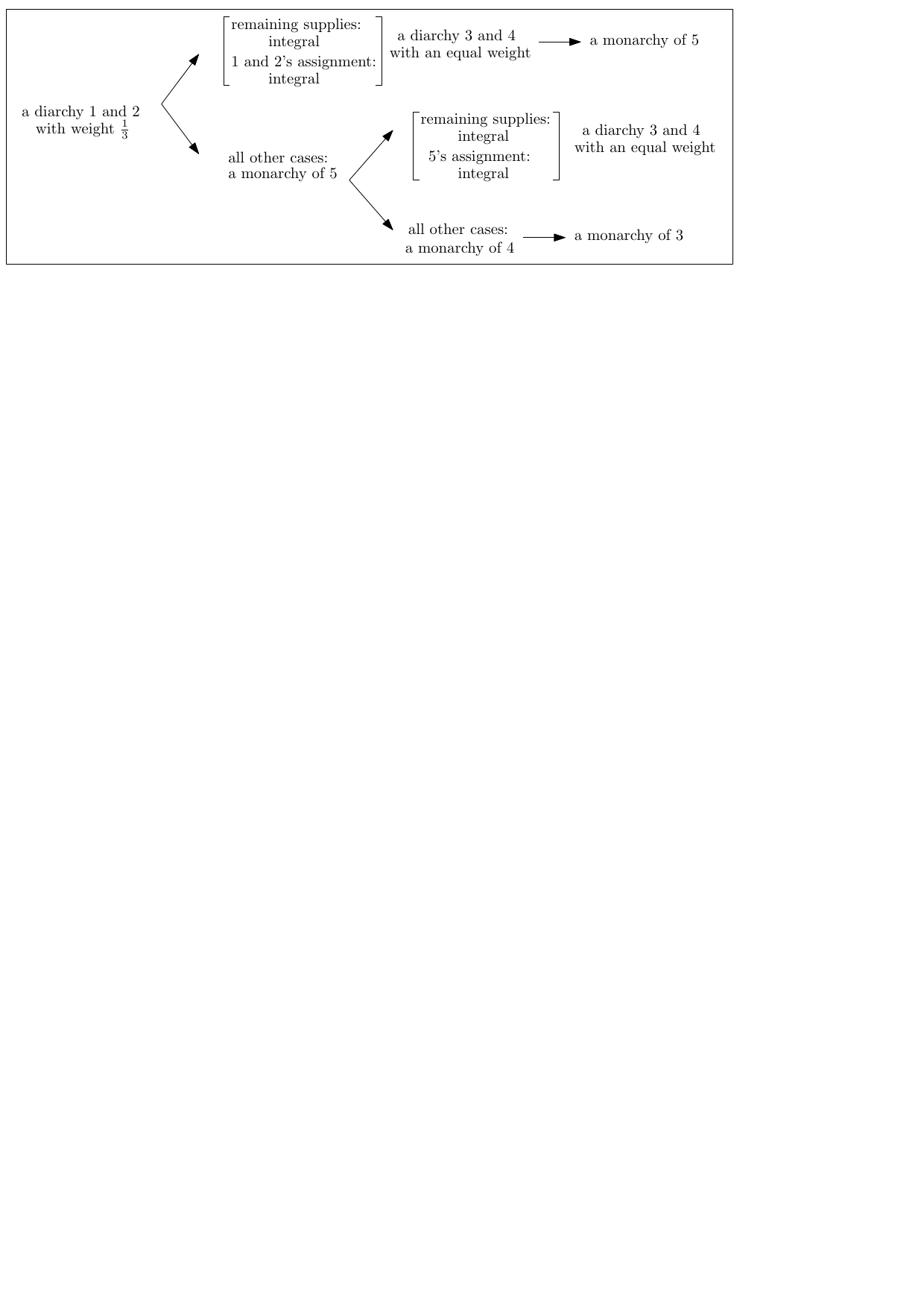}
  \caption{{\footnotesize The selection rule defining a HMD in
      Example~\ref{hmd} can be illustrated as a tree: the~selection of a monarchy or a diarchy in each step is contingent
      on the assignments of the earlier steps.}}
\label{hmdpic}\end{figure}
\renewcommand{\arraystretch}{0.8}
    Let  $P, \bar P\in \mathcal P^N$ be given as follows:
\[
  \begin{array}{ccccc}
 P_1&P_2&P_3&P_4 & P_5\\
    \hline
     a & a & b & a  &b \\
     e&  b & e& e & a  \\
     d&  d& c & b & c \\
     c&  e & a & c & e \\
     b&  c & d & d & d \\
  \end{array}
\qquad\qquad
  \begin{array}{ccccc}
\bar P_1&\bar P_2&\bar P_3&\bar P_4 & \bar P_5\\
    \hline
     a & b & b & a  &b \\
     b&  a & e& e & a  \\
     d&  d& c & b & c \\
     c&  e & a & d & e \\
     e&  c & d & c & d \\
  \end{array}\]
\renewcommand{\arraystretch}{0.9}
{Consider the first profile~$P$ above. In Step~1, we have a
  $\frac{1}{3}$ weighted diarchy of agents~1 and~2. They split $a$
  with 1 receiving
  $\frac{1}{3}$ of it and 2 receiving $\frac{2}{3}$. They  receive the
  remaining $\frac{2}{3}$ and $\frac{1}{3}$ respectively of  their
  second most preferred objects. In
  Step~2, objects~$b$ and $e$ are partially available and the
  assignments of Step~1 are not integral. Therefore, the selection
  rule says we have a monarchy of
  agent~5 next. Agent~5 gets their most preferred
  lottery from the remaining supply,  $\frac{2}{3}$
  of $b$ and $\frac{1}{3}$ of  $c$.
  Next, in Step~3, objects $c$ and $e$
  are partially available and the assignments up to Step~2 are not all
  integral. Therefore, we have a monarchy of agent~4. Agent~4
  gets $\frac{2}{3}$ of~$e$ and $\frac{1}{3}$ of $c$.
  Finally,
  agent~3 gets  1 of the remaining object~$d$. The resulting
  allocation is $\pi$ below.

  To highlight the possibility of different execution paths, consider
  $\bar   P$. In Step~1, we again have a $\frac{1}{3}$ weighted
  diarchy of  agents~1 and~2, but   they
  receive probability~1 of their respective favorite objects. In
  The remaining supply after Step 1 as well as  the
  assignments in Step~1 are integral. So, in Step~2 we have a
  $\frac{1}{2}$ weighted diarchy of agents~3 and~4. They split their
  common favorite object~$e$
  equally and get  the remaining half of their next preferred
  objects. In Step~3, agent~5 gets the remaining
  probabilities of~$c$ and~$d$. The resulting allocation is $\bar
  \pi$ shown  below.}
\[
\pi = \begin{array}{c|ccccc}
& a&b&c&d & e\\\hline
 1 & \frac{1}{3} & 0 & 0  & 0 & \frac{2}{3}\\
 2 & \frac{2}{3} & \frac{1}{3} & 0 & 0 & 0 \\
 3 & 0  & 0 & 0 & 1 & 0 \\
 4 & 0 & 0 & \frac{2}{3} & 0 & \frac{1}{3}\\
 5 & 0 & \frac{2}{3} & \frac{1}{3} & 0& 0\end{array}
\qquad\qquad
\bar \pi =
\begin{array}{c|ccccc}
& a&b&c&d & e\\\hline
 1 & 1 & 0 & 0  & 0 & 0\\
 2 & 0 & 1 & 0 & 0 & 0 \\
 3 & 0  & 0 & \frac{1}{2}  &  0 & \frac{1}{2} \\
 4 & 0 & 0 & 0 & \frac{1}{2} & \frac{1}{2}\\
 5 & 0 & 0 & \frac{1}{2} & \frac{1}{2} & 0\end{array}
\]\hfill $\blacksquare$}

\end{ex}
\renewcommand{\arraystretch}{1.1}
HMDs are reminiscent of the ``inheritance trees'' in
\cite{papai2000} where allocations are   deterministic. An important
difference is that an inheritance tree governs the rights to a
\emph{single object} while in an HMD, \emph{all} objects are
``controlled'' by either a monarch or diarchs at each step. Moreover,
the constraints that govern branching of these
structures  are different because our setting allows partial
allocation and that of \cite{papai2000} is not subject to
neutrality.

The combination of properties we impose  restricts the
degree to which a \rul can randomize.
For the problem of choosing a common
lottery by a voting scheme, non-bossiness and bounded invariance are vacuous. For
such problems, \cite{GibbardEconometrica1977} shows that every
ordinal, \strongefficient, and strategy-proof \rul is dictatorial and
therefore involves no randomization.
For our problem of allocating objects, if we do not allow
randomization at all, efficiency is implied by strategy-proofness,
non-bossiness, and neutrality, so these three  properties characterize
serial dictatorships \citep{SvenssonSCW11999}.
While our \ruls   do involve
some randomization, they still have the hierarchical nature of serial
dictatorships.

\begin{thm}\label{thm: characterization}
  A \rul  is \strongefficient,
  \sdsp,  non-bossy, neutral, and boundedly invariant if and only if
  it is   a hierarchy   of   monarchies and diarchies.
\end{thm}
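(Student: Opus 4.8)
The plan is to prove the two directions separately, establishing that every HMD satisfies the five axioms, and conversely that these axioms force a mechanism to be an HMD. The easier direction is showing that an HMD satisfies the axioms. For \emph{\sdspness}, one argues step by step: at each monarchy a single agent receives their favorite available lottery, which is exactly the best-response behavior, while at each diarchy the $\alpha$-weighted ``pick first or second'' structure is a convex combination of two serial-dictatorship sub-problems, each of which is individually strategy-proof, so stochastic-dominance optimality is preserved under convexity. \emph{Neutrality} follows since the algorithm only consults ordinal rankings and supply vectors, both of which commute with relabeling objects. \emph{Non-bossiness} and \emph{bounded invariance} require checking that the selection rule's branching depends only on the permitted data (integrality of supplies, identities of already-assigned agents, and whether their assignments were deterministic); since a preference change that leaves an agent's own assignment fixed cannot alter the residual supply vector passed to later steps, nothing downstream changes. \emph{\Strongefficiency} is where \cref{prop:smallersupport} and \cref{lem:limited randomization} do the heavy lifting: I would show that each agent's support meets any other's in at most two objects (diarchs share at most their common favorite, monarchs take a face of the residual polytope), and then invoke the combinatorial characterization to conclude that no stochastic-dominance Pareto improvement exists.

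\textbf{The harder direction is the converse}: assume $\varphi$ is \strongefficient, \sdsp, non-bossy, neutral, and boundedly invariant, and reconstruct the HMD structure. The plan is to induct on the number of agents, peeling off a monarch or a diarchy at the first step. First I would identify the agent(s) who ``move first.'' Consider the object that some agent ranks first; by \strongefficiency and \cref{lem:limited randomization}, the support structure at the top is severely constrained. The key claim to establish is that there is either a single agent who receives their top object with certainty (a monarch), or a pair of agents who split a commonly-favored object according to a fixed weight $\alpha$ (a diarchy). \Sdspness together with \emph{bounded invariance} should pin down that the probability an agent receives their top object cannot depend on how they rank objects below it, which is what lets us speak of a fixed weight $\alpha$ independent of the realized profile.

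\textbf{The main obstacle I anticipate is showing that the weight $\alpha$ in a diarchy is a genuine constant} and that the ``winner picks first'' lottery really is $\alpha\pi_i + (1-\alpha)\pi_i'$ rather than some more exotic division consistent with the support constraints. Here the plan is to combine \emph{neutrality} (to rule out dependence on object identities) with \emph{strategy-proofness} and \emph{bounded invariance} (to rule out dependence on lower-ranked objects), leaving only dependence on the identities of the two agents and the history—exactly the data the selection rule is permitted to use. Once the top step is shown to be a monarchy or diarchy with a well-defined weight, I would verify that the residual problem, obtained by removing the allocated agents and subtracting their consumption from supply, inherits all five properties on the smaller object-and-agent set (this is where \emph{non-bossiness} and \emph{bounded invariance} guarantee the reduced mechanism is well-defined and that the allowed dependence of later selections on earlier history is consistent). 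Applying the induction hypothesis then expresses the remainder as an HMD, and splicing the first step onto it yields the full HMD. The delicate bookkeeping lies in tracking how integrality of the residual supply forces monarchies (as stated in the algorithm's Step~$k$) versus permitting diarchies, and in checking that the selection rule extracted from $\varphi$ depends only on the three permitted features of the history.
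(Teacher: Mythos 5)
Your forward direction contains a genuine error in its treatment of \strongefficiency (the \sdspness, neutrality, and non-bossiness sketches are fine in outline). \Cref{lem:limited randomization} and \cref{prop:smallersupport} are the wrong tools: the former is only a \emph{necessary} condition for \strongefficiency, and the latter only transfers efficiency from an allocation already known to be \strongefficient to allocations with smaller support. Neither can certify efficiency from scratch, and the support condition you propose to verify is strictly weaker than \strongefficiency. Concretely, let three agents share the preference $a \mathrel{P} b \mathrel{P} c$ and take $\pi_1=\pi_2=(\sfrac{1}{2},0,\sfrac{1}{2})$ and $\pi_3=(0,1,0)$: every pairwise support intersection has size at most two, yet agents $1$ and $3$ can exchange probability of $b$ against probabilities of $a$ and $c$ so that neither is made comparably worse off in stochastic dominance terms, so $\pi$ is not \strongefficient---exactly the configuration excluded by \cref{lem: no gaps with common prefs}, which your plan never invokes. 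The paper instead argues directly along the execution path: each agent's assigned lottery stochastically dominates every lottery feasible given what earlier agents received, and at a diarchy the integrality of the residual supply forces the two diarchs to share a common top object, so any alternative that makes no one comparably worse off must reproduce the HMD outcome step by step. Separately, your bounded-invariance argument conflates it with non-bossiness: the hypothesis of bounded invariance fixes the agent's ranking only down to $a$, so their own assignment of objects below $a$, the residual supply, and all downstream allocations may change; only the allocation of $a$ itself must be preserved, which is why the paper runs a case analysis on the residual supply of $a$ at the agent's step rather than claiming that ``nothing downstream changes.''

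On the converse, your architecture---peel off a monarchy or diarchy, show the weight $\alpha$ is profile-independent, recurse---agrees with the paper's, and you correctly flag the hard points, but the proposal stops where the real work begins. The paper's route is: (i) at \emph{identical-preference} profiles, \strongefficiency alone (via \cref{lem: no gaps with common prefs}) forces a monarchy/diarchy structure (\cref{lem: int supply means there exists mon or di,lem: frac supply means there exists mon}); (ii) the bridge from identical profiles to arbitrary ones is invariance to support monotonic transformations, derived from \sdspness plus non-bossiness (\cref{lem: individual monotonicity,lem: monotonicity})---this, combined with neutrality and bounded invariance, is what makes the monarch (or the diarchs and $\alpha$) the same at every profile (\cref{lem: step to recurse}); and (iii) well-definedness of the extracted sequencing rule on identity-integrality equivalence classes of histories, proved by induction along execution paths using comparison profiles in which unassigned agents rank objects by residual supply, with neutrality delivering the final contradiction. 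Without something playing the roles of (i)--(iii)---in particular the identical-preference device and the support-monotonic-transformation lemma, neither of which appears in your plan---the induction you describe cannot be carried out.
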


In the appendix, we give examples of \ruls demonstrating that the
four properties other than  bounded invariance are independent of the
others.    If there are only three agents, then every hierarchy of
monarchies and diarchies is a random serial dictatorship over an
adjacent-two set. So,  bounded invariance is redundant in our
list of properties, yielding the following corollary. Whether this is
true for more than three agents and objects remains an open question.

\begin{cor}
If $|A| = |N| = 3$, then a \rul is \strongefficient,
\sdsp, neutral, and non-bossy if and only if it is a random
serial dictatorship over an adjacent-two set.
\end{cor}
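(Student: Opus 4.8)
The plan is to reduce everything to \cref{thm: characterization}, which already identifies the \ruls satisfying all five properties (\strongefficiency, \sdspness, non-bossiness, neutrality, and bounded invariance) as exactly the HMDs. Two facts then suffice. \textbf{(b)} When $|N|=|A|=3$, a \rul is an HMD if and only if it is $\rp(\lambda)$ for some $\lambda$ whose support is contained in an adjacent-two set. \textbf{(a)} When $|N|=|A|=3$, every \strongefficient, \sdsp, non-bossy, and neutral \rul is automatically boundedly invariant. Given these, the corollary follows: for ``if'', an RSD over an adjacent-two set is an HMD by (b), hence satisfies all five properties by \cref{thm: characterization}, in particular the four in the statement; for ``only if'', the four properties force bounded invariance by (a), so \cref{thm: characterization} makes the \rul an HMD, which (b) rewrites as an RSD over an adjacent-two set.

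I would prove (b) by enumerating the HMD selection rules for three agents. The first step is a monarchy or a diarchy. If it is a monarchy of some $k$, then $k$ takes its favorite with certainty and the two remaining objects are left at full supply; the next step allocates the remaining pair $\{i,j\}$ either as one diarchy (producing the orders $kij$ and $kji$ with the coin weights) or as two successive monarchies (producing a single order). Either way the \rul is $\rp(\lambda)$ with $\supp(\lambda)\subseteq\{kij,kji\}$, the adjacent-two set whose two-agent block is ranked below the fixed agent $k$. If instead the first step is a diarchy of $\{i,j\}$, those two split according to the coin and the lone remaining agent $k$ takes the residual lottery as a monarchy, so the \rul is $\rp(\lambda)$ with $\supp(\lambda)\subseteq\{ijk,jik\}$, the adjacent-two set with the block above $k$. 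Conversely each type of adjacent-two set is realized by one of these rules, and degenerate $\lambda$ (a pure serial dictatorship) is covered since any single order lies in an adjacent-two set. Since with three agents the two-agent block of an adjacent-two set must sit entirely above or entirely below the one remaining agent, this exhausts both classes and proves (b).

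The heart of the argument is (a). When $|A|=3$, bounded invariance is vacuous unless the object $a$ in question is agent $i$'s most preferred object: if $a$ is ranked second or third there is at most one object below it, so no nontrivial reshuffling below $a$ exists. Thus I only need: when $P_i = a\,b\,c$ and $P_i' = a\,c\,b$ with $P_{-i}$ fixed, $\varphi_{ja}(P_i,P_{-i}) = \varphi_{ja}(P_i',P_{-i})$ for every $j$. Applying \sdspness in both directions gives $\varphi_i(P_i,P_{-i})\risd\varphi_i(P_i',P_{-i})$ and $\varphi_i(P_i',P_{-i})\risd\varphi_i(P_i,P_{-i})$; comparing cumulative probabilities at the second-ranked object of each relation (which isolates the mass on $a$) yields $\varphi_{ia}(P_i,P_{-i})\ge\varphi_{ia}(P_i',P_{-i})$ together with its reverse, so $i$'s own probability of $a$ is invariant. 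If $i$'s whole lottery is unchanged, non-bossiness gives $\varphi(P_i,P_{-i})=\varphi(P_i',P_{-i})$ and (a) holds.

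The remaining case --- where $i$'s lottery changes, necessarily only by moving probability between $b$ and $c$ while its $a$-coordinate stays fixed --- is the main obstacle, since one must still show the residual $1-\varphi_{ia}$ units of $a$ are split identically among the other two agents across the two profiles. Here \strongefficiency does the work: by \cref{prop:smallersupport} efficiency is strongly combinatorial and by \cref{lem:limited randomization} any two agents jointly randomize over at most two objects, so the admissible supports of both allocations are severely constrained. I would finish with a short case analysis on how the other two agents rank $a$ relative to $\{b,c\}$, showing that if their two $a$-splits differed one could construct a stochastic-dominance improving trade and contradict \strongefficiency; together with the invariance of $i$'s $a$-share and non-bossiness this pins the other agents' $a$-shares as a function of the unchanged data $(P_{-i},\varphi_{ia})$, establishing bounded invariance. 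With (a) and (b) in hand the corollary is immediate.
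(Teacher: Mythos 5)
Your overall architecture coincides with the paper's own justification of this corollary, which rests on exactly two observations: (i) with three agents every hierarchy of monarchies and diarchies is a random serial dictatorship over an adjacent-two set, and (ii) bounded invariance is redundant, so that \cref{thm: characterization} can be invoked with only the four stated properties. Your part (b) is a correct and essentially complete proof of (i): the enumeration of selection rules, the treatment of degenerate $\lambda$, and the identification of the two kinds of adjacent-two sets for three agents are all fine. The first three steps of your part (a) are also right: with $|A|=3$, bounded invariance has content only when $a$ is $i$'s top object; \sdspness applied in both directions pins down $\varphi_{ia}$ (it is the cumulative probability at the \emph{top} object, not the second-ranked one, that isolates the mass on $a$, but the conclusion stands); and non-bossiness closes the case $\varphi_i(P_i,P_{-i})=\varphi_i(P_i',P_{-i})$.

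The genuine gap is the last case of (a), and the mechanism you propose for it cannot work as stated. \Strongefficiency constrains an allocation at a \emph{single} profile, whereas the discrepancy you must exclude is \emph{across} the profiles $P=(P_i,P_{-i})$ and $P'=(P_i',P_{-i})$: a difference between $\varphi_{ja}(P)$ and $\varphi_{ja}(P')$ is not a trade that anyone can execute at either profile, so it cannot yield a ``stochastic-dominance improving trade'' contradicting \strongefficiency. Concretely, let $P=(abc,abc,abc)$ and suppose $\varphi(P)$ is the diarchy outcome $\varphi_i(P)=(\alpha,1-\alpha,0)$, $\varphi_j(P)=(1-\alpha,\alpha,0)$, $\varphi_k(P)=(0,0,1)$ with $\alpha\in(0,1)$. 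At $P'=(acb,abc,abc)$ your earlier steps force $\varphi_i(P')=(\alpha,0,1-\alpha)$, but \emph{both} completions $\varphi_j(P')=(1-\alpha,\alpha,0),\ \varphi_k(P')=(0,1-\alpha,\alpha)$ and $\varphi_j(P')=(0,1-\alpha,\alpha),\ \varphi_k(P')=(1-\alpha,\alpha,0)$ are \strongefficient at $P'$ (each is itself the outcome of some HMD at $P'$), while only the first respects bounded invariance. Hence no within-profile argument---not \cref{prop:smallersupport}, not \cref{lem:limited randomization}, not any case analysis of supports at $P'$ alone---can rule out the second. The remaining tools are also silent here: non-bossiness is inapplicable precisely because $\varphi_i$ changed; \sdspness constrains an agent's lottery only across changes in that agent's \emph{own} report, so it says nothing about $j$'s or $k$'s shares when $P_i$ changes; and neutrality does not map $P$ to $P'$, since relabelling $b\leftrightarrow c$ also permutes $P_j$ and $P_k$. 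What is required is a genuinely multi-profile argument: classify the \strongefficient allocations at both profiles given the common value of $\varphi_{ia}$, then use \sdspness against deviations \emph{by $j$ and $k$}, together with non-bossiness and neutrality at auxiliary profiles, to show the same agent must retain the residual share of $a$. That argument is the entire content of the redundancy claim (ii)---which the paper itself asserts in one sentence rather than proves, and flags as open beyond three agents---so as written your proposal leaves the load-bearing step of the corollary unestablished.
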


\begin{rmk}{``Triarchies''}: {\em A straightforward extension of the class of HMD \ruls is to
  include the possibility of randomizations of three agents to form
\emph{triarchies}. An easy adaptation of the  proofs that HMDs are
strategy-proof, neutral, and boundedly invariant would work to show
that such \emph{hierarchies of monarchies, diarchies,
  and triarchies} also satisfy these properties. Though not
\strongefficient, they would be \sdefficient.   }
\end{rmk}

\section{Concluding Remarks}
\label{sec:conclusion}
For over a century, economists have considered efficient an
allocation that cannot be improved from the perspective of every
agent. Equivalently, an efficient allocation is one where any change
would necessarily make someone worse off. This equivalence breaks down
when  welfare comparisons are incomplete due to sparseness of
information, in which case  the first
formulation is strictly weaker than the second.

The allocation of  objects via lotteries on the basis of ordinal
rankings is a well-studied setting
where such comparisons are incomplete. The literature has focused on
\sdefficiency, which is the weaker  form. We have argued
that the stronger form has a more natural normative interpretation.
\VM{
  \cite{EhlersMajumdarMishraSenJME2020,PyciaUnver2024,HeoManjunath2025}
  suggest that
      robustness in incentives requires ordinal mechanisms. One might
      view our results as saying that
      once robustness in both incentives \emph{and}
      efficiency severely curtails the scope for randomization.}

As we have demonstrated, an \sdefficient allocation may be quite far
from maximizing each agents' welfare without making others worse
off. This property deserves greater
scrutiny. A more systematic analysis of the degree to which it does
or does not ensure that welfare gains are realized is an important
avenue for future research.

There remain important  open questions about normative criteria that
 \strongefficiency is compatible with.
 Though \strongefficiency and symmetry are incompatible,
it is worth developing a better sense  of how much randomization, and
 fairness more generally, one can achieve with
 \strongefficient allocations in ordinal settings.

Finally, we have leveraged the strength of \strongefficiency in our
characterization result. Aside from the question of whether bounded
invariance is redundant, counterparts of this theorem for
weaker efficiency properties like ex post efficiency or even
\sdefficiency remain open.

\appendixpage
\appendix
\setstretch{1.2}

\section{Proofs from \Cref{sec:strongefficiency}}
\label{apx:strongefficiency}

\VM{
  Our assumptions ensure that $\mathcal U$ contains all vNM utility
  functions with no ties.  \citep{McLennan:JET2002} shows that
  ambiguous efficiency at $P\in \mathcal P$ is equivalent to
  efficiency for \emph{some} vNM utility profile consistent with
  $P$. We show below that unambiguous efficiency at $P$ (as defined in
  terms of stochastic dominance in \cref{sec:strongefficiency})  is
  equivalent
  efficiency for \emph{all} vNM utility profile consistent with
  $P$.
  \begin{prop}
    Let $P\in \mathcal P^N$ and $\pi\in \Pi$. The following are
    equivalent:
    \begin{enumerate}
    \item \label{sd-unamb} There is no $\pi'\in \Pi\setminus \{\pi\}$ such that for
      each $i\in N$, either $\pi_i' \risd  \pi_i$ or $\pi_i' \nisd
      \pi_i$.
    \item \label{vnm-eff} For each vNM utility profile $u$ that induces $P$, $\pi$ is
      efficient at $u$.
    \end{enumerate}
  \end{prop}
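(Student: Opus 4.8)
The plan is to prove the equivalence by establishing each implication in contrapositive form, showing that the failure of condition~\ref{sd-unamb} is equivalent to the failure of condition~\ref{vnm-eff}. The failure of condition~\ref{sd-unamb} asserts that there is some $\pi'\in\Pi\setminus\{\pi\}$ at which no agent is comparably worse off, i.e. for each $i\in N$ either $\pi_i'\risd\pi_i$ or $\pi_i'\nisd\pi_i$. The failure of condition~\ref{vnm-eff} asserts that there is a vNM profile $u$ inducing $P$ together with a $\pi''\in\Pi$ that Pareto-dominates $\pi$ at $u$, meaning $u_i(\pi_i'')\ge u_i(\pi_i)$ for all $i$ with strict inequality for at least one. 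Throughout I would use the basic dichotomy that for each $i$ exactly one of $\pi_i'\pisd\pi_i$, $\pi_i'=\pi_i$, $\pi_i\pisd\pi_i'$, or $\pi_i'\nisd\pi_i$ holds, and that the first two are exactly the cases subsumed by $\pi_i'\risd\pi_i$.

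For the direction where condition~\ref{vnm-eff} fails, take the dominating $\pi''$ and the witnessing profile $u$, and I claim $\pi''$ itself witnesses the failure of condition~\ref{sd-unamb}. Fix $i$. If it were the case that $\pi_i\pisd\pi_i''$, then \cref{ass:sd} would force $u_i(\pi_i)>u_i(\pi_i'')$, contradicting $u_i(\pi_i'')\ge u_i(\pi_i)$. Hence $\pi_i\pisd\pi_i''$ is impossible, so by the dichotomy either $\pi_i''\risd\pi_i$ or $\pi_i''\nisd\pi_i$. Since $\pi''$ strictly helps at least one agent, $\pi''\neq\pi$, so this is a genuine deviation. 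This direction is routine once \cref{ass:sd} is invoked.

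For the converse direction, where condition~\ref{sd-unamb} fails with witness $\pi'$, I would construct a single vNM profile $u$ inducing $P$ at which $\pi'$ Pareto-dominates $\pi$, choosing $u_i$ per agent. If $\pi_i'\risd\pi_i$, any $u_i$ inducing $P_i$ suffices: equality yields $u_i(\pi_i')=u_i(\pi_i)$, and $\pi_i'\pisd\pi_i$ yields $u_i(\pi_i')>u_i(\pi_i)$ by \cref{ass:sd}. If instead $\pi_i'\nisd\pi_i$, the characterization of $\nisd$ recorded in \cref{sec:sd} supplies a $u_i$ inducing $P_i$ with $u_i(\pi_i')>u_i(\pi_i)$, and I take that one. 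Assembling these independent choices gives a vNM profile $u$ inducing $P$ with $u_i(\pi_i')\ge u_i(\pi_i)$ for every $i$. Since $\pi'\neq\pi$, there is some $i_0$ with $\pi_{i_0}'\neq\pi_{i_0}$; by the hypothesis this $i_0$ satisfies $\pi_{i_0}'\pisd\pi_{i_0}$ or $\pi_{i_0}'\nisd\pi_{i_0}$, and both cases were arranged to give a \emph{strict} gain. Thus $\pi'$ Pareto-dominates $\pi$ at $u$, so $\pi$ is not efficient at $u$.

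I expect the only delicate step to be the treatment of the incomparable agents in the converse direction: one must invoke the $\nisd$-characterization to select, for each such agent \emph{independently}, a consistent utility strictly favoring $\pi_i'$, then verify that these separate choices assemble into one coherent profile inducing $P$ that still secures at least one strict improvement. The richness hypothesis (\cref{ass:rich}) is precisely what guarantees that all of these chosen vNM utilities lie in $\mathcal U$ and are therefore admissible.
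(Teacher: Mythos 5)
Your first direction is correct and is essentially the paper's own argument: if some consistent vNM profile $u$ admits a Pareto improvement $\pi''$, then $u_i(\pi_i'')\ge u_i(\pi_i)$ rules out $\pi_i \pisd \pi_i''$ for every agent (by \cref{ass:sd}), so each $\pi_i''$ is either $\risd$-better or $\nisd$-incomparable, and $\pi''\ne\pi$; this negates condition~\ref{sd-unamb}.

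The converse direction, however, has a genuine gap, and it sits exactly where you flagged ``the only delicate step.'' For an agent with $\pi_i'\nisd\pi_i$ you invoke the displayed characterization of $\nisd$ from \cref{sec:sd} to supply a consistent utility with $u_i(\pi_i')>u_i(\pi_i)$. Two problems. First, that characterization is stated in the body of the paper without proof; its nontrivial direction---incomparability implies the \emph{existence} of a consistent utility favoring each side---is precisely the mathematical content of this proposition, so citing it here is circular within the paper's logic. Second, even taken at face value, it only asserts the existence of utilities in $\mathcal U$, and $\mathcal U$ may contain non-vNM members (the Admissible Domains remark explicitly allows, e.g., mixture-averse utilities); condition~\ref{vnm-eff} concerns vNM profiles, so you need a vNM witness, which the cited statement does not deliver. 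What is missing is the explicit construction that constitutes the bulk of the paper's proof: since $\pi_i\ne\pi_i'$ and it is not the case that $\pi_i\pisd\pi_i'$, there is a $P_i$-best object $a_1$ at which the upper-contour mass of $\pi_i'$ strictly exceeds that of $\pi_i$, say $m'>m$; one then takes the vNM utility assigning $\Delta$ to $a_1$, $\epsilon$ to the object $b_1$ ranked just below $a_1$, and compressing all remaining utility differences among objects above $a_1$ and among objects below $b_1$ to less than $\epsilon$, which yields $u_i(\pi_i')-u_i(\pi_i)\ge (m'-m)\Delta-\epsilon>0$ once $\epsilon<(m'-m)\Delta$. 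With that construction inserted for each incomparable agent, your assembly step (per-agent choices, \cref{ass:rich} to place them in $\mathcal U$, and the strict gain for some agent since $\pi'\ne\pi$) goes through and coincides with the paper's proof.
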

  \begin{proof}
If \cref{vnm-eff} is not true, there is $u\in
\mathcal U^N$ such that $\pi$ is not efficient at $u$ while $u$ is
consistent with $P$. Then, there is $\pi'\in \Pi$ such that for each
$j\in N$, $u_j(\pi_j')\geq u_j(\pi_j)$, so $\pi_j = \pi_j'$ or $\pi_j$
does not stochastically dominate $\pi_j'$ at $P_j$. Moreover,  for
some $i\in N$, $u_i(\pi_i')>u_i(\pi_i)$ meaning that $\pi\neq
\pi'$. Since $\pi' \neq \pi$ and for each $i \in N$, either $\pi_i =
\pi_i$ or not $\pi_i \mathrel P^{sd}_i \pi_i'$, so \cref{sd-unamb} is not true.

Suppose \cref{sd-unamb} is not true. Then
there is some $\pi'\neq \pi$ such that for each $i\in N$
either $\pi_i' = \pi_i$ or it is not
the case that $\pi_i \mathrel P^{sd}_i \pi_i'$, with at least one
instance of the latter.
In what follows, we construct a vNM utility profile $u$ such that for
each $i\in N$,  $u_i(\pi_i') \geq u_i(\pi_i)$ with at least one strict
inequality. For $i\in N$ such that  $\pi_i' = \pi_i$, we can select
any vNM utility $u_i$ that is consistent with $P_i$. So,
to complete the proof
that \cref{vnm-eff} is not true,
we only need to construct, for each $i\in N$ such $\pi_i'\neq \pi_i$,
a vNM
utility $u_i\in \mathcal U$ such that $u_i(\pi_i') > u_i(\pi_i)$. In
what follows, we use the fact that $\pi_i$ does not stochastically
dominate $\pi_i'$ to  do so.

Since it is not the case that $\pi_i \mathrel P^{sd}_i \pi_i'$,
there is some $a\in A$ such that $\sum_{b\mathrel P_i a} \pi'_{ib}
>\sum_{b\mathrel P_i a} \pi_{ib}$. Let $a_1$ be the best
(according to $P_i$) such object.
Since $\pi_i$ sums to one, $a_1$ cannot be the least preferred object
according to $P_i$.  So, there is  $b_1\in A$  such that $a_1 \mathrel
P_i b_1$ and for no $b\in A$ is it the case that $a_1\mathrel P_i b
\mathrel P_i b_1$---that is, $b_1$ is ranked just below $a_1$.
Let $m = \sum_{a\mathrel P_i a_1} \pi_{ia}$ and $m' =
\sum_{a\mathrel P_i a_1} \pi'_{ia}$. As noted above, $m' > m$.

For $\epsilon, \Delta > 0$, such that $\Delta > \epsilon$, consider a
vNM utility
$u_i^{\epsilon\Delta}$ such that:
\begin{itemize}
\item $u_i^{\epsilon\Delta}(a_1) = \Delta$.
\item $u_i^{\epsilon\Delta}(b_1) = \epsilon$.
\item For each $a\in A$ such that
$a\mathrel P_i a_1$, $u^{\epsilon\Delta}_i(a) -u^{\epsilon\Delta}_i(a_1) <\epsilon$.
\item For each $b\in A$ such that
$b_1\mathrel P_i b$, $u^{\epsilon\Delta}_i(b_1) -u^{\epsilon\Delta}_i(b) <\epsilon$.
\end{itemize}

Then,     $u_i^{\epsilon\Delta}(\pi_i') - u_i^{\epsilon\Delta}(\pi_i) =$
\[
  \begin{array}{cl}
    &\sum_{a\in A} \pi_{ia}' u_i^{\epsilon\Delta}(a) - \sum_{a\in A} \pi_{ia}u_i^{\epsilon\Delta}(a) \\
    = &\left[\sum_{a\mathrel R_i a_1} \pi'_{ia}u_i^{\epsilon\Delta}(a)
        +\sum_{b_1 \mathrel R_i b} \pi'_{ib}u_i^{\epsilon\Delta}(b)
        \right]
        -
        \left[
        \sum_{a\mathrel R_i a_1} \pi_{ia}u_i^{\epsilon\Delta}(a)
        -\sum_{b_1 \mathrel R_i b} \pi_{ib}u_i^{\epsilon\Delta}(b)
        \right]\\
    \geq & \left[\sum_{a\mathrel R_i a_1} \pi'_{ia}\Delta
           +\sum_{b_1 \mathrel R_i b} \pi'_{ib}0
           \right]
           -
           \left[
           \sum_{a\mathrel R_i a_1} \pi_{ia}(\Delta+\epsilon)
           +\sum_{b_1 \mathrel R_i b} \pi_{ib}u_i^{\epsilon\Delta}(b_1)
           \right]\\
    \geq & \Delta\sum_{a\mathrel R_i a_1} \pi'_{ia}
           -
           (\Delta+\epsilon)\sum_{a\mathrel R_i
           a_1} \pi_{ia}
           -
           \epsilon\sum_{b_1 \mathrel R_i b} \pi_{ib}
    \\
    =& m'\Delta - m(\Delta+\epsilon) - (1-m)\epsilon\\
    = & (m' - m)\Delta - \epsilon.
  \end{array}
\]
Since $m' > m$, for any $\Delta$ if $\epsilon < (m' - m)\Delta$, we
have  $u_i^{\epsilon\Delta}(\pi_i') >u_i^{\epsilon\Delta}(\pi_i')$.
  \end{proof}
  }

\begin{proof} [Proof of \Cref{lem:limited randomization}]Suppose by
  contradiction that there are $P\in \mathcal{P}^N$, $\pi\in \Pi$ that
  is \strongefficient at $P$, and $i,j\in N$ such that $|supp(\pi_i)\cap
  \supp(\pi_j)| \ge 3$. For each $a,b\in \supp(\pi_i)\cap
  \supp(\pi_j)$, $a\mathrel P_i b$ if and only if $a\mathrel
  P_j b$. Otherwise,
  say if $a\mathrel P_i b$ but  $b\mathrel P_j a$, then agent $i$ can transfer a positive probability
  of $b$ to agent $j$, receiving the same amount of $a$ from $j$,
  yielding a Pareto-improvement. This contradicts
  \robustefficiency. Therefore, there are distinct $a,b,c\in
  \supp(\pi_i)\cap  \supp(\pi_j)$ such that
  $a\mathrel P_i b\mathrel P_i c$ and   $a \mathrel P_j b\mathrel P_j
  c$. Agent $i$ can transfer a   positive probability
  of $a$ and $c$ to agent $j$, receiving the same total amount of $b$
  from $j$, yielding a possible Pareto-improvement (the result of the
  trade would not be comparable by $i$ or $j$ to their lottery at
  $\pi$ in terms of stochastic dominance). This again
  contradicts  \robustefficiency.\end{proof}

\begin{proof}[Proof of \cref{prop:smallersupport}] Before proving
  \cref{prop:smallersupport}, we start with some notation.
For an agent $i\in N$, define a
\textbf{probability shift} as $\eta_i\in [-1,1]^A$  such that
$\sum_{a\in A} \eta_{ia} = 0$. Given $\pi_i\in \Delta(A)$, we say that
the probability shift $\eta_i$ is a \textbf{feasible shift from \boldmath $\pi_i$}
if $\pi_i+ \eta_i\in \Delta(A)$. This definition implies that if
$\nu_i$ is a feasible shift from $\pi_i$, then for each $a\in A$ such
that  $\eta_{ia} < 0$, $-\eta_{ia} \leq \pi_{ia}$.
Given a preference relation $P_i\in \mathcal P$ for $i\in N$, the
probability shift $\eta_i$ is \textbf{not sd-worsening \boldmath at $P_i$} if there is
$\underline a\in A$ such that $\sum_{a: a\mathrel P_i \underline a}
\eta_{ia} > 0$. By this definition, if $\eta_i$ is a feasible shift
from $\pi_i$ and is not sd-worsening at $P_i$, then it is not the case
that $\pi_i \pisd (\pi_i + \eta_i)$. This is because for some
$\underline a\in A$, $\sum_{a: a\mathrel P_i \underline a} \pi_{ia} <   \sum_{a: a\mathrel
    P_i \underline a} (\pi_{ia}  + \eta_{ia}) = \sum_{a: a\mathrel P_i \underline a}\pi_{ia} + \sum_{a: a\mathrel P_i \underline a}\eta_{ia}.$

For any pair $\pi_i, \pi_i' \in \Delta(A)$, $\eta_i = \pi_i' - \pi_i$
is, by definition, a feasible shift from $\pi_i$. Moreover, unless
$\pi_i \pisd \pi_i'$,  $\eta_i$ is not
sd-worsening at $P_i$. In fact, for any $\varepsilon\in(0,1)$,
$\varepsilon \eta_i$ is a feasible shift from $\pi_i$ that is not
sd-worsening.

\begin{cl}\label{cl: shift exists}
  Let $\pi_i, \pi_i'\in \Delta(A)$ be such that $\supp(\pi_i) \subseteq
  \supp(\pi_i')$. If $\eta_i$ is a feasible shift from $\pi_i$ then
  there is $\varepsilon > 0$ such that $\varepsilon\eta_i$
  is a feasible shift from $\pi_i'$.
\end{cl}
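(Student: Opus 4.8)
The plan is to exploit the fact that feasibility of a shift from a lottery is constrained only at the coordinates where the shift is negative, and to show via the support hypothesis that all such coordinates lie in $\supp(\pi_i')$. First I would record the elementary bookkeeping: since $\sum_{a\in A}\eta_{ia}=0$ and $\sum_{a\in A}\pi'_{ia}=1$, the vector $\pi_i'+\varepsilon\eta_i$ sums to $1$ for every $\varepsilon$, and $\varepsilon\eta_i\in[-1,1]^A$ for every $\varepsilon\in(0,1]$ because $\eta_i\in[-1,1]^A$. Hence $\varepsilon\eta_i$ is a feasible shift from $\pi_i'$ precisely when $\pi'_{ia}+\varepsilon\eta_{ia}\ge 0$ for each $a\in A$ and $\varepsilon\le 1$.

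Next I would split the coordinates by the sign of $\eta_{ia}$. At every $a$ with $\eta_{ia}\ge 0$ the inequality $\pi'_{ia}+\varepsilon\eta_{ia}\ge 0$ holds for all $\varepsilon>0$, so nothing needs checking there; the only binding constraints are at coordinates with $\eta_{ia}<0$. This is exactly where the support hypothesis enters: since $\eta_i$ is a feasible shift from $\pi_i$, we have $\pi_{ia}+\eta_{ia}\ge 0$, so $\pi_{ia}\ge -\eta_{ia}>0$, giving $a\in\supp(\pi_i)$. By assumption $\supp(\pi_i)\subseteq\supp(\pi_i')$, hence $a\in\supp(\pi_i')$, that is $\pi'_{ia}>0$.

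Finally, as $A$ is finite there are only finitely many such coordinates, and each has $\pi'_{ia}>0$ together with $-\eta_{ia}>0$, so I would set
\[
\varepsilon=\min\Bigl\{\,1,\ \min_{a:\,\eta_{ia}<0}\tfrac{\pi'_{ia}}{-\eta_{ia}}\,\Bigr\}>0,
\]
which forces $\pi'_{ia}+\varepsilon\eta_{ia}\ge 0$ at every binding coordinate and therefore $\pi_i'+\varepsilon\eta_i\in\Delta(A)$, as required. There is no genuine obstacle here: the single substantive step is the implication $\eta_{ia}<0\Rightarrow a\in\supp(\pi_i)\subseteq\supp(\pi_i')$, which is precisely the role of the support containment assumption, and everything else is the taking of a finite positive minimum.
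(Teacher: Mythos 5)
Your proof is correct and follows essentially the same argument as the paper: identify the binding coordinates as those with $\eta_{ia}<0$, use feasibility from $\pi_i$ together with $\supp(\pi_i)\subseteq\supp(\pi_i')$ to get $\pi'_{ia}>0$ there, and take $\varepsilon$ to be the minimum of the ratios $\pi'_{ia}/(-\eta_{ia})$. Your explicit cap $\varepsilon\le 1$ is a small refinement the paper omits (it guarantees $\varepsilon\eta_i\in[-1,1]^A$ directly and covers the degenerate case where $\eta_i$ has no negative coordinate, so the inner minimum is over an empty set), but the substance is identical.
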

\begin{proof}
  Let $\varepsilon = \min\{\frac{\pi_{ia}'}{-\eta_{ia}}:
        a\in A,\eta_{ia} < 0\}$. Since $\supp(\pi_i) \subseteq \supp(\pi_i')$ and $\eta_i$ is a
  feasible shift from $\pi_i$, for each $a\in A$ such that $\eta_{ia}
  < 0$, we know that $\pi_{ia} > 0$ and therefore $\pi_{ia}' >
  0$. Therefore, $\varepsilon > 0$. Moreover, $-\varepsilon \eta_{ia}
  \leq \pi_{ia}'$ meaning that $\pi_{ia} + \varepsilon \eta_{ia} \geq
  0$. Thus, for all $a\in A, \pi_{ia} + \varepsilon \eta_{ia} \geq
  0$.
  By definition of a feasible shift, $\sum_{a\in A} (\pi_{ia}' +
  \varepsilon\eta_{ia}) = \sum_{a\in A} \pi_{ia}' +\sum_{a\in A}
  \varepsilon\eta_{ia}  = 1 + 0 =1$. Thus, for each $a\in A, \pi_{ia}' +
  \varepsilon\eta_{ia} \leq 1$, so we conclude that $\pi_{i}' +
  \varepsilon\eta_{i}\in \Delta(A)$ and therefore $\varepsilon\eta_i$
  is a feasible shift from $\pi_i'$.
\end{proof}

\begin{cl}\label{cl: shift together}
For each pair $\pi, \pi'\in \Pi$, if $\eta = \pi -
\pi'$, then for any $\varepsilon \in [0,1]$, $\pi' + \varepsilon
\eta \in \Pi$.
\end{cl}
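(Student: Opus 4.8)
The plan is to recognize that this claim is simply a restatement of the convexity of the set of allocations $\Pi$. The key first step is the algebraic identity
\[
  \pi' + \varepsilon \eta = \pi' + \varepsilon(\pi - \pi') = (1-\varepsilon)\pi' + \varepsilon \pi,
\]
which rewrites $\pi' + \varepsilon\eta$ as a convex combination of $\pi$ and $\pi'$ with weights $1-\varepsilon$ and $\varepsilon$. For $\varepsilon \in [0,1]$, both weights are nonnegative and sum to one, so it suffices to check that $\Pi$ is closed under such combinations.

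From there, I would verify the three defining conditions of membership in $\Pi$ entry by entry. First, for each $i \in N$ and $a \in A$, the entry $(1-\varepsilon)\pi'_{ia} + \varepsilon\pi_{ia}$ is a nonnegative combination of two nonnegative numbers, hence nonnegative. Second, summing over objects for a fixed agent $i$ gives $(1-\varepsilon)\sum_{a\in A}\pi'_{ia} + \varepsilon\sum_{a\in A}\pi_{ia} = (1-\varepsilon)\cdot 1 + \varepsilon\cdot 1 = 1$, since each of $\pi_i,\pi'_i \in \Delta(A)$. Third, summing over agents for a fixed object $a$ gives $(1-\varepsilon)\sum_{i\in N}\pi'_{ia} + \varepsilon\sum_{i\in N}\pi_{ia} = 1$ by the column-sum (feasibility) condition satisfied by both $\pi$ and $\pi'$. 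These three facts together establish that $\pi' + \varepsilon\eta$ is a bistochastic matrix, i.e., lies in $\Pi$.

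There is no real obstacle here: the statement is routine once the combination is recognized as convex, and the only content is that the row-sum, column-sum, and nonnegativity constraints defining $\Pi$ are all affine and therefore preserved under convex combination. I would keep the proof to these few lines, emphasizing the identity $\pi' + \varepsilon\eta = (1-\varepsilon)\pi' + \varepsilon\pi$ as the crux, since the claim is presumably used downstream as a convenient bookkeeping device for interpolating between two allocations along the direction $\eta$.
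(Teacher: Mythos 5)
Your proof is correct and takes essentially the same route as the paper's: both amount to verifying that the defining constraints of $\Pi$ (nonnegativity, agent-wise sums, and object-wise sums) are preserved under the combination $\pi' + \varepsilon(\pi-\pi') = (1-\varepsilon)\pi' + \varepsilon\pi$, with the column-sum computation being the identical linearity calculation. If anything, your write-up is slightly more self-contained, since the paper disposes of nonnegativity and the agent-wise sums by pointing back to its earlier ``feasible shift'' discussion and only spells out the object-wise sums.
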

\begin{proof}
  Let $\overline \pi = \pi' + \varepsilon \eta$.
  As argued above,  for each $i\in N$, $\sum_{a\in A}\overline
  \pi_{ia} = 1$. It remains to show that for each $a\in A$,
  $\sum_{i\in N}\overline \pi_{ia} = 1$. For each $a\in A$,

  \[
    \begin{array}{rcl}
      \sum_{i\in N}\overline \pi_{ia} & = &\sum_{i\in N} (\pi'_{ia} +
                                            \varepsilon\eta_{ia}) =\sum_{i\in N} \pi'_{ia} +
                                          \varepsilon\sum_{i\in
                                          N}\eta_{ia} = 1 +  \varepsilon\sum_{i\in
                                            N}(\pi_{ia} - \pi_{ia}')\\& =&  1 +  \varepsilon(\sum_{i\in
                                            N}\pi_{ia} - \sum_{i\in
                                            N}\pi_{ia}')=1 +  \varepsilon(1-1)=1.
    \end{array}\]$\quad$\\[-35pt]\end{proof}
The following observation is a direct consequence of \cref{cl: shift
  exists,cl: shift together}.
\begin{cl}\label{cl: shift transfer}
  For each pair $\pi, \pi'\in \Pi$, such that for each $i\in N$,
  $\supp(\pi_i') \subseteq \supp(\pi_i)$, if $\eta = (\eta_i)_{i\in N}$ is a
  profile of feasible shifts from $\pi_i'$ such that $\pi' + \eta \in
  \Pi$, then there is $\varepsilon > 0$ such that $\pi+
  \varepsilon\eta\in \Pi$.
\end{cl}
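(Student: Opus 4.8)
The plan is to combine the per-agent feasibility result of \cref{cl: shift exists} with the column-sum bookkeeping underlying \cref{cl: shift together}, the only genuinely new ingredient being to convert the agent-by-agent scalars into a single uniform $\varepsilon$.

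First, I would treat each agent $i\in N$ separately and apply \cref{cl: shift exists} with the lottery playing the role of the smaller-support ``source'' being $\pi'_i$, since here $\supp(\pi'_i)\subseteq \supp(\pi_i)$. As $\eta_i$ is a feasible shift from $\pi'_i$, \cref{cl: shift exists} yields a scalar $\varepsilon_i > 0$ with $\varepsilon_i\eta_i$ a feasible shift from $\pi_i$, i.e. $\pi_i + \varepsilon_i\eta_i\in\Delta(A)$. Concretely $\varepsilon_i = \min\{\pi_{ia}/(-\eta_{ia}) : a\in A,\ \eta_{ia} < 0\}$ works, and I would record that for every $\varepsilon\in(0,\varepsilon_i]$ the vector $\pi_i+\varepsilon\eta_i$ remains in $\Delta(A)$: coordinates with $\eta_{ia}\geq 0$ only grow, while those with $\eta_{ia}<0$ are nonnegative at both endpoints of $[0,\varepsilon_i]$ and hence throughout by linearity.

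Next, since $N$ is finite, I would set $\varepsilon = \min_{i\in N}\varepsilon_i > 0$ and verify $\pi+\varepsilon\eta\in\Pi$ by checking nonnegativity, row sums, and column sums. Nonnegativity and the row condition $\sum_{a\in A}(\pi_{ia}+\varepsilon\eta_{ia})=1$ hold for each $i$ by the previous paragraph together with $\sum_{a\in A}\eta_{ia}=0$, the defining property of a feasible shift. The column condition is where the content of \cref{cl: shift together} enters: because both $\pi'$ and $\pi'+\eta$ lie in $\Pi$, the same computation as in the proof of \cref{cl: shift together} gives $\sum_{i\in N}\eta_{ia} = \sum_{i\in N}(\pi'_{ia}+\eta_{ia}) - \sum_{i\in N}\pi'_{ia} = 1-1 = 0$ for each $a\in A$. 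Hence $\sum_{i\in N}(\pi_{ia}+\varepsilon\eta_{ia}) = \sum_{i\in N}\pi_{ia} + \varepsilon\sum_{i\in N}\eta_{ia} = 1+0 = 1$, using $\pi\in\Pi$.

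The only real subtlety---and the single place requiring more than a routine check---is the passage from the agent-specific $\varepsilon_i$ to a common $\varepsilon$ that preserves bistochasticity simultaneously. This works precisely because the row and nonnegativity constraints decouple across agents, so the minimum of finitely many positive scalars suffices, while the column constraints are linear in $\varepsilon$ and vanish identically thanks to $\sum_{i\in N}\eta_{ia}=0$. I expect no further obstacles, so the statement follows as a direct consequence of \cref{cl: shift exists,cl: shift together}.
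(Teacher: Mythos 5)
Your proposal is correct and takes exactly the route the paper intends: the paper offers no separate proof, stating only that the claim is "a direct consequence" of \cref{cl: shift exists,cl: shift together}, and your argument is precisely the omitted derivation (apply \cref{cl: shift exists} agent by agent with $\pi_i'$ as the small-support source, pass to a uniform $\varepsilon=\min_i\varepsilon_i$ using convexity of $\Delta(A)$, and restore column sums via the $\sum_{i\in N}\eta_{ia}=0$ computation from the proof of \cref{cl: shift together}). Your observation that the per-agent row constraints decouple while the column constraint is linear and vanishes identically is exactly the point that makes the claim "direct."
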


We now prove \cref{prop:smallersupport}.
 Let $P\in\mathcal P^N$ and let $\pi\in\Pi$ be an \strongefficient
 allocation at~$P$. Let $\pi'\in \Pi$ be such that for each $i\in N$,
  $\supp(\pi_i') \subseteq \supp(\pi_i)$.  If $\pi'$ is not \strongefficient at~$P$, then there is $\overline \pi\in   \Pi\setminus \{\pi'\}$
  such that for each $i\in N$, it is not the case that  $\pi_i' \pisd
  \overline \pi_i$. For each $i\in N$, let $\eta_i = \overline \pi_i -
  \pi_i'$. Therefore, for each $i\in N$, $\eta_i$ is not sd-worsening
  at $P_i$.  By \cref{cl: shift transfer}, there is $\varepsilon > 0$
  such that $\hat \pi = \pi + \varepsilon \eta\in \Pi$. Since, for each
  $i\in N$, $\eta_i$  is not sd-worsening at $P_i$, it
  is not the case that   $\pi_i\pisd \hat\pi_i$. This means $\pi$ is
  not \strongefficient, a contradiction.
\end{proof}

\begin{proof}[Proof of \Cref{prop: adjacent-two}]
  We  prove that $\lambda$ randomizes over an adjacent-2 set or places
  full probability a single ordering. Consider
  a pair $\succ,\succ'\in \supp(\lambda)$ that coincide up to the
  $(t-1)$-th dictator but differ in the $t$-th dictator (where $t\in
  \{1,\cdots,n-1\}$). If there is no such pair, $\lambda$ is a
  degenerate lottery over $\mathcal O$ and  we are done. If there are
  multiple pairs of such $\succ$ and $\succ'$, choose a pair with the
  smallest $t$.
  Let $i$ and $j$ be the $t$-th dictator at $\succ$ and $\succ'$,
  respectively. Suppose that agent $k\neq i$ is the $(t+1)$-th
  dictator at $\succ'$ as follows:\renewcommand{\arraystretch}{0.8}
  \begin{center}
    ~~~~~~~~$t^\text{th}$ dictator\\
    $\left(
      \begin{array}{llll} & & & \downarrow \\
        \succ& :&\cdots,&\underline{i},
                          \cdots\\
        \succ'&:& \cdots,&\underline{j},~k,\cdots,i,\cdots
      \end{array}
    \right)$
  \end{center}

  Let $P\in \mathcal P^N$ be a profile of identical
  preferences. Let $a$ and $b$  be the best two remaining objects, in
  that order, after
  the first $(t-1)$ agents have been allocated lotteries. Then $i$
  receives
  $a$ with positive probability and some $c$ that every agent ranks
  below $b$. Moreover, $k$ receives $b$ with positive
  probability. This contradicts \strongefficiency, because, with all other agents' lotteries fixed, agents~$i$ and $k$ may exchange probabilities of~$a$, $c$ and $b$ between them to obtain lotteries that are uncomparable to what they initially had. Therefore, $i$ must be the
$(t+1)$-th dictator at $\succ'$ and by a symmetric argument, $j$
must be the $(t+1)$-th dictator at $\succ$.

Next, suppose that there are at least two pairs of adjacent dictators
whose positions are switched in $\succ$ and $\succ'$. That is, there
are $s, t\in \{1,\cdots,n-1\}$ such
that
\begin{center}~~~~~~~~$s^\text{th}$ dictator~~~~~$t^\text{th}$
  dictator\\
  $\left(
    \begin{array}{lllll}
                     & & & \downarrow  & \downarrow \\
      \succ&: &\cdots,&\underline{i},~j
                        \cdots,&~\underline{k},~l,\cdots\\
      \succ'&: & \cdots,&\underline{j},~i,\cdots,&~\underline{l},~k,\cdots
    \end{array}
  \right)$
\end{center}
Let $P\in\mathcal P^{N}$ be such
  that $i$ ranks $a$ first and $b$ second, $j$ ranks $a$ first and
  $c$ second,  both $k$ and $l$ rank $b$ first, $c$ second, and
  $d$ third, and all  remaining agents
identically rank $a,b,c,$ and $d$ below the other objects. At
$\rp(\lambda)$, both agents $k$ and $l$ are
allocated positive probabilities of $b$, $c$, and $d$. This
contradicts \cref{lem:limited
  randomization}. Altogether, we
have that $\lambda$ can only have support of an
adjacent-2 set for \strongefficiency of $\rp(\lambda)$.

We conclude by showing that if $\lambda$ has support of an adjacent-2
set or a singleton, then it is \strongefficient. The case of a
singleton is trivial, so suppose that $\lambda$ places probability on
$\succ$ and $\succ'$ that interchange the positions of $i$ and $j$ in
the $t^{\text{th}}$ and $(t+1)^{\text{th}}$ positions ($i$ appears
$t^{\text{th}}$ in $\succ$ and $(t+1)^{\text{th}}$ in $\succ'$). Fix
any $P\in\mathcal P^N$. Suppose $\pi\in \Pi$ is such that for each $k
\in N,$ either  $\pi_k\risd \rp(\lambda)_k$ or $\pi_k \nisd
\rp(\lambda)_k$. For each of $i^1, \dots,
i^{t-1}$, who  appear in $\succ$ and $\succ'$ before $i$ or $j$,
trivially they receive the same lottery from $\pi$ and
$\rp(\lambda)$---$i^1$ gets their favorite object with certainty,
$i^2$ gets their favorite object except for $i^1$'s favorite object
with certainty, and so on.
Feasibility implies that $\pi$ cannot give $i$ or $j$ any objects
assigned to the first $t-1$ agents with positive probability. If $i$
and $j$ have the same favorite remaining object  $a$, unless
$\pi_{ia} \geq \rp(\lambda)_{ia}$ and $\pi_{ja} \geq
\rp(\lambda)_{ja}$  we have a contradiction of $\pi$'s
definition. Thus, $\pi_{ia} = \rp(\lambda)_{ia}$ and $\pi_{ja} =
\rp(\lambda)_{ja}$. Moreover, since $\rp(\lambda)$ gives $i$ and $j$
their next most preferred object with their remaining probability,
we can similarly conclude that $\pi_i = \rp(\lambda)_i$ and $\pi_j =
\rp(\lambda)_j$. On the other hand, if $i$ and $j$ have different most
preferred objects, then since $\rp(\lambda)$ gives them that object
with positive probability, we again conclude that $\pi_i =
\rp(\lambda)_i$ and $\pi_j = \rp(\lambda)_j$.  Finally, for all agents
that appear after $i$ and $j$, in order that they appear
$\rp(\lambda)$ gives them the unambiguously best lottery subject to
the feasibility constraint implied by what prior agents receive. Thus,
by definition of $\pi$ it must be that it gives them the same lottery
as $\rp(\lambda)$. Thus, $\rp(\lambda)$ is \strongefficient.
\end{proof}
\section{Formal Definition of HMD}\label{apx:formal alg}
\subsection{A Recursive Algorithm}
\label{sec: recursive}

Before defining the general algorithm,  we give some preliminary
definitions.
A \textbf{supply vector} is  $s\in S= [0,1]^A$.
Denote by $A(s)$ the set of objects available in positive
amounts at $s$ and by $\underline A(s)$ those that are only
fractionally available. {That is, $A(s) = \{a\in
  A: s_a>0\}$ and $\underline A(s) = \{a\in
  A: 0 < s_a  < 1\}$.}
A \textbf{partial allocation} is an $|N|\times|N|$ substochastic
matrix $\pi$ such that every agent receives an allocation in $\Pi$ or
nothing. That is,  for each $i\in N$, either $\pi_i = \vec 0$ or
$\sum_{a\in A}\pi_{ia} = 1$ and for each $a\in A$, $\sum_{i\in
  N}\pi_{ia} \leq 1$. Let {\boldmath$\overline \Pi$} be the set of all
partial allocations. A partial allocation $\pi\in \overline \Pi$  is
\textbf{feasible} at   $s$ if $\sum_{i\in N}\pi_i \leq s$.
A \textbf{generalized  \rul} is a mapping $f:S\times
\mathcal P^N \to \overline\Pi$ such that for each $s\in S$ and
  $P\in\mathcal
  P^N$, $f(s,P)\in \overline \Pi$ is feasible at $s$.
 Let  {\boldmath$\mathcal F$} be the set of all generalized \ruls.

If we start with a supply vector $s\in S$, the residual supply after
allocating $\pi\in\overline \Pi$ is $r(s,\pi) \in S$ such that for
each $a\in A$,  {\boldmath$r_a(s,\pi)$}$\ = s_a - \sum_{i\in N}\pi_{ia}$.

The algorithm uses generalized \ruls to compute a
complete allocation part by part.
It  keeps track of these parts via a \textbf{history}, which is a
(possibly empty) list of
partial allocations $\eta = (\pi^1, \dots, \pi^K)$ such that for each
$k = 1,\dots K$, $\pi^k \in \overline \Pi$ and
$\sum_{k=1}^K \pi^k \in \overline \Pi$. Let  {\boldmath$N(\eta)$}$\ =\{i\in N: \text{
  there is }k\text{ such that }\pi^k_i \neq  \vec 0\}$ be the set of agents
allocated to along the history $\eta$.
Having collected the history
$\eta$, the next partial allocation can be any $\pi^{K+1}\in \overline
\Pi$ such that for each $i\in N(\eta), \pi_i^{K+1} = \vec 0$. That is, only partial allocations that
allocate to agents not allocated to by any part of $\eta$ can occur
after $\eta$. We append $\pi^{K+1}$ to form a new history $(\pi^1,
\dots, \pi^K, \pi^{K+1})$ and denote it by $\eta$
{\boldmath$\doubleplus$} $\pi^{K+1}$.
Let  {\boldmath$\mathcal H$} be the set of all histories.
The set of  \textbf{terminal histories} is  {\boldmath$\mathcal H^T$}$\ = \{\eta\in
\mathcal H: N(\eta) = N\}$. {Note that for each $(\pi^1,
  \dots, \pi^K)\in \mathcal   H^T, \sum_{k=1}^K \pi^k \in \Pi$ is an
  allocation, and not just a   partial one.} The set of
\textbf{non-terminal histories} is
 {\boldmath$\mathcal H^{NT}$}$\ =
\mathcal H\setminus \mathcal H^T$.
A \textbf{sequencing rule}, $\sigma:\mathcal H^{NT} \to \mathcal F$,
maps each non-terminal history to a partial   \rul. We require that
for each  $\eta \in \mathcal H^{NT}$, each $i\in N(\eta)$, and each $\pi\in
\text{range}(\sigma(\eta))$, $\pi_i = \vec 0$. That is, $\sigma(\eta)$
does not  allocate to agents who have already been allocated to along
$\eta$. To summarize:
\begin{enumerate}
\item Each $f\in \mathcal F$ solves a part of the allocation problem.\\[-25pt]
\item A history $\eta\in \mathcal H$ keeps track of accumulated  parts
  of the solution.\\[-25pt]
\item The sequencing rule $\sigma$  tells us what partial \rul  solves
  the next part.\\[-15pt]
\end{enumerate}

We now put these components together to recursively  compute an
allocation for any preference profile.
Define the function $\Psi: \mathcal H\times S \times\mathcal P^N
\to \overline\Pi$ by setting, for each $(\eta, s, P) \in \mathcal
H\times S\times \mathcal P^N$,{\renewcommand{\arraystretch}{0.8}
\[
  \Psi(\eta, s, P) = \left\{
    \begin{array}{ll}
      \vec 0 &~~\text{if }\eta\in \mathcal H^T;\\

      \pi + \Psi(\eta\doubleplus \pi,  r(s, \pi), P) \text{ where }       \pi = \sigma(\eta)(s, P)& ~~\text{if }\eta\in \mathcal H^{NT}.\\
    \end{array}
    \right.
\]

If we start with the empty history and the full supply vector and
then we have a \rul for each  sequencing rule $\sigma$. We denote this
\rul by  $\varphi^\sigma$, which we define by setting  for each $P\in
\mathcal P^N$, $\varphi^{\sigma}(P) = \Psi((), \vec 1, P)$.

To fix ideas, we demonstrate how one can describe a serial dictatorship
in the recursive language we have laid out above.
\begin{ex}[Recursive formulation of serial dictatorship]{\em
In \cref{sec:random
  dict}, we described serial dictatorship
allocations where agents select
their best objects according to a priority order. We now articulate
these as \ruls in the language of this section.
Consider  $\succ\in \mathcal O$ and let  $1, \dots, n$ be a labeling
of the agents in decreasing order of their priority according to
$\succ$. For each history $\eta\in \mathcal H$, let $\len(\eta)$ be the length
of $\eta$. {That is,  $\len(()) = 0$ and  $\len(\pi^1,\dots,
  \pi^K) = K$}.
Let $\sigma:\mathcal H^{NT} \to \mathcal F$ be such that for each
$\eta\in \mathcal H^{NT}$, $\sigma(\eta) = f$ where  $f(s, P)$ allocates to
$i= \len(\eta)+1$ the best possible lottery according to $P_i$ given
the  supply $s$.

As a concrete example, take  $N = \{1,2,3,4\}$ and $A=
\{a,b,c,d\}$ and the following preference profile $P\in \mathcal P^N$:
\[
  \begin{array}{cccc}
    P_1&P_2&P_3&P_4\\
    \hline\\[-10pt]
    a & a & b & a   \\
    d &  b & c& b  \\
    c &  d& a & c  \\
    b &  c & d & d \\
  \end{array}
\]
We write down the input  history and residual supply as well as the
calculated partial allocation and return value of each recursive call:
\begin{footnotesize}
{\renewcommand{\arraystretch}{1.3}
  \begin{longtable}{|c||c|c|c|}
    \hline
  Call&Input ($\eta, s$)& {\renewcommand{\arraystretch}{0.9}  \begin{tabular}{c}Calculated partial\\ allocation ($\pi$)\end{tabular}} &
                                                                  {\renewcommand{\arraystretch}{0.9}  \begin{tabular}{c}Return
                                                                    value\\
                                                                  ($\pi
    + \Psi(\eta \doubleplus \pi,  r(s, \pi), P)$ or $\vec{0}$) \end{tabular}}\\\hline
  \hline
  1 &  $(), \vec{1}$ & $(a, \vec 0,\vec 0,\vec 0)$ & $(a,\vec 0,\vec
                                                     0,\vec 0) +
                                                     \Psi((a,\vec
                                                     0,\vec 0,\vec 0),
                                                     (0,1,1,1), P)$\\
  \hline
  2 & $((a,\vec 0,\vec 0,\vec 0)), (0,1,1,1)$ & $(\vec 0,b,\vec 0,\vec
                                                0) $ &
                                                       {\renewcommand{\arraystretch}{0.9} $(\vec 0,b,\vec 0,\vec 0) + \Psi\left(\begin{array}{c}((a,\vec 0,\vec 0,\vec 0),\\(\vec 0,b,\vec 0,\vec 0)),\end{array} (0,0,1,1), P\right)$}\\
  \hline
  3 & {\renewcommand{\arraystretch}{0.9} $\begin{array}{c}((a,\vec 0,\vec 0,\vec 0),\\(\vec 0,b,\vec 0,\vec 0)),\end{array} (0,0,1,1)$}
      & $(\vec 0,\vec 0,c,\vec 0) $ &
                                      {\renewcommand{\arraystretch}{1}
                                      $(\vec 0,\vec 0,c,\vec 0)
                                      +\Psi\left(\begin{array}{c}((a,\vec
                                                   0,\vec 0,\vec
                                                   0),\\(\vec 0,b,\vec
                                                   0,\vec 0),\\(\vec
                                                   0,\vec 0,c,\vec
                                                   0)),\end{array}
    (0,0,0,1), P\right)$}\\
  \hline
  4 & {\renewcommand{\arraystretch}{0.9} $\begin{array}{c}((a,\vec 0,\vec 0,\vec 0),\\(\vec 0,b,\vec 0,\vec 0),\\(\vec 0,\vec 0,c,\vec 0)),\end{array}
    (0,0,0,1)$} &$(\vec 0,\vec 0,\vec 0,d) $ &  {\renewcommand{\arraystretch}{1}$(\vec 0,\vec 0,\vec 0,d) +\Psi\left(\begin{array}{c}((a,\vec 0,\vec 0,\vec 0),\\(\vec 0,b,\vec 0,\vec 0),\\(\vec 0,\vec 0,c,\vec 0),\\ (\vec 0,\vec 0,\vec 0,d)),\end{array} \vec 0,P\right)$}\\
  \hline
  5 & {\renewcommand{\arraystretch}{1} $\begin{array}{c}((a,\vec 0,\vec 0,\vec 0),\\(\vec 0,b,\vec 0,\vec 0),\\(\vec 0,\vec 0,c,\vec 0),\\
         (\vec 0,\vec 0,\vec 0,d)),\end{array} \vec 0$} & &$\vec{0} $ \\
    \hline
  \end{longtable}}\end{footnotesize}
\noindent The initial call of $\Psi((),\vec{1},P)$ returns the final
 allocation  $\varphi(P)=\Psi((),\vec 1, P)=(a,b,c,d)$. \hfill$\blacksquare$
  }
\end{ex}

\subsection{Monarchical and Diarchical Partial Allocation \Ruls}

It is not for  every  $\sigma$ that $\varphi^\sigma$
satisfies all of the properties on our list. We now define conditions  on
sequencing rules that are necessary and sufficient for all of the
properties to be satisfied. We consider restrictions on the types of
generalized \ruls that can ever be chosen as well as the degree
to which the history can influence the choice of generalized \rul.

\paragraph{Permissible  partial allocation \ruls}
We first restrict the range of the sequencing rule to partial
allocation \ruls that only allocate to  one or two
agents at a time.  We  define these formally here.

We start by introducing  notation to depict an agent's most
preferred lottery among those feasible at a given supply vector. For each $i\in N$, each
$P\in \mathcal P^N$, and each $s\in S$, the \textbf{top allocation for
  \boldmath$i$ subject to
  $s$} is $\tau^i(s, P) = \pi_i$
such that  listing the objects as $a_1$ through $a_{|A|}$ in decreasing
  order of $i$'s preference $P_i$,  $\pi_{ia_1} =
 s_{a_1}, \pi_{ia_2} = \min\{1 - \pi_{ia_1},
  s_{a_2}\}, \dots,  \pi_{ia_{|A|}} = \min\{1 - (\pi_{ia_1} +
  \dots + \pi_{ia_{|A|-1}}), s_{a_{|A|}}\}$. In words, we compute
  $\pi_i$ by greedily allocating as much of each object as possible to
  $i$ in
  decreasing order of $i$'s preference. This is the
  \emph{unambiguously best} lottery that we can give $i$, given the
  supply vector $s$ as it stochastically dominates any other lottery
  feasible at $s$.

The first kind of partial allocation \ruls that our list of properties
allows allocate to a single agent.  For each $i\in N$, the
\textbf{monarchy of \boldmath$i$} is the
partial allocation \rul { \boldmath$m^i$} such that for each $(s, P) \in
S\times \mathcal P^N$ and each $j\in N$,
\[
  m^i_j(s,P) = \left\{
    \begin{array}{cl}
      \tau^i(s,P)& \text{if }j=i\\
      \vec 0&\text{otherwise.}
    \end{array}
    \right.
\]
Let {\boldmath$\mathcal M$} be the set of all monarchical partial  allocation \ruls.

The next type of partial allocation \rul allocates to a pair of
agents. For each
pair  $i,j\in N$, each $P\in \mathcal P^N$, and  each $s\in
S$,  the \textbf{top allocation for \boldmath$i$
after $j$ subject to  $s$} is $\tau^{ji}(s, P) =
\tau^i(s-\tau^j(s, P), P)$. This represents the best lottery for one agent after the
other agent has already chosen from a given supply vector. Now, given a pair $i,j\in N$ and $\alpha\in [0,1]$, the
\textbf{\boldmath$\alpha$-diarchy of $i$ and $j$} is the partial
allocation \rul { \boldmath$d^{(i,j,\alpha)}$} such that for each $(s, P) \in
S\times \mathcal P^N$ and each $k\in N$,
\[
  d^{(i,j,\alpha)}_k(s,P) = \left\{
    \begin{array}{cl}
      \alpha\tau^i(s, P) + (1-\alpha)\tau^{ji}(s, P)& \text{if }k=i\\
      \alpha\tau^{ij}(s, P) + (1-\alpha)\tau^{j}(s, P)& \text{if }k=j\\
      \vec 0&\text{otherwise.}
    \end{array}
    \right.
  \]
  The order of the parameters matters and $d^{(i,j, \alpha)} = d^{(j,i, 1-\alpha)}$.
  Let {\boldmath$\mathcal D$} be the set of all diarchical  generalized  \ruls.
Observe that $\mathcal M \subset
\mathcal D$, since we allow $i=j$.

\paragraph{Restricting responsiveness to histories}
The set of all histories is larger than  the set of all  partial
allocations
in $\overline \Pi$:  a history $\eta\in \mathcal H$ specifies what,
and in what order, each agent in $N(\eta)$ gets. However, to satisfy the properties we
characterize, a sequencing rule cannot respond to all of this
information.  Below we define an equivalence between histories so that
a permissible (per our properties) sequencing rule would not distinguish
between such histories.

Consider a pair of histories  $\eta =
(\pi^k)_{k=1}^K$ and $\tilde\eta = (\tilde\pi^k)_{k=1}^{\tilde K}\in
\mathcal H$. We say that they are \textbf{identity-integrality}
equivalent if the followings hold:
\begin{enumerate}
\item they are of the same length, so that $K = \tilde K$,\\[-25pt]
\item they allocate to the same agents, so that
  for each $k\in \{1,\cdots,K\}$,  $ N(\pi^k) = N(\tilde\pi^k)$, \\[-25pt]
\item each agent who is allocated to receives an integral allocation
  from both or neither, so that for each $k\in \{1,\cdots,K\}$ and each $i\in
  N(\pi_i^k),$ $\pi^k_i$ is integral ({$\pi_i^k \in \mathbb Z^A$}) if
  and only if $\tilde \pi^k_i$ is integral ({$\tilde\pi_i^k \in
    \mathbb Z^A$}).\\[-25pt]
\end{enumerate}

\paragraph{Putting the restrictions together:}
We are  now ready to  define the sequencing rules that respect the
properties  we characterize.
A sequencing rule $\sigma$ is a \textbf{sequence of monarchies and
  (integral)  diarchies} if the followings hold:
\begin{enumerate}
\item it only ever involves monarchies or diarchies, so that
  $\text{range}({\sigma})\subseteq \mathcal  D$,\\[-25pt]
\item it  maps to a monarchy if the residual supply is not integral, so
  that for each $\eta = (\pi^k)_{k=1}^K$ where $\vec{1}-\sum_{k=1}^K\sum_{i=1}^n \pi_i^k$  is not integral, $\sigma(\eta)\in \mathcal
  M$, \\[-25pt]
\item  it maps to the same partial allocation \rul for any pair of
  identity-integrality equivalent histories, so that if $\eta$ and
  $\eta'\in \mathcal H$ are identity-integrality equivalent then
  $\sigma(\eta) = \sigma(\eta')$.\\[-15pt]
\end{enumerate}
If $\sigma$ is a sequence of monarchies and diarchies, then we call
$\varphi^\sigma$ is a \textbf{Hierarchy of Monarchies and Diarchies}
and denote it by {\boldmath$MD^\sigma$}. Below is an example of a
sequence of monarchies and diarchies and the HMD that it defines.

\begin{ex}[A hierarchy of monarchies and diarchies]{\em
Let $N =\{1,2,3,4,5\}$ and $A = \{a,b,c,d,e\}$. We first specify the
values of  $\sigma$, which  is a sequence of monarchies and diarchies.
Notice that we have not considered all possible histories in
$\mathcal H$ but only those that could appear on an execution path. How we specify $\sigma$ at any other history than those
we have listed above  does not
affect the output of the algorithm.

\begin{footnotesize}
\[{\renewcommand{\arraystretch}{1.3}
  \begin{array}{|l|l|}
    \hline
    ~\eta & \sigma(\eta)\\\hline
    \hline
    ~() & d^{(1,2,\sfrac{1}{3})}\\
    \hline
    \begin{array}{l}
      (\pi^1) \text{ where }\pi^1_3 = \pi^1_4 = \vec 0
      \text{ and } \pi^1_1\text{ and }\pi^1_2 > 0  \text{ are }
      \text{integral}
    \end{array}
         &    d^{(3,4,\sfrac{1}{2})}
    \\\hline
    \begin{array}{l}
      (\pi^1) \text{ where }\pi^1_3 = \pi^1_4 = \vec 0    \text{ and } \pi^1_1\text{ and }\pi^1_2 > 0  \text{ are }
      \text{not integral}
    \end{array}
         &    m^3

    \\\hline
    \begin{array}{l}
      (\pi^1, \pi^2) \text{ where } N(\pi^1, \pi^2) =
      \{1,2,3\}\text{ and }\vec{1}-\pi^1 - \pi^2 \text{ is integral}
    \end{array}
         &    d^{(4,5,\sfrac{1}{2})}
    \\\hline    \begin{array}{l}
      (\pi^1, \pi^2) \text{ where } N(\pi^1, \pi^2) =
      \{1,2,3\}\text{ and }\vec{1}-\pi^1 - \pi^2 \text{ is not integral}
    \end{array}
         &    m^4
    \\\hline
    \begin{array}{l}
      (\pi^1, \pi^2) \text{ where } N(\pi^1, \pi^2) = \{1,2,3, 4\}
      \text{ or }       (\pi^1, \pi^2, \pi^3) \text{ where } N(\pi^1, \pi^2,\pi^3) = \{1,2,3, 4\}
    \end{array}
         &    m^5
    \\\hline

  \end{array}}
\]\end{footnotesize}

We compute the allocation chosen by $\md^\sigma$ for the following $P\in \mathcal P^N$:\renewcommand{\arraystretch}{0.7}
\[
  \begin{array}{ccccc}
 P_1&P_2&P_3&P_4 & P_5\\
    \hline\\[-10pt]
     a & a & b & a  &b \\
     e&  b & e& e & a  \\
     d&  d& c & b & c \\
     c&  e & a & c & e \\
     b&  c & d & d & d \\
  \end{array}
\]
The algorithm proceeds as follows:
\begin{enumerate}
\item Starting with the empty history, it allocates from
  the initial supply vector of $\vec 1$ according the
  $\sfrac{1}{3}$-diarchy  of agents 1 and 2. So, agent 1 receives
  $\sfrac{1}{3}$ of $a$ and $\sfrac{2}{3}$ of $e$ and 2 receives
  $\sfrac{2}{3}$ of $a$ and $\sfrac{1}{3}$ of $b$. We denote the
  partial allocation
  where agents 1 and 2 gets these lotteries and all others get $\vec 0$ by
  $\pi^1$.
\item The history after the first step is $(\pi^1)$. The residual
  supply is not integral, so  the table above tells us that $\sigma$
  maps this history to the monarchy of agent 3. All of $a$ has already
  been allocated to agents 1 and 2 and only $\sfrac{1}{3}$ of agent
  3's most preferred object $b$ remains. So, agent 3 receives
  $\sfrac{1}{3}$ of $b$ and $\sfrac{2}{3}$ of their next most
  preferred object, which is $e$. We denote the partial allocation
  where agent 3 gets this lottery and all others get $\vec 0$ by $\pi^2$.
\item  The residual supply at the history $(\pi^1, \pi^2)$ is
  integral. So, the table tells us the next partial allocation \rul is
  the $\sfrac{1}{2}$-diarchy of agents 4 and 5.  The only remaining
  objects are $c$ and $d$. Since agents 4 and 5 rank these two objects
  in the same order, each of them receives $\frac{1}{2}$ of each of
  these objects.
\item Since all agents have been allocated to, the algorithm stops.
\end{enumerate}

We write down the input  history,  residual supply, partial allocation
\rul,
calculated partial allocation, and return value of each recursive
call in the table below. In it, {we denote a  lottery that places $\alpha$ probability
  on $a$, $\beta$ probability on $b$, $\gamma$ probability on $c$,
  $\delta$ probability on $d$, and $\epsilon$ probability on $e$ by
  $\alpha a + \beta b + \gamma c +  \delta d + \epsilon e$.}\\

\begin{footnotesize}\hspace{-0.3in}  {\renewcommand{\arraystretch}{1.2}\begin{tabular}{|c||c|c|c|c|}
    \hline
  Call&Input ($\eta, s$)&$\sigma(\eta)$ &\begin{tabular}{c}Calculated partial\\[-4pt] allocation ($\pi$)\end{tabular} &
                                                                  \begin{tabular}{c}Return
                                                                    value\\[-4pt]
                                                                  ($\pi
    + \Psi(\eta \doubleplus \pi, r(s, \pi),P)$ or $\vec{0}$) \end{tabular}\\\hline
  \hline
                    1 &  $(), \vec{1}$ & $d^{(1,2,\sfrac{1}{3})}$&
                                                               \begin{tabular}{c}
                                                                 $(\sfrac{1}{3}a+\sfrac{2}{3}e, \sfrac{2}{3}a + \sfrac{1}{3}b,\vec 0, \vec 0, \vec 0)$\\
                                                                 ($\pi^1$)
                                                               \end{tabular} &
                                                                               \begin{tabular}{ll}
                                                                                 &$ (\sfrac{1}{3}a+\sfrac{2}{3}e, \sfrac{2}{3}a+\sfrac{1}{3}b,\vec 0,\vec 0,\vec 0)$\\
&$+$ $\Psi((\pi^1), (0,\sfrac{2}{3},1,1,\sfrac{1}{3}),P)$
                                                                               \end{tabular}\\
  \hline
  2 & $(\pi^1), (0,\sfrac{2}{3},1,1,\sfrac{1}{3})$ & $m^3$&  \begin{tabular}{c}
                                                                 $(\vec0,\vec{0}, \sfrac{2}{3}b+\sfrac{1}{3}e,\vec0,\vec 0) $\\
                                                                 ($\pi^2$)
                                                               \end{tabular}
                    &   \begin{tabular}{ll}
                          &$(\vec0,\vec{0}, \sfrac{2}{3}b+\sfrac{1}{3}e,\vec0,\vec 0) $\\
&$+$ $\Psi((\pi^1, \pi^2), (0,0,1,1,0),P)$
                        \end{tabular}\\
  \hline
  3 & $(\pi^1, \pi^2), (0,0,1,1,0)$ & $d^{(4,5,\sfrac{1}{2})}$
                                        &     \begin{tabular}{c}
                                                $(\vec 0, \vec 0, \vec 0, \sfrac{1}{2}c+\sfrac{1}{2}d, \sfrac{1}{2}c+\sfrac{1}{2}d)$\\
                                               ($\pi^3$)
                                             \end{tabular}
                    &   \begin{tabular}{ll}
                          &$(\vec 0, \vec 0, \vec 0, \sfrac{1}{2}c+\sfrac{1}{2}d, \sfrac{1}{2}c+\sfrac{1}{2}d)$\\
&$+$ $\Psi((\pi^1, \pi^2, \pi^3), \vec 0,P)$
                        \end{tabular}\\
  \hline
  4 & $(\pi^1, \pi^2, \pi^3), \vec 0$ & & &$(\vec{0},\vec 0,\vec 0,\vec 0,\vec 0) $ \\
                 \hline
  \end{tabular}}\end{footnotesize}

$\quad$\\
The final allocation is calculated as \begin{center}~~~$\md^\sigma(P) =
  \Psi((),\vec 1,P)= (\sfrac{1}{3}a+\sfrac{2}{3}e,
  \sfrac{2}{3}a+\sfrac{1}{3}b,\sfrac{2}{3}b+\sfrac{1}{3}e,
  \sfrac{1}{2}c+\sfrac{1}{2}d,
  \sfrac{1}{2}c+\sfrac{1}{2}d)$. \hfill$\blacksquare$ \end{center}}\end{ex}

\section{Proofs from \cref{sec: characterization}}
 Before proving \cref{thm: characterization}, we prove a few lemmata.

\begin{lem}\label{lem: no gaps with common prefs}
  Let $P\in \mathcal P^N$ and  $\pi\in\Pi$ be \strongefficient
  at~$P$. Consider a pair  $i,j\in N$ and a
  triple $a,b,c\in
  A$ such that $a\mathrel P_i b \mathrel P_i c$ and  $a\mathrel P_j b
  \mathrel P_j c$. If $\pi_{jb} > 0$ then either $\pi_{ia} = 0$ or
  $\pi_{ic} = 0$.
\end{lem}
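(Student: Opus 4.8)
The plan is to argue by contradiction using the stochastic-dominance formulation of \strongefficiency recorded in \cref{sec:strongefficiency}: if $\pi$ is \strongefficient at $P$, then for every $\pi'\in\Pi\setminus\{\pi\}$ there is some agent whose lottery under $\pi$ strictly stochastically dominates their lottery under $\pi'$. So I would suppose, toward a contradiction, that $\pi_{jb}>0$ while \emph{both} $\pi_{ia}>0$ and $\pi_{ic}>0$, and then exhibit a single $\pi'\neq\pi$ for which \emph{no} agent is strictly dominated. The natural candidate is the bilateral trade in which $i$ hands $j$ a little of the best and worst objects $a$ and $c$ in exchange for the middle object $b$; this is precisely the kind of incomparable swap already used in the proof of \cref{lem:limited randomization}.

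Concretely, I would fix $\delta>0$ with $\delta\le\min\{\pi_{ia},\pi_{ic},\tfrac12\pi_{jb}\}$ and define $\pi'$ to agree with $\pi$ on every agent other than $i$ and $j$, and on $i,j$ to set $\pi_{ia}'=\pi_{ia}-\delta$, $\pi_{ib}'=\pi_{ib}+2\delta$, $\pi_{ic}'=\pi_{ic}-\delta$ and symmetrically $\pi_{ja}'=\pi_{ja}+\delta$, $\pi_{jb}'=\pi_{jb}-2\delta$, $\pi_{jc}'=\pi_{jc}+\delta$, leaving all remaining entries unchanged. The first step is to check $\pi'\in\Pi$: each of $i$ and $j$ still receives total probability one (the net change is $-\delta+2\delta-\delta=0$), every column sum is preserved because what $i$ loses $j$ gains object by object, and the choice of $\delta$ keeps all entries nonnegative; since $\delta>0$ we have $\pi'\neq\pi$.

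The crux is to show that no agent's lottery under $\pi$ strictly stochastically dominates their lottery under $\pi'$. For $k\neq i,j$ this is immediate because $\pi_k'=\pi_k$. For agent $i$, setting $\eta_i=\pi_i'-\pi_i$, I would evaluate the upper-contour sum at $c$: since $a\mathrel P_i b\mathrel P_i c$, the set $\{o:o\mathrel P_i c\}$ contains $a$ and $b$ but not $c$, so $\sum_{o\mathrel P_i c}(\pi_{io}'-\pi_{io})=\eta_{ia}+\eta_{ib}=-\delta+2\delta=\delta>0$; hence $\sum_{o\mathrel P_i c}\pi_{io}<\sum_{o\mathrel P_i c}\pi_{io}'$, which is incompatible with $\pi_i\pisd\pi_i'$. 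For agent $j$, whose change is the reverse trade, evaluating at $b$ gives $\sum_{o\mathrel P_j b}(\pi_{jo}'-\pi_{jo})=\delta>0$ because $a\mathrel P_j b$ (the contribution of $c$ does not enter this sum), so $\pi_j$ does not stochastically dominate $\pi_j'$ either. Thus no agent is strictly dominated, so $\pi'$ witnesses a violation of \strongefficiency, contradicting the hypothesis and proving the lemma.

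The step I expect to require the most care is this last verification that neither $i$'s nor $j$'s resulting lottery is dominated by the original. The point is to select the correct reference object for each upper-contour comparison---$c$ for agent $i$ and $b$ for agent $j$---so that the cumulative difference there is strictly positive, confirming that each swap is merely ``not sd-worsening'' in the sense of the proof of \cref{prop:smallersupport} and yields an incomparable pair, rather than being an actual welfare loss that could have preserved efficiency.
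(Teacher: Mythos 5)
Your proposal is correct and matches the paper's own proof essentially verbatim: the paper uses the identical bilateral trade (with $\varepsilon/2$ playing the role of your $\delta$) in which $i$ cedes probability of $a$ and $c$ to $j$ in exchange for $b$, and concludes that the resulting lotteries are incomparable to the originals for both agents, contradicting \strongefficiency. Your explicit upper-contour verifications at $c$ for agent $i$ and at $b$ for agent $j$ simply spell out the incomparability step that the paper asserts without calculation.
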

\begin{proof}
  Suppose, to the contrary that  $\pi_{ia}, \pi_{ic}> 0$. Let
  $\varepsilon >0$ be such that $\varepsilon < \min\{\pi_{ia},
  \pi_{jb}, \pi_{ic}\}$. Consider $\pi'\in \Pi$ such that for all
  $(k,d)\in (N\times A)\setminus (\{i,j\}\times\{a,b,c\}), \pi_{kd}' =
  \pi_{kd}$, $\pi_{ia}' = \pi_{ia} - \frac{\varepsilon}{2}$,
  $\pi_{ib}' = \pi_{ib} + \varepsilon$,  $\pi_{ic}' = \pi_{ic} -
  \frac{\varepsilon}{2}$,   $\pi_{ja}' = \pi_{ja} + \frac{\varepsilon}{2}$,
  $\pi_{jb}' = \pi_{jb} - \varepsilon$, and  $\pi_{jc}' = \pi_{jc} +
  \frac{\varepsilon}{2}$. Then, for all $k\in N\setminus \{i,j\}$,
  $\pi_k' = \pi_k$ and for all $k\in \{i,j\}$, $\pi_k'~N_k^{sd}~\pi_k$, a
  contradiction to \strongefficiency.
\end{proof}

For each supply vector $s\in S$, let $int(s)=1$ if for each $a\in A$,
$s_a\in \{0,1\}$ and $int(s)=0$ if there is $a\in A$ such that $s_a\in (0,1)$.

\begin{lem}\label{lem: int supply means there exists mon or di}
  Let $P\in \mathcal P^N$ and  $\pi\in\Pi$ be \strongefficient at~$P$.  Let $N'\subseteq N$,  $s\in S$, and
  $A'\subseteq
  A$ be such that
  \begin{enumerate}

  \item   $|N'| = \sum_{a\in A}s_a$ and $\sum_{i\in N'} \pi_{i} = s$,\\[-25pt]
  \item $A' = \{a\in A: s_a>0\}$ and for each pair   $i,j\in N', P_i|_{A'} = P_j|_{A'}$.\\[-15pt]
  \end{enumerate}
  If $\integ(s) = 1$, then either there is $i\in N'$ such that $\pi_i =
  m_i^i(s,P)$  or a pair $i,j\in N'$ and an $\alpha\in (0,1)$
  such that for each $k\in \{i,j\}$,  $\pi_k = d^{(i,j,\alpha)}_k(s, P)$.
\end{lem}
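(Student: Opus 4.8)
The plan is to exploit the common ordering on $A'$ to pin down, via \cref{lem: no gaps with common prefs}, exactly which objects the agents holding the common top object can receive. First I would fix notation: since every $i\in N'$ induces the same order on $A'=\{a\in A:s_a>0\}$, I list $A'=\{a_1,\dots,a_m\}$ with $a_1\mathrel{P_i}a_2\mathrel{P_i}\cdots\mathrel{P_i}a_m$ for all $i\in N'$. Because $\sum_{i\in N'}\pi_i=s$ and $s$ is supported on $A'$, each $i\in N'$ satisfies $\supp(\pi_i)\subseteq A'$; moreover, for each $a\in A'$ we have $\sum_{i\in N'}\pi_{ia}=s_a=1$, which is already the full column sum, so no agent outside $N'$ holds any positive share of an object in $A'$. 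Under $\integ(s)=1$ the top allocations become concrete: $\tau^i(s,P)=a_1$ for every $i$, and $\tau^{ji}(s,P)=a_2$, so the monarchy target $m^i_i(s,P)$ is ``$a_1$ with certainty'' while the diarchy target splits $a_1$ and $a_2$ antisymmetrically. The case $m=1$ is degenerate and is closed immediately as a monarchy, so I would assume $m\ge 2$.

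Next I would set $T=\{i\in N':\pi_{ia_1}>0\}$; since $\sum_{i\in N'}\pi_{ia_1}=1$, $T$ is nonempty. The key step, and the one I expect to be the main obstacle, is to show that every $i\in T$ has $\supp(\pi_i)\subseteq\{a_1,a_2\}$. Suppose instead some $i\in T$ had $\pi_{ia_q}>0$ with $q\ge 3$. For any agent $j\in N'\setminus\{i\}$ with $\pi_{ja_2}>0$, applying \cref{lem: no gaps with common prefs} to the triple $a_1\mathrel{P_i}a_2\mathrel{P_i}a_q$ (the common order makes the hypothesis hold simultaneously for $i$ and $j$) forces $\pi_{ia_1}=0$ or $\pi_{ia_q}=0$, contradicting the choice of $i$. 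Hence $\pi_{ja_2}=0$ for all $j\in N'\setminus\{i\}$; combined with the earlier observation that no agent outside $N'$ touches $a_2$, this makes $\pi_{ia_2}=1$, which is incompatible with $\pi_{ia_1},\pi_{ia_q}>0$. The delicate points are verifying that the agents absorbing $a_2$ necessarily lie in $N'$, so that the common-order hypothesis of the lemma is actually available, and that the lemma is invoked with $i\neq j$.

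Finally I would conclude by a counting argument. Each $i\in T$ now places its entire unit of probability on $\{a_1,a_2\}$, so $|T|=\sum_{i\in T}(\pi_{ia_1}+\pi_{ia_2})\le s_{a_1}+s_{a_2}=2$, giving $|T|\in\{1,2\}$. If $|T|=1$, the unique agent $i$ has $\pi_{ia_1}=1$, i.e.\ $\pi_i=a_1=m^i_i(s,P)$, a monarchy. If $|T|=2$, say $T=\{i,j\}$, then with $\alpha=\pi_{ia_1}$ the identities $\pi_{ia_1}+\pi_{ia_2}=\pi_{ja_1}+\pi_{ja_2}=1$ and $\pi_{ia_1}+\pi_{ja_1}=1$ yield $\pi_i=\alpha a_1+(1-\alpha)a_2$ and $\pi_j=(1-\alpha)a_1+\alpha a_2$ with $\alpha\in(0,1)$, which is exactly $\pi_k=d^{(i,j,\alpha)}_k(s,P)$ for $k\in\{i,j\}$, a diarchy. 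This exhausts both alternatives in the statement.
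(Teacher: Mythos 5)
Your proof is correct and takes essentially the same route as the paper's: both arguments hinge on \cref{lem: no gaps with common prefs} applied to the common top two objects $a_1,a_2$ of $A'$, combined with row/column feasibility accounting, to force either a single agent holding $a_1$ entirely (monarchy) or exactly two agents splitting $a_1$ and $a_2$ in complementary proportions (diarchy). Your packaging---first showing every holder of $a_1$ has support in $\{a_1,a_2\}$, then counting $|T|\le 2$---is a somewhat cleaner reorganization of the paper's sequential pinning-down of shares, but the key lemma and ideas are identical.
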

\begin{proof}
  If $|A'|= 1$, then we are done. Otherwise,
  Let $a_1$ and $a_2$ be the best and second best objects in $A'$ according to,
  for each $k\in N'$,
  $P_k|_{A'}$.
  Let $i\in N'$ be such that $\pi_{ia_1}  > 0$. If $\pi_{ia_1} = 1$,
  we are done. So, suppose $\pi_{ia_1} = \alpha \in (0,1)$. By
  feasibility, there is $j\in N'$ such that $\pi_{ja_1} > 0$. Suppose
  $\pi_{ja_1} < 1-\alpha$. Then there is $k\in N'\setminus \{i,j\}$
  such that $ \pi_{ka_1} > 0$. By feasibility, for at least
  one member  $l\in \{i,j,k\}$, $\pi_{la_1} + \pi_{la_2} < 1$ meaning
  there is  $a_3\in A'$ such that $\pi_{la_3} > 0$. Moreover,
  feasibility ensures that for some $\tilde i\in N'\setminus \{l\}$,
  $\pi_{\tilde ia_2} > 0$. However, since $P_{\tilde i}|_{A'} =
  P_{l}|_{A'} $, this contradicts \cref{lem: no gaps with common
    prefs}. Thus, $\pi_{ja_1} = 1-\alpha$.
  Now, suppose $\pi_{ia_2} < 1-\alpha$. Then, by feasibility, there are
  $k\in N'\setminus \{i,j\}$ and $a_3\in A'\setminus \{a_1, a_2\}$
  such that $\pi_{ka_2},\pi_{ia_3} > 0$. Again, since $P_{k}|_{A'} =
  P_{i}|_{A'} $, this contradicts \cref{lem: no gaps with common
    prefs}. Thus, $\pi_{ia_2} = 1-\alpha$. By the same  reasoning, we
  conclude that $\pi_{ja_2} = \alpha$.
\end{proof}

\begin{lem}\label{lem: frac supply means there exists mon}
  Let $P\in \mathcal P^N$ and  $\pi\in\Pi$ be \strongefficient at~$P$. Let $N'\subseteq N$,  $s\in S$, and
  $A'\subseteq
  A$ be such that
  \begin{enumerate}

  \item   $|N'| = \sum_{a\in A}s_a$ and  $\sum_{i\in N'} \pi_{i} = s$,\\[-25pt]
  \item $A' = \{a\in A: s_a>0\}$ and for each pair   $i,j\in N', P_i|_{A'} = P_j|_{A'}$, and\\[-25pt]
  \item for each $i\in N'$ and each pair $a, b\in A'$, if $s_a < 1$
    and $s_b = 1$, then $a\mathrel P_i b$.\\[-15pt]
  \end{enumerate}
  If $\integ(s) = 0$, then there is $i\in N'$ such that $\pi_i =
  m_i^i(s,P)$.
\end{lem}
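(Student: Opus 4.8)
The plan is to exhibit the monarch explicitly: the agent $i^*$ who consumes the $P$-best object of $A'$ will turn out to receive precisely the greedy top allocation. First I would record two preliminary facts. Write $a_1 \mathrel P_i \cdots \mathrel P_i a_m$ for the common order on $A'$ shared by the agents in $N'$. The hypothesis that fractional objects are preferred to fully available ones forces every object with $s_a\in(0,1)$ to precede every object with $s_a=1$, so the fractional objects are exactly $a_1,\dots,a_p$ for some $p$, and in particular $s_{a_1}<1$. Since $\sum_{a\in A'}s_a=|N'|$ is an integer and the fully available objects contribute an integer, $\sum_{k\le p}s_{a_k}$ is a positive integer; as each term lies in $(0,1)$ this forces $p\ge 2$ and $\sum_{k\le p}s_{a_k}\ge 1$. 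Consequently the greedy lottery $\tau^i(s,P)$, which is the same lottery $\tau$ for every $i\in N'$, is supported within $\{a_1,\dots,a_p\}$: it assigns $s_{a_k}$ to each $a_k$ with $k<t$ and the residual $1-\sum_{k<t}s_{a_k}$ to $a_t$, where $t\le p$ is least with $\sum_{k\le t}s_{a_k}\ge 1$.

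The second preliminary fact is the structural consequence of \cref{lem: no gaps with common prefs}: for distinct $i,j\in N'$, agent $j$ consumes no object strictly between $i$'s $P$-best and $P$-worst consumed objects. Indeed, taking $a,c$ to be $i$'s top and bottom support objects and $b$ any object $j$ consumes with $a\mathrel P_i b\mathrel P_i c$, the lemma would force $\pi_{ia}=0$ or $\pi_{ic}=0$, contradicting $a,c\in\supp(\pi_i)$. I will call this the \emph{no-straddling} property.

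The hard part will be showing that $a_1$ has a single consumer. Let $C_1=\{i\in N':\pi_{ia_1}>0\}$; every $i\in C_1$ has $a_1$ as its top object, and since $\pi_{ia_1}\le s_{a_1}<1$ each such $i$ also consumes some object below $a_1$. Suppose $|C_1|\ge 2$. Applying no-straddling pairwise within $C_1$ shows that all members share a common bottom support object $a_r$ (with $r\ge 2$) and consume only $\{a_1,a_r\}$: if two members had bottoms at distinct positions $r_i<r_j$, then $a_{r_i}$ would lie strictly inside $j$'s range, contradicting no-straddling. If $r\ge 3$, then $a_2$ (fractional, as $p\ge 2$) lies strictly inside the common range $(a_1,a_r)$, so no-straddling forces $a_2$ to be consumed by no agent, contradicting $s_{a_2}>0$. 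If $r=2$, then every member of $C_1$ consumes only $\{a_1,a_2\}$, whence $\sum_{i\in C_1}\pi_{ia_2}=|C_1|-s_{a_1}>1>s_{a_2}$, contradicting $\sum_{i\in N'}\pi_{ia_2}=s_{a_2}$. Hence $|C_1|=1$; writing $C_1=\{i^*\}$ gives $\pi_{i^*a_1}=s_{a_1}$.

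It remains to identify $i^*$'s lottery with $\tau$. Let $a_r$ be $i^*$'s bottom support object, where $r\ge 2$ since $s_{a_1}<1$. By no-straddling applied to $i^*$, no other agent consumes any of $a_2,\dots,a_{r-1}$, so $i^*$ absorbs all of them: $\pi_{i^*a_k}=s_{a_k}$ for each $k<r$. As $\pi_{i^*}$ sums to $1$, we get $\pi_{i^*a_r}=1-\sum_{k<r}s_{a_k}>0$, and $\pi_{i^*a_r}\le s_{a_r}$ then yields $\sum_{k<r}s_{a_k}<1\le\sum_{k\le r}s_{a_k}$, which is exactly the defining property of $t$. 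Thus $r=t$, and $\pi_{i^*}$ coincides with the greedy allocation $\tau=m^{i^*}_{i^*}(s,P)$, establishing the claim.
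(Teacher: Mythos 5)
Your proof is correct, and it rests on the same key tool as the paper's own argument---\cref{lem: no gaps with common prefs}, which you repackage as ``no-straddling''---but it is organized differently and is in fact considerably more complete than the proof in the paper. The paper's proof picks a consumer $i$ of the top fractional object $a_1$, takes the lowest-ranked object $a_t$ that $i$ consumes, and then asserts that the no-gaps lemma alone yields $\pi_{ia_{t'}}=s_{a_{t'}}$ for every $t'\in\{1,\dots,t-1\}$. That assertion is immediate for $t'\ge 2$ (a rival consumer of $a_{t'}$ is straddled by $i$'s consumption of $a_1$ and $a_t$), but the case $t'=1$---that $i$ is the \emph{unique} consumer of $a_1$ within $N'$---does not follow from the lemma by itself: two agents sharing the top object and a common second object is exactly the configuration a diarchy produces, and it is ruled out here only because the supplies are fractional. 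Your analysis of $C_1$ supplies precisely this missing argument: distinct bottoms violate no-straddling, a common bottom at position $r\ge 3$ starves $a_2$ of all consumption in $N'$, and a common bottom at $r=2$ gives the mass count $|C_1|-s_{a_1}>1\ge s_{a_2}$, which is where fractionality of $s_{a_1}$ enters. Likewise, your integrality observation ($\sum_{k\le p}s_{a_k}$ is a positive integer, hence $p\ge 2$) together with the identification $r=t$ disposes of the possibility that $i^*$'s support reaches an integral object, another point the paper's proof passes over silently. One cosmetic caution: no-straddling holds only for pairs inside $N'$, since agents outside $N'$ need not rank the objects of $A'$ the same way, so ``consumed by no agent'' should read ``consumed by no agent in $N'$''; because the feasibility condition you invoke, $\sum_{i\in N'}\pi_{ia_2}=s_{a_2}$, is itself a statement about $N'$ only, all of your contradictions survive this correction unchanged.
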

\begin{proof}
  Label the objects in $\{a\in A': s_a < 1\}$ as $a_1, \dots, a_T$
  such that for each  $i\in N'$,  $a_1 \mathrel P_i a_2 \dots \mathrel
  P_i a_T$. Similarly,  label the objects in $\{a\in A': s_a = 1\}$ as
  $b_1,\dots, b_K$ such that for each  $i\in N'$,  $b_1 \mathrel P_i
  b_2 \dots \mathrel P_i b_K$.   By feasibility, there is  $i\in N'$ such that $\pi_{ia_1} >
  0$. Since $s_{a_1}<1$, by feasibility, there is $t>1$ such that $\pi_{ia_t}>0$ (if there are more than one such $t$'s, choose the largest~$t$). By \cref{lem: no gaps with common prefs}, for each $t'\in \{1, \dots, t-1\},
  \pi_{ia_{t'}} = s_{a_{t'}}$.
  Thus, $\pi_i = m_i^i(s,P)$.
\end{proof}
\Cref{lem: int supply means there exists mon or di,lem: frac supply
  means there exists mon} enable us to back out an ordered list of
monarchies and diarchies as partial allocation \ruls from a given
\strongefficient \rul. We next show that the remaining properties of
\cref{thm: characterization} ensure that these lists of  partial
allocation \ruls are equal for  equivalent histories.

\begin{lem}\label{lem: individual monotonicity}
  Let $\varphi$ be a \sdsp \rul,  $P\in
  \mathcal P^N$, $i\in
  N$, and $P_i'\in\mathcal P$. If for
  each $a\in A$ such that $\varphi_{ia}(P) > 0$,  $\{b\in A: b~P_i'~a\}\subseteq     \{b\in A: b~P_i~a\},$ then $\varphi_i(P_i', P_{-i}) = \varphi_i(P)$.
\end{lem}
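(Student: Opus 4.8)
The plan is to exploit \sdspness in both directions and sandwich the resulting stochastic-dominance inequalities. Write $\pi=\varphi_i(P)$ and $\pi'=\varphi_i(P_i',P_{-i})$, and for a linear order $Q\in\mathcal P$ and a lottery $\rho$ put $U^Q_\rho(a)=\sum_{b\mathrel Q a}\rho_{b}$, the probability $\rho$ assigns to objects strictly above $a$. Applying \sdspness with true preference $P_i$ and deviation $P_i'$ gives $\pi\risd\pi'$, i.e.\ $U^{P_i}_{\pi}(a)\ge U^{P_i}_{\pi'}(a)$ for every $a\in A$; applying it with true preference $P_i'$ and deviation $P_i$ gives the reverse dominance at $P_i'$, i.e.\ $U^{P_i'}_{\pi'}(a)\ge U^{P_i'}_{\pi}(a)$ for every $a\in A$. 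These two families of inequalities, together with the hypothesis, are all I intend to use.

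First I would record two consequences of the hypothesis. Applying the containment $\{b:b\mathrel P_i' a\}\subseteq\{b:b\mathrel P_i a\}$ at the better of any two support objects shows that $P_i$ and $P_i'$ induce the same order on $\supp(\pi_i)$; list it as $c_1,\dots,c_p$. Since $\pi$ is carried by this set, $U^{P_i}_{\pi}(c_k)=U^{P_i'}_{\pi}(c_k)=\sum_{l<k}\pi_{ic_l}$. Moreover, for \emph{any} lottery $\rho$ and any support object $c_k$, the same containment gives $U^{P_i}_{\rho}(c_k)\ge U^{P_i'}_{\rho}(c_k)$ (a sum over a superset of nonnegative terms). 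Chaining these with the two dominance inequalities at $a=c_k$,
\[
U^{P_i}_{\pi}(c_k)\ \ge\ U^{P_i}_{\pi'}(c_k)\ \ge\ U^{P_i'}_{\pi'}(c_k)\ \ge\ U^{P_i'}_{\pi}(c_k)\ =\ U^{P_i}_{\pi}(c_k)
\]
forces equality throughout. In particular the cumulative values of $\pi'$ at every support object agree under both orders and equal $\sum_{l<k}\pi_{ic_l}$, and $\pi'$ places no mass on any object that lies above $c_k$ under $P_i$ but not under $P_i'$.

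The crux---and the step I expect to be the main obstacle---is to show $\supp(\pi'_i)\subseteq\supp(\pi_i)$; outside this set the two orders are free to disagree, so the $P_i'$-dominance must be invoked carefully. I would argue by contradiction: suppose $\pi'_{io}>0$ for some $o\notin\supp(\pi_i)$. The $P_i$-dominance at $c_1$ forces $\pi'$ to put no mass strictly above $c_1$, so $c_1\mathrel P_i o$; let $k$ be the number of support objects ranked above $o$ by $P_i$. If $k=p$, then $c_1,\dots,c_p$ all lie above $o$ under $P_i'$ as well, so the $P_i'$-dominance at $o$ gives $U^{P_i'}_{\pi'}(o)\ge U^{P_i'}_{\pi}(o)=1$, contradicting $\pi'_{io}>0$. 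If $k<p$, then $o\mathrel P_i c_{k+1}$; if also $c_{k+1}\mathrel P_i' o$, then $o$ lies in the ``demoted'' set on which $\pi'$ was just shown to vanish, a contradiction, whereas if $o\mathrel P_i' c_{k+1}$, then combining the $P_i'$-dominance at $o$ with $\pi'_{io}>0$ yields $U^{P_i'}_{\pi'}(c_{k+1})\ge\pi'_{io}+\sum_{l\le k}\pi_{ic_l}>\sum_{l\le k}\pi_{ic_l}$, contradicting the equality $U^{P_i'}_{\pi'}(c_{k+1})=\sum_{l\le k}\pi_{ic_l}$ from the previous step.

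Once $\supp(\pi'_i)\subseteq\supp(\pi_i)$ is established, the rest is routine: now $U^{P_i}_{\pi'}(c_k)=\sum_{l<k}\pi'_{ic_l}$, and equality with $\sum_{l<k}\pi_{ic_l}$ for every $k$ (with both masses summing to one) gives $\pi_{ic_k}=\pi'_{ic_k}$ for all $k$ by telescoping, hence $\varphi_i(P_i',P_{-i})=\pi'=\pi=\varphi_i(P)$. The only delicate bookkeeping is in the case analysis of the previous paragraph, where one must track how $o$ sits relative to the support objects under each of the two orders; the fact that $P_i$ and $P_i'$ agree on $\supp(\pi_i)$ keeps this manageable.
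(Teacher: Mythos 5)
Your proof is correct; I checked the sandwich chain, the support-containment argument (both the $k=p$ and $k<p$ cases, including the two subcases), and the final telescoping, and each step holds. It is not, however, the paper's argument. The paper proves the lemma by contradiction in three lines: if $\varphi_i(P_i',P_{-i})\neq\varphi_i(P)$, then \sdspness at $P$ gives $\varphi_i(P)\pisd\varphi_i(P_i',P_{-i})$, and the paper then \emph{asserts} that the hypothesis upgrades this to strict stochastic dominance with respect to $P_i'$, which immediately contradicts \sdspness at the profile $(P_i',P_{-i})$, since agent $i$ with true preference $P_i'$ would gain by reporting $P_i$. The entire content of that proof is thus a transfer lemma---if objects in $\supp(\varphi_i(P))$ only rise when passing from $P_i$ to $P_i'$, then dominance of $\varphi_i(P)$ at $P_i$ implies dominance at $P_i'$---which the paper does not spell out, and whose verification requires bookkeeping of roughly the kind you perform (one compares each $P_i'$-upper-contour set with the $P_i$-strict-upper-contour set of the highest support object it excludes). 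You avoid proving any such transfer statement in full: instead you use the second \sdspness inequality as one link in a sandwich of cumulative sums evaluated only at support objects, force equalities throughout, extract $\supp(\varphi_i(P_i',P_{-i}))\subseteq\supp(\varphi_i(P))$, and telescope to equality of the two lotteries. The paper's route is shorter because, once the transfer is granted, asymmetry of strict stochastic dominance finishes the job in one stroke; your route is longer but self-contained, works only where the two orders are pinned down (the support), and in effect substantiates exactly the step the paper leaves implicit. The one place you elide a justification---that $c_j\mathrel P_i o$ for all $j$ forces $c_j\mathrel P_i' o$ in the $k=p$ case---follows from the same contrapositive of the containment hypothesis you already used to show the two orders agree on the support, so it is a harmless omission.
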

\begin{proof}
  Suppose $\varphi_i(P_i', P_{-i}) \neq \varphi_i(P)$. By \sdspness, $\varphi_i(P_i, P_{-i}) \pisd \varphi_i(P_i', P_{-i})$. But the
  definition of $P_i'$ in the statement of the lemma  ensure that this
  implies $\varphi_i(P_i, P_{-i}) \mathrel {P_i'}^{sd}
  \varphi_i(P_i', P_{-i})$. This contradicts  strategy-proofness of
  $\varphi$ since at true preference $P_i'$, $i$ gains by misreporting
  $P_i$ when others' preferences are $P_{-i}$.\end{proof}

\begin{lem}\label{lem: monotonicity}
  Let $\varphi$ be a \sdsp and non-bossy \rul and
  $P, P'\in
  \mathcal
  P^N$. If for
  each $i\in N$ and each  $a\in A$ such that $\varphi_{ia}(P) > 0$, $
    \{b\in A: b~P_i'~a\}\subseteq     \{b\in A: b~P_i~a\},$
  then $\varphi(P') = \varphi(P)$.
\end{lem}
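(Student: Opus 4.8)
The plan is to transform $P$ into $P'$ one agent at a time and show that the allocation is unchanged at each step, combining \cref{lem: individual monotonicity} with non-bossiness. Fix an enumeration $N = \{1, \dots, n\}$ and define a sequence of ``hybrid'' profiles $P^0, P^1, \dots, P^n$ by $P^k = (P_1', \dots, P_k', P_{k+1}, \dots, P_n)$, so that $P^0 = P$ and $P^n = P'$, and consecutive profiles $P^{k-1}$ and $P^k$ differ only in agent $k$'s reported preference (from $P_k$ to $P_k'$). I would prove by induction on $k$ that $\varphi(P^k) = \varphi(P)$; the base case $k=0$ is immediate.

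For the inductive step, assume $\varphi(P^{k-1}) = \varphi(P)$. I want to apply \cref{lem: individual monotonicity} to agent $k$ at the profile $P^{k-1}$, with $P_k'$ playing the role of the deviation. Its hypothesis requires that for each $a\in A$ with $\varphi_{ka}(P^{k-1}) > 0$ we have $\{b\in A : b\mathrel{P_k'}a\}\subseteq \{b\in A : b\mathrel{P_k}a\}$. By the inductive hypothesis $\varphi_{ka}(P^{k-1}) = \varphi_{ka}(P)$, so the objects $a$ of interest are exactly those in the support of $\varphi_k(P)$, and for precisely these the desired containment is granted by the hypothesis of the lemma we are proving. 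Hence \cref{lem: individual monotonicity} yields $\varphi_k(P^k) = \varphi_k(P^{k-1})$, since $P^k = (P_k', P^{k-1}_{-k})$.

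Having fixed agent $k$'s own allocation, I would invoke non-bossiness: since $\varphi_k(P_k', P^{k-1}_{-k}) = \varphi_k(P^{k-1})$, the defining property of non-bossiness gives $\varphi(P^k) = \varphi(P^{k-1})$, which equals $\varphi(P)$ by the inductive hypothesis. This closes the induction, and taking $k = n$ gives $\varphi(P') = \varphi(P)$, as required.

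The step I expect to be the crux is verifying that the hypothesis of \cref{lem: individual monotonicity} is genuinely available at the intermediate profile $P^{k-1}$ rather than at $P$: the lemma constrains the upper-contour sets only at objects in the support of $\varphi_k(P^{k-1})$, whereas the assumption is stated in terms of the support of $\varphi_k(P)$. The inductive hypothesis $\varphi(P^{k-1}) = \varphi(P)$ is exactly what reconciles these two supports, so the only care needed is to maintain and correctly use that equality at each stage; everything else is a routine application of the two named properties.
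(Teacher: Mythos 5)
Your proof is correct and follows essentially the same route as the paper: change one agent's preference at a time, apply \cref{lem: individual monotonicity} to fix that agent's own lottery, and then invoke non-bossiness to fix the entire allocation, iterating across all agents. Your write-up makes explicit the detail the paper leaves implicit (that the inductive hypothesis $\varphi(P^{k-1})=\varphi(P)$ is what lets the support condition stated at $P$ be transported to the intermediate profile), but the argument is the same.
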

\begin{proof}
  By \cref{lem: individual monotonicity}, $\varphi_1(P_1', P_{-1}) =
  \varphi_1(P)$. By non-bossiness, $\varphi(P_1', P_{-1}) =
  \varphi(P)$. Repeating this argument $(n-1)$ more times, we conclude
  that $\varphi(P') = \varphi(P)$.
\end{proof}

The property of $\varphi$  described in \cref{lem: monotonicity} is
\textbf{invariance to support monotonic transformations} and for each $i\in N$,
 $P'_i$ is a \textbf{support monotonic transformation} of
$P_i$ at $\varphi_i(P)$ \citep{HeoManjunathSCW2017}.\medskip

Next, given $N'\subseteq N$, let $\Pi^{N'} =\{\pi\in\overline \Pi: \text{ for each }
i\in N\setminus N', \pi_i = \vec 0\}$  be the set of allocations for
$N'$.
We extend all of our properties to such allocations by considering only
preferences of and allocations to agents in $N'$.
We say that $f:S\times \mathcal P^{N'}\to \Pi^{N'}$ is a
\textbf{generalized rule for~$N'$}.

\begin{lem}
  \label{lem: step to recurse}
For any $N'\subseteq N$,  if a generalized  \rul $f:S\times
\mathcal P^{N'}$ is  \strongefficient, \sdsp, neutral, non-bossy, and boundedly
  invariant, then for each $s\in S$ such that $\sum_{a\in A} s_a =
  |N'|$ and $|\underline A(s)| \leq 2$ :
  \begin{enumerate}
  \item If $\integ(s) = 1$,  there is either $i\in N'$ such that for
    each $P\in \mathcal P^{N'}$, $f_i(s,P) = m^i_i(s,P)$ or  a pair
    $i,j\in
    N'$ and $\alpha\in(0,1)$ such that for each $P\in \mathcal P^{N'}$ and
    each $k\in \{i,j\}, f_k(s,P) = d_k^{(i,j,\alpha)}(s,P)$.
  \item If $\integ(s) = 0$, there is $i\in N'$ such that for each $P\in
    \mathcal P^{N'}$, $f_i(s,P) = m^i_i(s,P)$.
  \end{enumerate}
\end{lem}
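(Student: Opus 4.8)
The plan is to identify a candidate monarch (or a pair of diarchs together with a weight) by examining only those profiles on which all agents in $N'$ rank the available objects identically, and then to promote this punctual observation to a uniform one, valid at every profile, using the four regularity properties.

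First I would fix $s$ with $\sum_{a\in A}s_a=|N'|$ and $|\underline A(s)|\le 2$ and restrict attention to profiles $P\in\mathcal P^{N'}$ on which all agents agree on $A(s)=\{a:s_a>0\}$. In the fractional case I take the common ordering to rank the objects of $\underline A(s)$ above those with $s_a=1$, so that the hypotheses of \cref{lem: frac supply means there exists mon} hold; in the integral case the analogous condition is vacuous and \cref{lem: int supply means there exists mon or di} applies directly. Because $f(s,P)$ is unambiguously efficient by assumption, these lemmata produce, at each such profile, either a monarch $i\in N'$ with $f_i(s,P)=m^i_i(s,P)$ or, only when $\integ(s)=1$, a pair $i,j\in N'$ and a weight $\alpha\in(0,1)$ with $f_k(s,P)=d^{(i,j,\alpha)}_k(s,P)$ for $k\in\{i,j\}$.

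Next I would make the identity of $i$ (and, in the diarchy case, of $j$ and of $\alpha$) independent of which common ordering is used. Neutrality relates any two such common-preference profiles by a permutation of object names that permutes only within a single supply class: the objects with $s_a=1$ among themselves, those with $s_a=0$ among themselves, and, in the fractional case, the objects of $\underline A(s)$ among themselves. Such permutations fix $s$ and leave agents' indices untouched, so they transport the monarchy/diarchy pattern and the split ratio $\alpha$ verbatim. Hence a single agent $i$ (resp.\ pair $i,j$ and weight $\alpha$) governs $f(s,\cdot)$ at every common-preference profile.

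The hard part will be removing the common-preference assumption, i.e.\ showing that the same $i$ (resp.\ $i,j,\alpha$) works at an arbitrary $P\in\mathcal P^{N'}$. Here I would first use invariance to support monotonic transformations --- available by \cref{lem: monotonicity} since $f$ is strategy-proof and non-bossy --- to push every agent's off-support objects to the bottom without changing $f(s,P)$, so that I may assume each agent ranks their support contiguously at the top. Bounded invariance then fixes how the best available object(s) are divided, independently of the tails of preferences, while unambiguous efficiency together with \cref{lem: no gaps with common prefs} and \cref{lem:limited randomization} forces at most two agents to share a best available object and forbids a third from intruding. Matching the resulting pattern to the identity and weight pinned down on common-preference profiles --- aligning object names by neutrality and the top split by bounded invariance --- would yield $f_i(s,P)=m^i_i(s,P)$, or the diarchy formula, at the arbitrary profile $P$. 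I expect the delicate point to be the bookkeeping that certifies each intermediate profile still falls within the scope of the punctual lemmata and that the chosen preference changes are genuinely support monotonic, since this is exactly where all four regularity conditions must be brought to bear simultaneously.
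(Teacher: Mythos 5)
Your skeleton---apply the punctual lemmata at identical-preference profiles, uniformize the identity of the monarch/diarchs across those profiles, then bridge to an arbitrary profile---is also the paper's skeleton, but your step 2 has a genuine gap, and it sits exactly where the paper does most of its work. For a generalized rule, neutrality can only be invoked with the \emph{same} supply vector on both sides, i.e.\ through supply-preserving relabelings of objects; such a relabeling maps each supply class into itself. Consequently, in the case $\integ(s)=0$: (i) if the two fractional objects have unequal supplies, $s_a\neq s_b$, you cannot even swap them, so the common profiles $(a,b,c,\dots)$ and $(b,a,c,\dots)$ are not related by neutrality; and (ii) no supply-preserving permutation ever carries a ``fractional-on-top'' common profile (the only ones covered by \cref{lem: frac supply means there exists mon}) to a common profile that ranks a full-supply object $c$ first, because the image of a fractional object is fractional. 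So neutrality cannot transport the monarch's identity to profiles with $c$ on top, yet the lemma's conclusion must hold there. The paper flags precisely this (``Unlike the case of $\integ(s)=1$, neutrality does not render it without loss of generality to choose $\overline P$ arbitrarily'') and spends the bulk of its $\integ(s)=0$ argument on a chain of identical profiles---$c$ above $a$ above $b$, then $a$ above $c$ above $b$, then $a$ above $b$ above $c$, and symmetrically---pinning down the \emph{same} monarch at each step using strategy-proofness directly, together with bounded invariance, \cref{lem: no gaps with common prefs}, feasibility, and \strongefficiency. Your step 3 cannot repair this omission: moving a full-supply object $c$ from below the monarch's support to the top of an agent's ranking \emph{enlarges} the upper contour set at each support object, so it is not a support monotonic transformation and \cref{lem: monotonicity} does not apply to it.

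A second, smaller gap is in the integral diarchy case. ``Aligning the top split by bounded invariance'' is not sufficient, because what the diarchy formula requires at an arbitrary $P$ depends on whether the two diarchs' best (and second-best) available objects coincide. The paper needs three distinct subcases (no conflict, full conflict, partial conflict), and in the no-conflict case the key step---each diarch receives their distinct favorite object with probability one---rests on strategy-proofness combined with neutrality (comparing the profile in which both names are swapped) and \strongefficiency, not on bounded invariance. Your sketch invokes strategy-proofness only indirectly through \cref{lem: monotonicity}, so these arguments are missing rather than merely compressed; the ``bookkeeping'' you defer is in fact the substance of the proof.
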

\begin{proof} Consider a preference profile $P\in\mathcal P^{N'}$. Let $\overline P\in\mathcal P^{N'}$ be a profile of identical preferences for agents in $N'$.  Suppose first that $\integ(s) = 1$.
  Let $\overline \pi = f(s,\overline P)$. By \cref{lem: int supply
    means
  there exists mon or di}, there
  is either $i\in N'$ such that $\overline \pi_i = m_i^i(s,\overline
  P)$ or a
  pair $i,j\in N'$ and $\alpha\in (0,1)$ such that for each $k\in
  \{i,j\}, \overline \pi_k = d^{(i,j,\alpha)}_k(s,\overline P)$. We
  consider each of
  these possibilities separately:\\

\noindent {\bf \boldmath 1. There is $i\in N'$ such that  $\overline \pi_i =
    m_i^i(s,\overline P)$.} By neutrality, it is without
    loss of
    generality to suppose that for each $j\in N'$, $\overline P_j =
    P_i$. Since $\integ(s) = 1$, there is $a\in A$ such that
    $\overline \pi_{ia} = 1$. Let $\tilde P\in \mathcal P^{N'}$ be
    such that for each $j\neq i$ and each $b\in A\setminus \{a\}$, $a\mathrel {\tilde P_j} b$ and
    $\tilde     P_j|_{A\setminus \{a\}} =     P_j|_{A\setminus \{a\}} $.
    Since $f_{ia}(s,(P_i, \overline P_{N'\setminus \{i\}})) = 1$, by
  bounded invariance, $f_{ia}(s,(P_i, \tilde P_{N'\setminus \{i\}})) =
  1$. Since for each $j\in N'\setminus\{i\}$, $f_{ja}(s,(P_i, \tilde
  P_{N'\setminus \{i\}})) = 0$, $P_j$ is a support monotonic
  transformation of $\tilde P_j$ at $f_{j}(s,(P_i, \tilde
  P_{N'\setminus \{i\}}))$. Thus, by invariance to support monotonic
  transformations (\cref{lem: monotonicity}), $f(s,P) = f(s,(P_i,
  \tilde P_{N'\setminus
    \{i\}}))$, so $f_i(s,P) = m_i^i(s,P)$.\medskip

\noindent{\bf \boldmath 2. There are $i,j\in N'$ and $\alpha\in (0,1)$ such that for
  each $k\in \{i,j\}, \overline \pi_k = d_k^{(i,j,\alpha)}(s,\overline
  P)$.} For each $k\in \{i,j\}$, let $a_k$ and $b_k$ be the best and
  second best objects  in $A(s)$ respectively
  according to $P_k$. We distinguish three cases:\smallskip

 \textbf{(a) No Conflict\boldmath ($a_i \neq a_j$):} By neutrality,    it is
    without loss of generality to suppose that for each $k\in N'$,
    $\overline P_k$ ranks $a_i$ first and $a_j$ second. Let $\hat P
    \in \mathcal P^{N'}$ be such that for each $k\in N'$,  $a_j\mathrel {\hat P_k} a_i$,
    for each $b\in A\setminus \{a_i,a_j\}$, $a_i \mathrel {\hat
      P_k}  b$, and $\hat P_k|_{A\setminus\{a_i,a_j\}} = \overline
    P_k|_{A\setminus\{a_i,a_j\}} $.

    Let $\hat\pi = f(s, (\hat P_j, \overline P_{N'\setminus
      \{j\}})$. We show below that $\hat\pi_{ja_j} = 1$ and
    $\hat\pi_{ia_i} = 1$.  By strategy-proofness, $\hat\pi_{ja_i} + \hat\pi_{ja_j} =
    1$. If $\hat \pi_{ja_i} > 0$, then $\hat
    \pi_{ja_j} < 1$. By feasibility, there is some $k\in N'\setminus
    \{j\}$ such that $\hat \pi_{ka_j} > 0$. This violates
    \strongefficiency. Thus, $\hat \pi_{ja_j} = 1$.
    By     invariance to support monotonic transformations,
    $f(s, (\overline P_i, \overline P_j, \hat P_{N'\setminus
      \{i,j\}})) = \overline \pi$.
    Now, consider $\doublehat\pi =f(s, (\overline P_i,
    \hat P_{N'\setminus \{i\}}))$. Strategy-proofness implies    $\doublehat  \pi_{ja_i} + \doublehat \pi_{ja_j} = 1$.
    By neutrality, $f_{ia_j}(s, \hat P)+f_{ia_i}(s,
    \hat P) = 1$ since $\hat P$ swaps the names of $a_i$ and
    $a_j$ compared to $\overline P$. So, by strategy-proofness,
    $\doublehat  \pi_{ia_i} + \doublehat \pi_{ia_j} = 1$. Then, since
    $a_i \mathrel {\overline P_i} a_j$ and $a_j \mathrel {\hat P_j}
    a_i$, \strongefficiency ensures that $\doublehat \pi_{ia_i} = 1$
    and $\doublehat \pi_{ja_j} = 1$. Now, by invariance to support
    monotonic transformations, $\hat \pi = \doublehat \pi$, so $\hat
    \pi_{ia_i} = 1$. Thus, for each $l\in\{i,j\}, f_l(s, (\hat P_j,
    \overline P_{N'\setminus \{j\}})) = d^{(i,j,\alpha)}(s,P)$.

      Next, consider  $\tilde P \in \mathcal P^{N'}$ such that
    $\tilde P_j = \hat P_j$, and for
    each $k\in N'\setminus\{j\}$, $\tilde P_k$ ranks $a_i$ first,
    $a_j$ second, and
    agrees with $P_k$ on the remaining objects in $A(s)$. By bounded
    invariance, for   each $l\in\{i,j\}, f_l(s, \tilde P) =
    d^{(i,j,\alpha)}(s,P)$. In turn,  by invariance to support
      monotonic transformations, we conclude that
     for each $l\in\{i,j\}, f_l(s,P) = d_l^{(i,j,\alpha)}(s,P)$.\smallskip

\textbf{(b) Full Conflict\boldmath ($a_i = a_j = a$ and $b_i =
      b_j=b$):}
    By neutrality, it is without loss of generality to
    suppose that for
    each $k\in N'$, $\overline P_k = P_i$. Since $\integ(s) = 1$,
    $\overline \pi_{ia} + \overline \pi_{ib} = 1$ and     $\overline
    \pi_{ja} + \overline \pi_{jb} = 1$.
    We proceed as above  by considering $\tilde P\in \mathcal
 P^{N'}$ where, for each $k\in N'$, $\tilde P_k$  ranks  $a$ first,
    $b$ second, and
    agrees with $P_k$ on the remaining objects in $A(s)$. By bounded invariance, for each $l\in \{i,j\}$, $f_l(s,\tilde P)=f_l(s,\overline P)$.
    We conclude,
    by invariance to support monotonic transformations (from $\tilde
    P$ to $P$),
    that for each $l\in\{i,j\}, f_l(s,P) = d_l^{(i,j,\alpha)}(s,P)$.\smallskip

\textbf{(c) Partial Conflict\boldmath ($a_i = a_j =a$ and $b_i
      \neq b_j$):}  By neutrality, it is without loss
    of generality to
    suppose that for each $k\in N',$ $\overline P_k\in \mathcal
    P$ where  $\overline P_k$ ranks $a$ first,
    $b_i$ second, and $b_j$ third.
    Let $\pi'  = f(s, (P_j, \overline P_{N'\setminus \{j\}})$. By
    bounded invariance, $\pi'_{ia} = \overline \pi_{ia} = \alpha$ and
    $\pi'_{ja} = \overline \pi_{ja} = 1-\alpha$. By strategy-proofness
    and \strongefficiency,    $\pi'_{jb_j} = \alpha$.   Since
    $\pi'_{ia} > 0$, by \cref{lem: no gaps with common prefs},
    $\pi'_{ib_i} = 1-\pi'_{ia} = 1-\alpha$.\medskip

    Now let $\tilde P \in
    \mathcal P^{N'}$  such that $\tilde P_i = P_i$, $\tilde P_j =
    P_j$, and for each $k\in N'$, $\tilde P_k$ ranks $a$ first, $b_i$
    second,  $b_j$ third, and agrees with $P_k$ on the remaining
    objects. Setting $\tilde \pi = f(s, \tilde P)$, by bounded
    invariance, if $\tilde \pi_i = \pi'_i$ and $\tilde \pi_j =
    \pi_j'$. Then, by invariance to support monotonic transformations,
    $f_i(s,P) = \pi_i' = d_i^{(i,j,\alpha)}(s,P)$ and $f_j(s,P) = \pi_j' =
    d_j^{(i,j,\alpha)}(s,P)$.\\

Next, suppose that $\integ(s) = 0$.  Unlike the case of
$\integ(s)=1$,
neutrality does not render it without loss of generality
to choose $\overline P$ arbitrarily. However, the rest of our properties
do ensure
that this is the case. We prove first that there is $i\in N'$ such
that $\overline \pi_i=m_i^i(s,\overline P)$ for any identical preference
profile $\overline P$.
Since  $\integ(s) = 0$ and $\sum_{o\in A}s_o=|N'|$,  $|\underline A(s)| \le 2$
implies that $|\underline A(s)| = 2$. So, $\underline A(s)
= \{a,b\}$ for some pair $a,b\in A$. Let $c\in A(s)\setminus\{a,b\}$.

Consider $\overline P\in \mathcal{P}^{N'}$ such that each agent ranks $c$
first, $a$ second, and $b$ third.  Let $\overline \pi= f(s,\overline
P)$. Since $s_a, s_b \in (0,1)$, by \cref{lem: no gaps with common
  prefs} and feasibility, there is $i\in N'$ such that $\overline
\pi_i=m_i^i(s,\overline P)$ so that $\overline \pi_{ic}= 1$. Even if
each agent $j\neq i$ lowers   $c$  while keeping the
relative rankings of the remaining objects as in $\overline P_j$, agent $i$
is still assigned probability 1 of $c$ by invariance to support
monotonic transformations (denote this statement by $(\dag)$).

Consider another identical preference profile $\overline P'\in
\mathcal{P}^{N'}$ such that each agent ranks $a$ first, $c$ second,
and $b$ third. Let $\overline \pi'= f(s,\overline P')$.  We show
that $\overline \pi'_i=m_i^i(s,\overline P')$ so that $\overline
\pi_{ia}'=s_a$ and $\overline\pi'_{ic}=1-s_a$.  Consider $(\overline
P_i',\overline P_{N'\setminus\{i\}})$ and compare it with $\overline P$. Let
$\widehat \pi= f(s,(\overline P_i',\overline P_{N'\setminus\{i\}}))$. By
\sdsp, $\widehat\pi_{ib}=0$ and $\widehat
\pi_{ia}+\widehat \pi_{ic}=1$. If $\widehat \pi_{ia}<s_a$, then
$\widehat\pi_{ic}>0$ and by feasibility, there is $j\neq i$ with
$\widehat \pi_{ja}>0$. Agent $i$ prefers $a$ to $c$ and agent $j$
prefers $c$ to $a$, contradicting \strongefficiency. Therefore, $\widehat \pi_{ia}=s_a$ and $\widehat
\pi_{ic}=1-s_a$. Since all other agents have the same preferences, by
\cref{lem: no gaps with common prefs} and feasibility, there is $k\in
N'$ such that $\widehat \pi_{kc} =s_a$ and $\widehat \pi_{kb} =
1-s_a$. Now, consider $(\overline P'_{ik}, \overline P_{N'\setminus\{i,k\}})$ and
compare it with $(\overline P_i',\overline P_{N'\setminus\{i\}})$. Let $\widehat
\pi'= f(s,(\overline P'_{ik}, \overline P_{N'\setminus\{i,k\}}))$. By
strategy-proofness, $\widehat \pi'_{kb}=1-s_a$ and $\widehat
\pi'_{ka}+\widehat \pi'_{kc}=s_a$. Also, by strategy-proonfess,
$\widehat \pi'_{ia}+\widehat \pi'_{ic}=1$, because if agent $i$
changes their preference to $\overline P_i$ at $(\overline P'_{ik},
\overline P_{N'\setminus\{i,k\}})$, they will be assigned probability 1 of $c$
by $(\dag)$. We claim that $\widehat \pi'_{ka}=0$. Otherwise, both $i$
and $k$ rank $a$, $c$ and $b$ in order whereas $\widehat \pi'_{ka},
\widehat \pi'_{kb}, \widehat \pi'_{ic}>0$, contradicting
\cref{lem: no gaps  with common
  prefs}. Altogether, $\widehat \pi'_{ia}=s_a$ and $\widehat
\pi'_{ic}=1-s_a$. Since each other agent $l$ is assigned zero
probabilities of $a$, $b$, and $c$, $\overline P'_l$ is a support
monotonic transformation of $\overline P_l$ at $\widehat
\pi'_l$. Therefore, $f(s,\overline P')=\widehat \pi'$. By the same argument, agent~$i$~ is the monarch at the identical preference profile such that  each agent ranks $b$ first, $c$ second,
and $a$ third.

Lastly, consider an identical preference profile $\overline P''\in
\mathcal{P}^{N'}$ such that each agent ranks $a$ first, $b$ second,
and $c$ third. Let $\overline \pi''= f(s,\overline P'')$.  We
show that the same agent $i$ receives
$\overline \pi''_i=m_i^i(s,\overline P'')$ so that $\overline
\pi_{ia}''=s_a$ and $\overline\pi''_{ib}=1-s_a$.  Consider $(\overline P_i'',\overline P_{N'\setminus\{i\}}')$ and
compare it with $\overline P'$. Let $\widetilde \pi=
f(s,(\overline P_i'',\overline P_{N'\setminus\{i\}}'))$. By bounded invariance and
\strongefficiency, $\widetilde \pi_{ia}=s_a$ and $\widetilde
\pi_{ib}=1-s_a$. Choose any $j\in N\setminus \{i\}$, consider $(\overline P''_{ij},\overline
P'_{N'\setminus\{i,j\}})$, and let $\widetilde \pi'= f(s,(\overline
P''_{ij},\overline P'_{N'\setminus\{i,j\}}))$. By $(\dag)$ and
strategy-proofness, $\widetilde\pi'_{ia}+\widetilde
\pi'_{ib}+\widetilde\pi'_{ic}=1$. By bounded invariance, $\widetilde \pi_{ia}=\widetilde \pi_{ia}'=s_a$. We claim that
$\widetilde\pi'_{ic}=0$. Otherwise, $\widetilde\pi'_{ic}>0$ and
$\widetilde\pi'_{ib}<1-s_a=s_b$. By feasibility, there exists $k\in
N\setminus\{i\}$ such that $\widetilde\pi'_{kb}>0$. If $k=j$, both
agents $i$ and $j$ (with $\overline P_{i}''=\overline P_j''$) rank
$a$, $b$, $c$, in order and
$\widetilde\pi'_{ia},\widetilde\pi'_{ic},\widetilde\pi'_{jb}>0$,
contradicting \cref{lem: no gaps  with common  prefs}.  If $k\neq j$,
agent $i$ (with $\overline P_i''$) prefers $b$ to $c$, agent $k$ (with
$\overline P_k'$) prefers $c$ to $b$ and
$\widetilde\pi'_{ic},\widetilde\pi'_{kb}>0$, contradicting
\strongefficiency. Altogether, $\widetilde\pi'_{ic}=0$ and therefore,
$\widetilde\pi'_{ib}=1-s_a$. Next, from $(\overline P''_{ij},\overline
P'_{N'\setminus\{i,j\}})$, choose any other $l\in N'\setminus\{i,j\}$ and change $\overline P'_{l}$ to $\overline P''_{l}$. By the same argument as above, we conclude that agent~$i$ receives $s_a$ and $s_b$ of $a$ and $b$, respectively. Repeat this process until we reach $\overline P''$. We then obtain $f_i(s,\overline P'')=\widetilde\pi_i$. By the same argument, agent~$i$~ is the monarch at the identical preference profile such that  each agent ranks $b$ first, $a$ second,
and $c$ third.

We have now established that for any ordering of $a, b,$ and $c$, at
the top of an identical preference profile, $i$ is the monarch. By   neutrality, $c$ can be any object other than $a$ or
$b$. Thus, we
have shown this as long as $a$ and $b$ are among the top
three objects in $A(s)$. From here, bounded invariance
ensures that $i$ is the monarch for all identical profiles.

Now, we turn to $P$ and show $f_i(s,P) = m_i^i(s,P)$. Let $\overline
P\in \mathcal{P}^{N'}$ be an identical preference profile such that,
without loss of generality, for each $k\in N'$, $\overline P_k = P_i$
and let $\overline \pi= f(\overline{P})$. Let $\overline A = \{o\in
\supp(\overline \pi_i): \overline \pi_{io}={s_o}\}$. If either $i$
ranks $a$ and $b$ above the other objects in $A(s)$ at~$P_i$, or $i$
ranks another object $c$ distinct from $a$ and $b$ on top at~$P_i$,
then
$\supp(\overline \pi_i)=\overline A$. If $i$ ranks one of $a$ or $b$
first and $c\notin\{a,b\}$ second at~$P_i$, then $\supp
(\overline\pi_i)\neq\overline A$.
Let $\tilde P\in \mathcal P^{N'}$ such that for each $k\neq i$ and  each pair $c,d\in A(s)$,
$c\mathrel {\tilde P_k} d$ if and only if either
\begin{enumerate}
\item $c\in \overline A$ and $d\notin \overline A$,\\[-25pt]
\item $c, d\in \overline A$ and $c \mathrel P_k d$, or\\[-25pt]
\item $c, d\in A(s)\setminus \overline A$ and $c \mathrel
  P_k d$.\\[-15pt]
\end{enumerate}
Let $\tilde \pi =f(s,(P_i, \tilde P_{N'\setminus \{i\}})$.
    Since $f_{i}(s,(P_i, \overline P_{N'\setminus \{i\}})) = \overline
    \pi_i$, by   bounded invariance, for each $o\in \overline A,$
    $\tilde \pi_{io} =  \overline \pi_{io}$.   There are two cases to consider.
    \begin{enumerate}
    \item[(i)]     If $\supp(\overline \pi_i) = \overline A$,
      then      for each $a\in \overline A$, and each $j\in
      N\setminus\{i\}$,
      $\tilde\pi_{ja} = 0$, so $P_j$ is a support monotonic
      transformation of $\tilde P_j$ at $\tilde \pi_j$.

    \item[(ii)] If $\supp(\overline \pi_i) \neq \overline A$, then there are
      $o\in \overline A$ and $o'\in A(s) \setminus \overline A$ with
      $s_{o'} =1$, such that $o$ and $o'$ are the best and second best
      objects in $A(s)$ according to $P_i$ respectively. By bounded invariance, $\tilde\pi_{io}=\overline \pi_{io}$.
      We show that $\tilde \pi_{io'}=\overline    \pi_{io'}$. Otherwise, by
      feasibility, there is some $o''\in A(s)$ such that $o'\mathrel P_i
      o''$ and $\tilde \pi_{io''} > 0$. Moreover, there is $j\in N'$ such
      that $\tilde \pi_{jo'} > 0$. If $o'~\tilde P_j~ o''$, then we have
      contradicted \cref{lem: no gaps  with common prefs}. If
      $o''~\tilde P_j~ o'$, then we have contradicted
      \strongefficiency. Thus, $\tilde \pi_i  = \overline
      \pi_i$. Since for each $j\in N'\setminus \{i\}$, $\tilde P_j$
      differs from
      $P_j$ only in the position of $o$ and $\tilde \pi_{jo} = 0$, we
      then have that $P_j$ is a support monotonic transformation of
      $\tilde P_j$ at $\tilde \pi_j$.
    \end{enumerate}
By
      invariance to support monotonic
      transformations, {$f(s,P) = \tilde \pi$, so $f_i(s,P) =
      m_i^i(s,P)$}.
\end{proof}

\begin{proof}[Proof of \cref{thm: characterization}]
From \cref{lem: step to recurse}, it follows that  there is $\sigma$
such that $\varphi$ coincides with $\varphi^\sigma$ where
for each $\eta\in
\mathcal H^{NT}$, $\sigma(\eta) \in\mathcal D$.
To complete the proof that any  $\varphi$ satisfying the listed properties
is a hierarchy of monarchies and diarchies, it remains to show that
$\sigma$ is equal for  equivalent histories.
We prove, by induction on histories along possible execution paths,
that $\sigma$ is equal for  equivalent histories. The base case for
the induction, that
$\sigma(())\in \mathcal D$ is implied by \cref{lem: step to
  recurse}. For the induction step, for some $t> 0$, consider the
$t^{\text{th}}$  recursive call to $\Psi$ with histories $\eta$ and $\eta'$,
residual supplies $s$ and $s'$, respectively. Let $P,P'\in \mathcal P^N$ be a pair of
preference profiles that produce the equivalent histories $\eta$ and
$\eta'$ respectively up to
this call.  Let $\pi^k = \sigma(\eta)(s,P)$ and $\pi'^k =
\sigma(\eta')(s',P')$. Let $\overline s= r(s, \pi^k)$ and $\overline s' = r(s',
\pi'^k)$. If $\sigma$ is not equal for equivalent histories, then
there would be such $P$ and $P'$ where:
\begin{itemize}
\item the same agents are assigned objects from both $\pi^k$ and
  $\pi'^k$,\\[-25pt]
\item those agents have the same integrality to their allocations from
   $\pi^k$ and~$\pi'^k$,\\[-25pt]
\item and both $\overline s$ and $\overline s'$ have the same integrality,\\[-15pt]
\end{itemize}
but $\sigma(\eta\doubleplus \pi^k)(\overline s,P) \neq \sigma(\eta'\doubleplus \pi'^k)(\overline s', P')$.
Denote by $\overline N= N\setminus N(\eta\doubleplus \pi^k)(=N\setminus
N(\eta'\doubleplus \pi'^k))$  the set of agents who have not been assigned any
object at $\eta\doubleplus \pi^k$ (and $(\eta'\doubleplus \pi'^k)$).
Let $\overline P\in \mathcal P^N$ such that
\begin{inparaenum}[(i)]
\item  for each $i\notin
  \overline N$, ${\overline P}_i=P_i$ and
\item  for each $i\in \overline N$,
$\overline{P}_i=\overline{P}_0$ where for each $a,b\in A$, $a \mathrel {\overline
  P_0} b$ if and
only if $\overline s_a\ge \overline s_b$.
\end{inparaenum}
 That is, all agents in $\overline N$ have
an identical preference ordering, which ranks the objects in the weakly
decreasing order of supplies of the objects at $\overline s$ (that is, the
objects with integral supplies on top, those with fractional supplies
next, and unavailable objects at the bottom). Let
$\overline{\pi}= \varphi^\sigma(\overline{P})$.  By \cref{lem: step to
  recurse}, $\sigma(\eta\doubleplus \pi^k)(\overline s,{\overline
  P})=\sigma(\eta\doubleplus \pi^k)(\overline s,{P})$ and for each $i\notin \overline{N}$,
$\overline \pi_i=\varphi_i^\sigma(P)$. Note that this allocation is
invariant to the relative rankings over $A\setminus A(\overline
s)$ at $\overline P_0$. Similarly, let  $\overline{P}'\in\mathcal P^N$  such that
\begin{inparaenum}[(i)]
\item  for each
$i\notin \overline N$, $\overline{P}'_i=P_i'$ and
\item for each $i\in \overline N$, $\overline{P}_i'=\overline{P}_0'$ where for each
$a,b\in A$, $a\mathrel {\overline P_0'} b$ if and only if $\overline s_a'\ge \overline
s_b'$. Let $\overline{\pi}'=
\varphi^\sigma(\overline{P}')$.
\end{inparaenum}
By the same reasoning,
$\sigma(\eta'\doubleplus \pi'^k)(\overline s',{\overline P}')=\sigma(\eta'\doubleplus \pi'^k)(\overline
s',{P}')$ and for each $i\notin \overline{N}$, $\overline
\pi_i'=\varphi_i^\sigma(P')$. Again, this allocation is invariant
to the relative rankings over $A\setminus A(\overline s')$
at $\overline P_0'$.  Since $\sigma(\eta\doubleplus \pi^k)(\overline s,P) \neq
\sigma(\eta'\doubleplus \pi'^k)(\overline s', P')$, we have $\sigma(\eta\doubleplus \pi^k)(\overline
s,\overline{P}) \neq \sigma(\eta'\doubleplus \pi'^k)(\overline s', \overline{P}')$.

We further modify $\overline P$ into $\doublebar P\in \mathcal
P^N$ such that
\begin{inparaenum}[(i)]
\item for each
$i\in \overline N$, $\doublebar  P_i=\overline{P}_i$ and
\item    for each $i\notin
\overline N$, each $a\in \supp(\overline{ \pi}_i)$ and each $b\notin \supp(\overline
\pi_i)$, $a \mathrel{\doublebar P_i} b$ and
\end{inparaenum}
\begin{center}$\doublebar P_i|_{\supp(\overline{\pi}_i)}=\overline{P}_i|_{\supp({\overline
\pi}_i)}$ and $\doublebar P_i|_{A\setminus \supp({\overline
\pi}_i)}=\overline{P}_0|_{A\setminus \supp(\overline{\pi}_i)}$.\end{center}
For each $i\in N$, $\doublebar P_i$ is a support monotonic
transformation of $\overline P_i$ at $\overline \pi_i$, and thus, by invariance
to support monotonic transformation, $\varphi^\sigma(\doublebar
  P)=\overline{\pi}$. Similarly, let
$\doublebar  P'\in \mathcal P^N$ such that for each $i\in \overline N$, $\doublebar
P_i'=\overline{P}_i'$ and for each $i\notin \overline N$, each $a\in \supp(\overline{
  \pi}_i')$ and each $b\notin \supp(\overline \pi_i')$,
$a\mathrel{\doublebar P_i'}b$ and
\begin{center}$\doublebar P_i'|_{\supp(\overline{\pi}_i')}=\overline{P}_i'|_{\supp({\overline
\pi}_i')}$ and $\doublebar P_i'|_{A\setminus \supp({\overline
\pi}_i')}=\overline{P}_0'|_{A\setminus \supp(\overline{\pi}_i')}$.\end{center}
By the same reasoning, $\varphi^\sigma(\doublebar
  P')=\overline{\pi}'$. Finally, $\doublebar  P$ and $\doublebar  P'$ are
ordinally equivalent and
identical up to relabelling objects, but $\doublebar\pi$ and
$\doublebar\pi'$ are not identical up to the same relabelling. This
contradicts {neutrality}.\\

We now show that a hierarchy of monarchies and diarchies with respect to $\sigma$, $\varphi^{\sigma}$, satisfies the properties listed in the statement of
\cref{thm: characterization}.\medskip

\noindent \textbf{\StrongEfficiency:}  Given an input  $P\in
  \mathcal P^N$, let $\eta = (\pi^k)_{k=1}^K\in \mathcal H^T$ be the history accumulated
  along the execution path of a hierarchy of monarchies and diarchies,
  $\varphi^{\sigma}$. Let $\pi
  = \varphi^{\sigma}(P)$.   For each $t\in \{1, \cdots,K\}$ and each $i\in N(\pi^t)$, $\pi_i$
  beats  every lottery in
  \[
    \{
    \overline\pi_i \in \Delta(A): \text{ for each }a\in A,
    \overline \pi_{ia} \leq \vec 1 - \displaystyle\sum_{\substack{k\in N(\pi^{t'})\setminus\{i\}\\t'\le t}} \pi_{ja}
    \}
  \]
  according to $\risd$.
  If $N(\pi^t) = \{i\}$, this means for any allocation $\pi'$ such
  that it is not the case that  $\pi_i \risd \pi_i'$, there is $j\in
  N(\pi^{t'})$ where  $t'<t$ such that $\pi_j \mathrel P_j^{sd}\pi_j'$.

If there is $j\in N(\pi^t)\setminus \{i\}$, by definition of $\sigma$, $\vec 1 - \displaystyle\sum_{\substack{k\in N(\pi^{t'})\\t'<t}} \pi_{k}$ is integral. By definition of a diarchy,
  this
  means $P_i$ and $P_j$ have the same best object $a_1$ in $A( \vec 1 -
  \displaystyle\sum_{\substack{k\in N(\pi^{t'})\\t'<t}}
  \pi_{k})$. Any allocation that assigns more of $a_1$ to $i$
  necessarily makes $j$ worse off according to $P_j$. Also by
  definition of a diarchy, if $a_2$ is the second best object in $A(
  \vec 1 -
  \displaystyle\sum_{\substack{k\in N(\pi^{t'})\\t'<t}}
  \pi_{k})$ according to $P_i$, then $\pi_{ia_2} =
  1-\pi_{ia_1}$. Thus, for any $\pi'$ such that it is not the case
  that $\pi_i \risd \pi_i'$, there is
  $k\in N(\pi^{t'})$ where $t'\leq t$ such that $\pi_k\mathrel
  P_k^{sd} \pi_k'$.\medskip

\noindent \textbf{Neutrality:} The definition of a hierarchy of monarchies
  and diarchies makes no references to the labels of the
  objects. Therefore, they are, by definition, neutral.\smallskip

\noindent\textbf{\Sdspness:}
  Let $P \in\mathcal P^N$ and $\pi = \varphi^{\sigma}(P)$.
Let $i\in N$. At a history $\eta = (\pi^k)_{k=1}^t$, where
$\sigma(\eta) = m^i$ or $\sigma(\eta) = d^{(i,j,\alpha)}$ for some $j\in
N$ and $\alpha\in (0,1)$, $i$'s assignment is chosen to be the
 best allocation, according to $P_i^{sd}$, in
  \[
    \{
    \overline\pi_i \in \Delta(A): \text{ for each }a\in A,
    \overline \pi_{ia} \leq \vec 1 - \displaystyle\sum_{\substack{k\in N(\pi^{t'})\setminus\{i\}\\t'\le t}} \pi_{ka}
    \}.
  \]
If $\sigma(\eta) = m^i$,  this option set is unresponsive to $i$'s
reported preferences.
If $\sigma(\eta) = d^{(i,j,\alpha)}$, this option set only depends on
whether $i$'s reported top object in
 $A( \vec 1 -
  \displaystyle\sum_{\substack{k\in   N(\pi^{t'})\\t'<t}}
  \pi_{k})$ coincides with that of $j$. Let $a_1$ be the top object
  in this set according to~$P_j$. If $a_1$ is also the top object in
  this set according to $i$'s report, $P_i$, then $\pi_{ja_1} =
  1-\alpha$. If  $a_1$ is not the top object in
  this set according to $P_i$, then $\pi_{ja_1} = 1$. Thus, if $a_1$
  is the true top object in this set for $i$, $i$ receives it with
  probability $\alpha$ if they report it as such, but receives it
  with probability zero if they do not.\medskip

  \noindent \textbf{Non-bossiness:}
  Given an input preference profile $P\in
  \mathcal P^N$, let $\eta = (\pi^{k})_{k=1}^K\in \mathcal H^T$ be the history accumulated
  along the execution path of a hierarchy of monarchies and diarchies,
  $\varphi^{\sigma}$. Let $\pi
  = \varphi^{\sigma}(P)$.
Let $i\in N$ and $\overline P_i\in \mathcal P$ be such that $\varphi_i^{\sigma}(\overline P_i, P_{-i}) = \varphi_i^{\sigma}(P)$. Let $\overline \eta = (\overline \pi^k)_{k=1}^L
\in\mathcal H^T$ be the history accumulated along the execution path
 of the hierarchy at $(\overline P_i, P_{-i})$.

Suppose $i\in N(\pi^t)$. By definition of $\varphi^{\sigma}$, for each
$t' < t$, $\pi^{t'} = \overline \pi^{t'}$, and for each $j\in N(\pi^{t'}),
\varphi^{\sigma}_j(\overline P_i, P_{-i}) = \varphi^{\sigma}_j(P)$. Since
$\varphi^{\sigma}_i(\overline P_i, P_{-i}) = \varphi^{\sigma}_i(P)$, $\pi^t =
\overline \pi^t$. Thus, $\sigma(\pi^1,\dots, \pi^t) = \sigma(\overline\pi^1, \dots,
\overline\pi^t)$. Since the residual supply vector is the same, $\pi^{t+1} =
\overline \pi^{t+1} $ and for each $j\in N(\pi^{t+1})$, $\varphi^{\sigma}_j(\overline P_i,
P_{-i}) = \varphi^{\sigma}_j(P)$.  Repeating this argument, we
conclude that $\varphi^{\sigma}(\overline P_i, P_{-i}) = \varphi^{\sigma}(P)$.\medskip

\noindent \textbf{Bounded invariance:}
  Let $P\in \mathcal P^N$, $i\in N$, $a\in A$ and $\overline P_i\in\mathcal
  P$ such
  that  $U(P_i, a)=U(\overline P_i,a)$ and $\overline P_i|_{U(P_i, a)} = P_i|_{U(P_i, a)}$, where
  $U(P_i,a) = \{b\in A: b\mathrel P_i a\}$ and $U(\overline P_i,a) = \{b\in A: b~\overline P_i~ a\}$.
  Let $\pi=\varphi^{\sigma}(P)$.  Let the history along the execution path
  at   $P$ be $\eta = (\pi^1, \dots, \pi^K)$ and at $(\overline P_i, P_{-i})$ be $\overline \eta=
 (\overline \pi^1, \dots, \overline \pi^L)$. If $i\in N(\pi^t)$, then, as argued above, for
 all $t' < t$, $\pi^{t'} = \overline \pi^{t'}$ and for each $j\in N(\pi^{t'})$, $\varphi_j^{\sigma}(\overline P_i, P_{-i}) = \varphi_j^{\sigma}(P)$.

Under the profile $P$, at the $t^{\text{th}}$ call to $\Psi$, if the residual
supply of $a$ is zero, then it is zero at $(\overline P_i, P_{-i})$ as well and
in both cases $a$ is allocated only to agents in $N(\pi^1, \dots
\pi^{t-1})$. As we have shown above, for each such $j$,  $\varphi_j^{\sigma}(\overline P_i, P_{-i}) = \varphi_j^{\sigma}(P)$.

If $\varphi_{ia}^{\sigma}(P) = 1$, then $a$ is the best object that
has positive availability at the $t^{\text{th}}$ call  according to $P_i$. Since the
residual supply is the same with input $(\overline P_i, P_{-i})$, and
since   $\overline P_i|_{U(P_i, a)} = P_i|_{U(P_i, a)}$, $a$ is also the best
object that has positive availability according to $\overline P_i$, so
$\varphi_{ia}^{\sigma}(\overline P_i, P_{-i}) = 1$.  By non-bossiness,
$\varphi^{\sigma}(\overline P_i, P_{-i}) = \varphi^{\sigma}(P)$.

If the residual supply of $a$ at the  $t^{\text{th}}$ call at $P$ is
positive, but   $\varphi_{ia}^{\sigma}(P) = 0$, then
$\supp(\varphi_i^{\sigma}(P)) \subseteq U(P_i, a)$. So, since
$\overline P_i|_{U(P_i, a)} = P_i|_{U(P_i, a)}$, $\varphi_i^{\sigma}(\overline P_i,
P_{-i}) = \varphi_i^{\sigma}(P)$ and we conclude by non-bossiness.

Finally, if the residual supply of $a$ at the  $t^{\text{th}}$ call at $P$ is
positive, and $0<\varphi_{ia}^{\sigma}(P) < 1$, then $a$ is either the best object or the second best object according to $P_i$ among those with positive residual supplies. If $a$ is the best object at~$P_i$, then by definition
of $\varphi^{\sigma}$, $a$ is allocated entirely to the agents in
$N(\pi^1, \dots, \pi^t)$, regardless of whether $a$ is allocated to a
monarchy or to a diarchy at the $t^{\text{th}}$ call at~$P$. Since $a$
is still $i$'s best object at the $t^{\text{th}}$ call
at~$\overline{P}_i$, the allocation of~$a$ is the same at both $P$ and
$(\overline P_i,P_{-i})$. If $a$ is the second best object at~$P_i$,
then there are two cases. If $a$'s supply is exhausted at
the $t^{\text{th}}$ call at~$P$,  then the same argument as above
applies. So, suppose that $a$'s supply is not exhausted at the
$t^\text{th}$ call at~$P$. Since $i$'s preference remains the same
down to~$a$ at~$\overline{P}_i$ as in $P_i$, the allocation
of~$t^{\text{th}}$ call coincides at $P$ and $(\overline P_i,
P_{-i})$. By definition of $\varphi^\sigma$, the subsequent selections
by~$\sigma$ do not change, keeping the allocation of~$a$ unchanged.
\end{proof}

\section{Independence of all but bounded invariance}
First, a \rul selecting the random serial dictatorship allocation for
each profile satisfies all the properties but \strongefficiency. Second,
suppose that each object is endowed a strict priority ordering over
agents. The (deterministic) Boston \rul (also known as the ``immediate acceptance
\rul'') under the profile of priorities satisfies all the properties but
\sdspness. Third, consider a \rul that (i)~selects the serial
dictatorship allocation with respect to 1-2-3-4-$\cdots$-$n$ if
agent~1's favorite object is~$a$ and (ii)~selects the serial
dictatorship allocation with respect to 1-3-2-4-$\cdots$-$n$,
otherwise. This \rul satisfies all the properties but neutrality. Fourth,
consider a \rul that assigns degenerate lotteries as follows: for each
profile, it allocates agent~1 their favorite object, say~$a_1$, with
probability~1; among the remaining objects, (i)~if agent~2's favorite
object is not~$a_1$, it \rul selects the serial dictatorship with
respect to 2-3-4-$\cdots$-$n$; (ii)~if agent~2's favorite object is
$a_1$, the it selects the serial dictatorship with respect to
2-4-3-$\cdots$-$n$. This \rul satisfies all the properties but
non-bossiness.

\setstretch{1.18}

\bibliography{refs.bib}

\end{document}